\newif\ifignore 
\newcommand{\auxproof}[1]{
\ifignore\mbox{}\newline
\textbf{PROOF:} \dotfill\newline
{\it #1}\mbox{}\newline
\textbf{ENDPROOF}\dotfill
\fi}
\newcommand{\ignore}[1]{}
\newcommand*{\fatten}[1][.4pt]{%
  \textpdfrender{
    TextRenderingMode=FillStroke,
    LineWidth={\dimexpr(#1)\relax},
  }%
}
  \DeclareMathAlphabet{\mathsl}{\encodingdefault}{\rmdefault}{\mddefault}{\sldefault}
  \SetMathAlphabet{\mathsl}{bold}{\encodingdefault}{\rmdefault}{\bfdefault}{\sldefault}
\newcommand{\mathoverlap}[2]{\mathpalette\mathoverlap@{{#1}{#2}}}
\newcommand{\mathoverlap@}[2]{\mathoverlap@@{#1}#2}
\newcommand{\mathoverlap@@}[3]{\ooalign{$\m@th#1#2$\crcr\hidewidth$\m@th#1#3$\hidewidth}}
\def\QEDbox{\def\sboxsep{.3em}\def\sdim{.1em}\raisebox{.3em}{\shabox{}}}
\def\QED{\hfill\QEDbox}
\newcommand{\idmap}[1][]{\ensuremath{\mathsl{id}_{#1}}}
\newcommand{\after}{\mathrel{\circ}}
\newcommand{\NNO}{\mathbb{N}}
\newcommand{\R}{\mathbb{R}}
\newcommand{\pR}{\mathbb{R}_{>0}}
\newcommand{\nnR}{\mathbb{R}_{\geq 0}}
\newcommand{\Mlt}{\ensuremath{\mathcal{M}}}
\newcommand{\natMlt}{\Mlt}
\newcommand{\Dst}{\ensuremath{\mathcal{D}}}
\newcommand{\Giry}{\ensuremath{\mathcal{G}}}
\newcommand{\Pow}{\ensuremath{\mathcal{P}}}
\newcommand{\supp}{\ensuremath{\mathsl{supp}}}
\newcommand{\Prob}{\ensuremath{\mathsl{P}}}
\newcommand{\no}[1]{#1^{\scriptscriptstyle \bot}} 
\newcommand{\DKL}{\ensuremath{\mathsl{D}_{\mathsl{KL}}}}
\newcommand{\Fact}{\ensuremath{\mathsl{Fact}}}
\newcommand{\Pred}{\ensuremath{\mathsl{Pred}}}
\newcommand{\Obs}{\ensuremath{\mathsl{Obs}}}
\newcommand{\concat}{\ensuremath{\mathbin{+{\kern-.5ex}+}}}
\newcommand{\one}{\ensuremath{\mathbf{1}}}
\newcommand{\zero}{\ensuremath{\mathbf{0}}}
\newcommand{\andthen}{\mathrel{\&}}
\newcommand{\bigandthen}{\mathop{\textnormal{\Large\&}}}
\newcommand{\expec}{\mathop{\mathbb{E}}}
\newcommand{\evidand}{\ensuremath{\mathop{\andthen}}}
\newcommand{\evidtensor}{\ensuremath{\mathop{\otimes}}}
\newcommand{\flrn}{\ensuremath{\mathsl{Flrn}}}
\newcommand{\indic}[1]{\mathbf{1}_{#1}}
\newcommand{\Pmodels}{\ensuremath{\mathbin{\smash{\models_{\hspace*{-1.2ex}{\raisebox{-0.4ex}{\tiny P}}}}\,}}}
\newcommand{\Jmodels}{\ensuremath{\mathbin{\smash{\models_{\hspace*{-1.2ex}{\raisebox{-0.4ex}{\tiny J}}}}\,}}}
\newcommand{\update}[1]{\ensuremath{\raisebox{-0.2ex}{$|$}\raisebox{-0.7ex}{$\scriptstyle#1$}}\xspace}
\newcommand{\JupdateSign}{\raisebox{-1.2ex}{$\scalebox{0.9}{\begin{tikzpicture}
\draw (1,0) -- (1,0.75);
\draw (1,0) arc (180:0:-0.08cm);
\end{tikzpicture}}$}}
\newcommand{\Jupdate}[1]{\ensuremath{{\kern-0.2ex}\JupdateSign{\kern0.2ex}\raisebox{-0.7ex}{$\scriptstyle#1$}}\xspace}
\newcommand{\PupdateSign}{\raisebox{-1.4ex}{$\scalebox{1.0}{\begin{tikzpicture}
\draw (1,0) -- (1,0.75);
\draw (1,0.67) arc (320:40:-0.07cm);
\end{tikzpicture}}$}}
\newcommand{\Pupdate}[1]{\ensuremath{{\kern0.3ex}\PupdateSign{\kern-0.2ex}\raisebox{-0.7ex}{$\scriptstyle#1$}}\xspace}
\newcommand{\FupdateSign}{\raisebox{-1.4ex}{$\scalebox{1.0}{\begin{tikzpicture}
\draw (1,0) -- (1,0.75);
\draw (1,0.70) -- (1.2,0.70);
\draw (1,0.6) -- (1.15,0.6);
\end{tikzpicture}}$}}
\newcommand{\Fupdate}[1]{\ensuremath{{\kern0.3ex}\FupdateSign{\kern-0.25ex}\raisebox{-0.7ex}{$\scriptstyle#1$}}\xspace}
\DeclareMathOperator*{\argmax}{argmax} 
\DeclareMathOperator*{\argmin}{argmin} 
\DeclareMathOperator*{\NormOp}{Norm} 
\newcommand{\acc}{\ensuremath{\mathsl{acc}}}
\newcommand{\bibinom}[2]{\left({\kern-.5ex}\binom{#1}{#2}{\kern-.5ex}\right)}
\newcommand{\mulnom}{\ensuremath{\mathsl{mn}}}
\newcommand{\multinomial}[1][]{\ensuremath{\mulnom[#1]}}
\newcommand{\tuple}[1]{\langle#1\rangle}
\newcommand{\setin}[3]{\{#1\in#2\;|\;#3\}}
\newcommand{\ketstrut}{\vrule height 8.5pt depth 4.5pt width 0pt}
\newcommand{\ket}[1]{\ensuremath{|{\kern.1em}#1{\kern.1em}\rangle}}
\newcommand{\bigket}[1]{\ensuremath{\big|{\kern.1em}#1{\kern.1em}\big\rangle}}
\newcommand{\Bigket}[1]{\ensuremath{\left|\ketstrut{\kern.1em}#1{\kern.005em}\right>}}
\newcommand{\coefm}[1]{\ensuremath{\fatten[0.6pt]{(}{\kern1pt}#1{\kern1pt}\fatten[0.6pt]{)}}}
\newcommand{\facto}[1]{\ensuremath{#1{\kern-2.5pt}\raisebox{-2.5pt}{\includegraphics[width=0.9em]{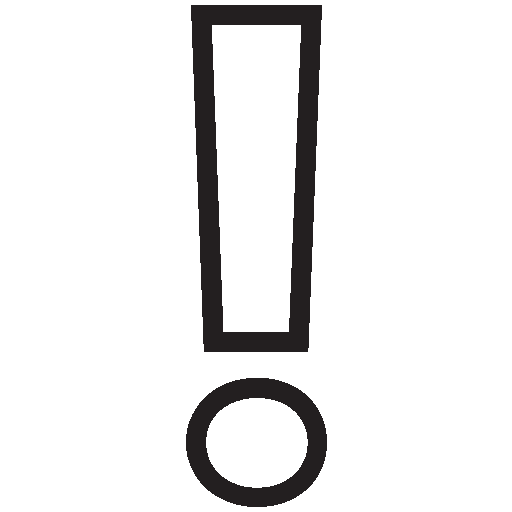}}{\kern-.2em}}}
\newcommand{\pull}{\mathrel{\mathchoice%
   {\scalebox{-0.5}[1]{$\gg=$}}
   {\scalebox{-0.5}[1]{$\gg{\kern-1.5ex}=$}}
   {\scalebox{-0.5}[1]{${\kern.5ex}\scriptstyle\gg{\kern-0.2ex}={\kern.5ex}$}}
   {\scalebox{-0.5}[1]{$\scriptscriptstyle\gg=$}}}}
\newcommand{\triplepull}{\mathrel{\mathchoice%
   {\scalebox{-0.5}[1]{$\ggg{\kern-1.5ex}=$}}
   {\scalebox{-0.5}[1]{$\ggg{\kern-1.5ex}=$}}
   {\scalebox{-0.5}[1]{${\kern.5ex}\scriptstyle\ggg{\kern-0.2ex}={\kern.5ex}$}}
   {\scalebox{-0.5}[1]{$\scriptscriptstyle\ggg=$}}}}
\newcommand{\push}{\mathrel{\mathchoice%
   {\scalebox{-0.5}[1]{$=\ll$}}
   {\scalebox{-0.5}[1]{$={\kern-1.5ex}\ll$}}
   {\scalebox{-0.5}[1]{${\kern.5ex}\scriptstyle={\kern-0.2ex}\ll{\kern.5ex}$}}
   {\scalebox{-0.5}[1]{$\scriptscriptstyle=\ll$}}}}
\DeclareSymbolFont{T1op}{T1}{cmr}{m}{n}
\DeclareMathSymbol{\mathguilsinglleft}{\mathopen}{T1op}{'016}
\DeclareMathSymbol{\mathguilsinglright}{\mathclose}{T1op}{'017}
\newsavebox\sbpto
\savebox\sbpto{\begin{tikzpicture}[baseline=-2.5pt]
            \filldraw[fill=white,draw=white] circle (1.4pt);
            \filldraw[fill=white,draw=black,line width=0.2pt]circle(2pt);
                \end{tikzpicture}}
\newcommand\chanto{\mathrel{\ooalign{$\to$\cr
      \hfil\!$\usebox\sbpto$\hfil\cr}}}            
\newcommand{\postest}{\mathsl{pt}}
\newcommand{\negtest}{\mathsl{nt}}
\newcommand{\threetest}{\mathsl{ppnt}}
\begin{document}

\begin{frontmatter}

\title{Getting Wiser from Multiple Data: \\ Probabilistic Updating
  according to Jeffrey and Pearl}

\author{Bart Jacobs}

\institute{iHub, Radboud University, Nijmegen, The Netherlands
\\ 
\email{bart@cs.ru.nl}
\\
\today
}

\maketitle

\begin{abstract} 
In probabilistic updating one transforms a prior distribution in the
light of given evidence into a posterior distribution, via what is
called conditioning, updating, belief revision or inference. This is
the essence of learning, as Bayesian updating. It will be illustrated
via a physical model involving (adapted) water flows through pipes
with different diameters.

Bayesian updating makes us wiser, in the sense that the posterior
distribution makes the evidence more likely than the prior, since it
incorporates the evidence.  Things are less clear when one wishes to
learn from multiple pieces of evidence / data. It turns out that there
are (at least) two forms of updating for this, associated with Jeffrey
and Pearl. The difference is not always clearly recognised.

This paper provides an introduction and an overview in the setting of
discrete probability theory. It starts from an elementary question,
involving multiple pieces of evidence, that has been sent to a small
group academic specialists. Their answers show considerable
differences. This is used as motivation and starting point to
introduce the two forms of updating, of Jeffrey and Pearl, for
multiple inputs and to elaborate their properties. In the end the
account is related to so-called variational free energy (VFE) update
in the cognitive theory of predictive processing. It is shown that
both Jeffrey and Pearl outperform VFE updating and that VFE updating
need not decrease divergence --- that is correct errors --- as it is
supposed to do.
\end{abstract}


\end{frontmatter}

\renewcommand{\thepage}{\arabic{page}}

\section{Introduction}\label{IntroSec}

Before introducing the topic of this paper, readers are kindly invited
to consider the following situation and answer the subsequent three
question themselves. This self-test will be useful for understanding
what this paper is about. Impatient readers are invited to answer at
least the first question.

\noindent\makebox[\linewidth]{\rule{\textwidth}{1pt}} 

Consider a disease, like Covid, say with a \textbf{prevalence} of
$5\%$. This means that the chance that an arbitrary person in the
population has the disease is $\frac{1}{20} = 0.05$. This is the
\textbf{prior} disease probability.

\quad There is a test for the disease that is not perfect, as usual.
\begin{itemize}
\item The \textbf{sensitivity} of the test is $90\%$; this means that
  if a person has the disease, the probability that the test is
  positive (for this person) is $\frac{9}{10} = 0.9$.

\item The \textbf{specificity} is $60\%$; this means that if a person
  does not have the disease, then the probability that the test is
  negative is $\frac{3}{5} = 0.6$.
\end{itemize}

\noindent In this situation the predicted positive test probability
is $\frac{17}{40} = 0.425$.


\subsection*{The questions}

In the situation described above \textbf{three tests} are performed,
out of which two turn out to be positive and one negative.

\quad Consider the following three questions.
\begin{enumerate}
\item What is the likelihood of this three-test outcome, with two
  positive, one negative?



\item Using this three-test outcome as evidence, what is the posterior
  (updated) disease probability?



\item This posterior disease probability is now taken as new prior.
  Again three tests are performed, and again two of them appear to be
  positive and one negative. How likely is that outcome --- with this
  new prior, and with the same test? This is the same calculation as
  in point~1, but now with a different prior.



\end{enumerate}

\noindent You may ask yourself the question if the probability outcome
of point~3 should be lower or higher than the outcome of point~1? (In
this example the differences are small, but significant.)

\noindent\makebox[\linewidth]{\rule{\textwidth}{1pt}} 

\smallskip

\noindent Please take your time.

This example with questions has been distributed among about a hundred
(local, academic) colleagues of the author, working in AI and
(medical) statistics. The questions came with the following
explanations\footnote{More explanations were provided, including a
clarification that the answers would be used for a publication (this
one) in an anonymous manner, without identifying individuals, but with
an indication of their area. Participants could also indicate that
they did not allow the usage of their answers, but no-one did so. The
full text of the question is available via Mastodon at:
\href{https://social.edu.nl/@bjacobs/112051242549499170}{https://social.edu.nl/@bjacobs/112051242549499170}.}.
\begin{quote}
If the above description appears unclear or incomplete to you --- for
instance whether or not all tests are applied to the same individual
--- please do not ask for clarification, but make explicit how you
choose to interprete or complete it.  This is relevant information in
itself. 
\end{quote}

\noindent This small survey was not meant as a systematic study, but
was intended to get an impression of how specialists address such a
situation, with multiple test results. Seventeen replies came back,
before a given deadline, which will be discussed later in
Subsection~\ref{ReplySubsec}. The short summary is that the first
question is anwered in a reasonably systematic manner, yielding two
groups of answers that are labelled as `Jeffrey' and `Pearl'
below. Things are less clear with the second question about
updating. Several people use the `Pearl' update, but no-one uses the
`Jeffrey' update. Some people use `Jeffrey' for the first question and
`Pearl' for the second.

The fact that there is limited consensus among academic specialists,
even in a small survey, serves as motivation to try and clarify the
situation.  At the same time this lack of consensus is worrying, since
the above question describes a simple scenario of learning from
multiple data that is common in the current AI revolution. Whether you
or I get a mortgage or a medical treatment may be determined by such
update mechanisms. What is the rationale to use which update
mechanism, and what are the consequences? These questions have a
certain scientific and societal urgency. How are we ever going to
explain AI-based decisions if we do not understand the rules that are
used and the guarantees that they provide?

Updating (or conditioning, belief revision or inference) is one of the
most fascinating topics of probability theory. It involves going from
prior belief, in the form of a probability distribution, to an updated
posterior belief, by incorporating evidence. This is the essence of
learning. This paper describes updating as an operation on probability
distributions: given a prior distribution $\omega$ and evidence $p$,
one can form a posterior distribtution written as $\omega\update{p}$
that incorporates $p$ into $\omega$. This updating with a single piece
of evidence $p$ is the standard form of Bayesian updating. Its general
mathematical formulation appears in Section~\ref{UpdateSec}.

Since this Bayesian updating $\omega \mapsto \omega\update{p}$ is such
an elementary operation we illustrate it via a simple physical model,
see Subsection~\ref{PhysicsSubsec}. A probability distribution is
represented by a pump that pushes water through parallel pipes with
different diameters. These pipes may then be (partially) blocked,
leading to adapted flows, corresponding to a Bayesian update of the
original (prior) distribution.

A key property of learning that runs through this paper is that
learning makes us wiser. We express this via validity $\models$, in
several ways.  In elementary form, validity is the expected value
$\omega\models p$ of evidence $p$ in a distribution $\omega$. A key
property of Bayesian updating is that the evidence $p$ is `more true'
in the posterior than in the prior. This takes the form of a
fundamental inequality $\big(\omega\update{p} \models p\big) \geq
\big(\omega\models p\big)$, see
Theorem~\ref{UpdateGainThm}~\eqref{UpdateGainThmSingle} below.

Bayesian updating is a well-established technique that is used here as
a building block. This paper is about how to learn from multiple
pieces of evidence $p_{i}$, like in the above example where we have
three test outcomes as evidence. We consider three forms of learning
from multiple evidence --- initially two, later one more. We briefly
illustrate them below, for the simple situation with only two pieces
of evidence $p_1$ and $p_2$.  The three forms of updating combine
Bayesian updating in different ways.
\begin{equation}
\label{UpdateShort}
\begin{array}{rcl}
\mbox{Jeffrey update:}
& \quad &
\frac{1}{2}\cdot\omega\update{p_1} + \frac{1}{2}\cdot\omega\update{p_2}
\\[+0.2em]
\mbox{Pearl update:}
& &
\omega\update{p_{1} \andthen p_{2}}
\;=\;
\omega\update{p_{1}}\update{p_{2}}
\\[+0.2em]
\mbox{VFE update:}
& &
\omega\update{p_{1}^{\nicefrac{1}{2}} \andthen p_{2}^{\nicefrac{1}{2}}}
\;=\;
\omega\update{p_{1}^{\nicefrac{1}{2}}}\update{p_{2}^{\nicefrac{1}{2}}}.
\end{array}
\end{equation}

\noindent This first form of Jeffrey updating uses a convex
combination of (independent) Bayesian updates with the separate pieces
of evidence. The second form of updating, associated with Pearl, does
a single (dependent) Bayesian update with a conjunction $p_{1}\andthen
p_{2}$ of the two pieces of evidence. This is the same as doing the
two updates consecutively. The variational free energy (VFE) update is
similar, but uses a conjunction of the square roots of the two
pieces. This third form of updating arises in computational cognition
theory and will be discussed only at the very end, in
Section~\ref{PredictiveCodingSec}. The emphasis lies on Jeffrey and
Pearl. We include the VFE case here for comparison.

Jeffrey's update rule goes back to~\cite{Jeffrey83} (see
also~\cite{Halpern03,Shafer81}) and is reformulated
in~\cite{Jacobs19c} into a form that is close to how it is used
here. Pearl's update rule, see~\cite{Pearl88,Pearl90}, captures what
happens in Bayesian networks when evidence arrives at a node via two
different branches and is combined via conjunction $\andthen$, see
also~\cite{JacobsZ21}.

The above short descriptions~\eqref{UpdateShort} are meant to give a
first impression. Details will appear below. Associated with the first
two update mechanism there are forms of validity, written as
$\Jmodels$ and $\Pmodels$, for Jeffrey and Pearl. Each form of update
increases the associated form of validity --- making us appropriately
wiser in each case. This getting wiser may fail if we mix up the
various forms of updating and validity, as will be demonstrated.
Prevention of such mix-ups is an important reason for distinguishing
these two forms of updating and validity.

The focus of this paper is on handling multiple pieces of evidence.
Interestingly, when we have only a single piece of evidence, all three
update rules (Jeffrey, Pearl and also VFE) coincide and give the same
outcome as Bayesian updating. Differences arise when there is more
than one piece of evidence. This multiplicity of evidence will be
formalised below in terms of multisets of predicates, as pieces of
evidence. We recall that a multiset is like a subset but can contain
multiple occurrences of elements. Indeed, in learning from data /
evidence, there may be multiple occurences of the same data / evidence
item.

This paper is organised as follows. It continues the introduction
below, as already announced, with a physical explanation of (Bayesian)
updating. Section~\ref{MedicalSolutionSec} answers the above medical
question by describing the approaches of Jeffrey and Pearl in a
concrete situation. It ends in Subsection~\ref{ReplySubsec} with an
account of the answers given by colleagues to the
question. Sections~\ref{MltDstSec} and~\ref{ValiditySec} start the
more thorough mathematical analysis of the situation, by providing
background information on multisets and distributions, and on
predicates and validity. This includes validity for evidence, both for
Jeffrey and Pearl. Next, Section~\ref{UpdateSec} looks at updating of
distributions, first in basic Bayesian form, for a single piece of
evidence, then for multiple pieces of evidence, using separate update
rules of Jeffrey and Pearl. For each of them various properties are
made explicit.  Section~\ref{ChannelSec} introduces the concept of
channel, as mathemtical formalisation of conditional probabilities in
generative models. These channels make it possible to handle more
general situations, with distributions and evidence given for
different underlying sets (hidden and observable).  It is shown that
in the running medical example there is a channel involved, capturing
the sensitivity and specificity of the test. Finally,
Section~\ref{PredictiveCodingSec} sketches the context of the
cognitive theory of predictive coding, where the VFE update rule
appears. Within this context the human mind is understood as an update
engine that constantly corrects errors, in the form of reduction of
so-called KL-divergence. Interestingly, Jeffrey's rule decreases
KL-divergence, but the VFE rule does not, in general. The VFE update
rule does increase Pearl validity, but not as much as Pearl updating.
The question remains what the right form of updating is in
probabilistic cognition theory. This paper provides input for further
clarification.

This paper explains updating within the setting of finite, discrete
probability theory. This setting is rich enough to explain and
illustrate the relevant phenonema. Definitions and results extend
easily to a more general continuous setting --- technically by
changing from the distribution monad $\Dst$ to the Giry monad $\Giry$,
see \textit{e.g.}~\cite{Jacobs18c,Panangaden09}. Indeed, underneath
the current work there is deeper level of analysis of probabilistic
systems in terms of category theory and string
diagrams~\cite{Fritz20,Jacobs21g}. We do not use those mathematical
formalisms here in order not to restrict the audience (size). These
new formalisms try to improve the language that is common in
probability theory. Indeed, we see the fact that there is such a wide
variety of answers to a basic question as a sign that the current
dominant language --- that is based on assigning probabilities to
events and does not treat distributions, predicates, updating and
channels as first class citizens --- is deficient.  This area could
use a new, more formal `logic'. This paper tries to contribute to such
a more precise approach.

\subsection{A physical model of Bayesian updating}\label{PhysicsSubsec}


Updating of probability distributions plays a central role in this
paper. That's why it is important to have a good understanding of what
is going on. Here, in this introduction, we introduce a physical model
in terms of (water) flows to provide an intuition. Mathematical
formalisations will appear in the course of the paper.

Consider the picture below of a pump with one input pipe at the bottom
and three output pipes at the top. The outgoing pipes have relative
diameters, as indicated at the top. Thus, half a liter per second
emerges from the left pipe, one third from the middle pipe, and one
sixth from the pipe on the right. This represents a probability
distribution $\omega$, on the right.
\[ \vcenter{\hbox{\begin{picture}(100,100)
\thicklines
\put(0, 35){\line(1, 0){38}}
\put(62, 35){\line(1, 0){38}}
\put(38, 15){\line(0, 1){20}}
\put(62, 15){\line(0, 1){20}}
\put(0, 35){\line(0, 1){25}}
\put(100, 35){\line(0, 1){25}}
\put(0,60){\line(1,0){12}}
\put(24,60){\line(1,0){25}}
\put(58,60){\line(1,0){25}}
\put(88,60){\line(1,0){12}}
\put(12,60){\line(0, 1){20}}
\put(24,60){\line(0, 1){20}}
\put(49,60){\line(0, 1){20}}
\put(58,60){\line(0, 1){20}}
\put(83,60){\line(0, 1){20}}
\put(88,60){\line(0, 1){20}}
\put(15,5){\mbox{1 liter per second}}
\put(48,25){$\uparrow$}
\put(40,45){\mbox{pump}}
\put(16,77){$\uparrow$}
\put(51,77){$\uparrow$}
\put(83,77){$\uparrow$}
\put(15.5,90){$\frac{1}{2}$}
\put(50.5,90){$\frac{1}{3}$}
\put(82.5,90){$\frac{1}{6}$}
\end{picture}}} 
\hspace*{3em}
\begin{array}{c}
\mbox{representing}
\\
\mbox{the distribution}
\end{array}
\hspace*{3em}
\begin{array}{rcl}
\omega
& = &
\frac{1}{2}\ket{L} + \frac{1}{3}\ket{M} + \frac{1}{6}\ket{R}.
\end{array} \]

\noindent The letters $L$, $M$, $R$ refer to the left, middle and
right pipe. We write these letters inside `kets' $\ket{-}$ in order to
separate them from the corresponding fragments $\frac{1}{2}$,
$\frac{1}{3}$ and $\frac{1}{6}$ of the throughput. These kets are just
syntactic sugar.

We now turn to conditioning / updating. New information arrives and is
added to the picture, namely that the middle pipe is blocked. We
assume that the pump keeps on operating and still realises the
throughput of one liter per second (with increased pressure). The
question becomes what the left and right flows become, indicated as
$f_L$ and $f_R$ below.
\[ \vcenter{\hbox{\begin{picture}(100,100)
\thicklines
\put(0, 35){\line(1, 0){38}}
\put(62, 35){\line(1, 0){38}}
\put(38, 15){\line(0, 1){20}}
\put(62, 15){\line(0, 1){20}}
\put(0, 35){\line(0, 1){25}}
\put(100, 35){\line(0, 1){25}}
\put(0,60){\line(1,0){12}}
\put(24,60){\line(1,0){25}}
\put(58,60){\line(1,0){25}}
\put(88,60){\line(1,0){12}}
\put(12,60){\line(0, 1){20}}
\put(24,60){\line(0, 1){20}}
\put(49,60){\line(0, 1){20}}
\put(58,60){\line(0, 1){20}}
\put(83,60){\line(0, 1){20}}
\put(88,60){\line(0, 1){20}}
\put(15,5){\mbox{1 liter per second}}
\put(48,25){$\uparrow$}
\put(40,45){\mbox{pump}}
\put(16,77){$\uparrow$}
\put(83,77){$\uparrow$}
\put(43,80){\line(1, 0){20}}
\put(13,90){$f_L$}
\put(80,90){$f_R$}
\end{picture}}} \]

\noindent After a moment's thought we see that the left pipe (with
diameter $\frac{1}{2}$) is three times wider than the right pipe (with
diameter $\frac{1}{6}$). Hence the new outflows will be in
relationship $3:1$. Since the total throughput is (still) one liter
per second, the new left flow $f_{L}$ will be $\frac{3}{4}$ and the
right flow $f_{R}$ will be $\frac{1}{4}$.

In a more systematic approach one computes the new flows $f_{L}$ and
$f_{R}$ via (re)norma\-lisation. The combined diameters are
$\frac{1}{2} + \frac{1}{6} = \frac{2}{3}$. This gives:
\[ \begin{array}{lcl}
\mbox{updated left flow:}
& \quad &
f_{L}
\hspace*{\arraycolsep}=\hspace*{\arraycolsep}
\frac{\nicefrac{1}{2}}{\nicefrac{2}{3}}
\hspace*{\arraycolsep}=\hspace*{\arraycolsep}
\frac{3}{4}
\\[+0.2em]
\mbox{updated right flow:}
& &
f_{R}
\hspace*{\arraycolsep}=\hspace*{\arraycolsep}
\frac{\nicefrac{1}{6}}{\nicefrac{2}{3}}
\hspace*{\arraycolsep}=\hspace*{\arraycolsep}
\frac{1}{4}.
\end{array} \]

\noindent In ket notation, the new updated distribution $\omega'$ can
be written as $\omega' = \frac{3}{4}\ket{L} + \frac{1}{4}\ket{R} =
\frac{3}{4}\ket{L} + 0\ket{M} + \frac{1}{4}\ket{R}$.

We make the situation a bit more interesting, by not fully closing one
pipe, but by closing all of them partially via three taps, as sketched
in:
\[ \vcenter{\hbox{\begin{picture}(100,100)
\thicklines
\put(0, 35){\line(1, 0){38}}
\put(62, 35){\line(1, 0){38}}
\put(38, 15){\line(0, 1){20}}
\put(62, 15){\line(0, 1){20}}
\put(0, 35){\line(0, 1){25}}
\put(100, 35){\line(0, 1){25}}
\put(0,60){\line(1,0){12}}
\put(24,60){\line(1,0){25}}
\put(58,60){\line(1,0){25}}
\put(88,60){\line(1,0){12}}
\put(12,60){\line(0, 1){20}}
\put(24,60){\line(0, 1){20}}
\put(49,60){\line(0, 1){20}}
\put(58,60){\line(0, 1){20}}
\put(83,60){\line(0, 1){20}}
\put(88,60){\line(0, 1){20}}
\put(15,5){\mbox{1 liter per second}}
\put(48,25){$\uparrow$}
\put(40,45){\mbox{pump}}
\put(16,77){$\uparrow$}
\put(51,77){$\uparrow$}
\put(83,77){$\uparrow$}
\put(13,90){$f_L$}
\put(48,90){$f_M$}
\put(80,90){$f_R$}
\put(6,70){\line(1, 0){10}}
\put(4, 70){\oval(4, 6)}
\put(-6,66){$\frac{2}{3}$}
\put(43,70){\line(1, 0){12}}
\put(41, 70){\oval(4, 6)}
\put(32,66){$\frac{1}{3}$}
\put(79,70){\line(1, 0){7}}
\put(77, 70){\oval(4, 6)}
\put(68,66){$\frac{1}{2}$}
\end{picture}}} \]

\noindent The fractions written on the left of these taps indicate
the fraction of openness. We still assume that the pump realises the
same throughput of one liter per second\footnote{When the taps leave
only small openings the pressure must rise considerably, turbulences
may arise, and the pump may break. We ignore such physical effects and
limitations.} and ask what the new output flows $f_{L}, f_{M}$ and
$f_R$ are in this new case.

The normalisation now has to take both the original diameters and the
openness of the taps into account, resulting in a normalisation
factor:
\[ \begin{array}{rcl}
\frac{1}{2}\cdot\frac{2}{3} + \frac{1}{3}\cdot\frac{1}{3} + 
   \frac{1}{6}\cdot\frac{1}{2}
& = &
\frac{19}{36}.
\end{array} \]

\noindent The second conditioning $\omega''$ of the original distribution
$\omega$ thus takes the form:
\[ \begin{array}{rcccl}
\omega''
& = &
\displaystyle
\frac{\nicefrac{1}{2}\cdot\nicefrac{2}{3}}{\nicefrac{19}{36}}\ket{L} + 
   \frac{\nicefrac{1}{3}\cdot\nicefrac{1}{3}}{\nicefrac{19}{36}}\ket{M} + 
   \frac{\nicefrac{1}{6}\cdot\nicefrac{1}{2}}{\nicefrac{19}{36}}\ket{R}
& = &
\frac{12}{19}\ket{L} + \frac{4}{19}\ket{M} + \frac{3}{19}\ket{R}.
\end{array} \]

\noindent Thus, in the above picture we get $f_{L} = \frac{12}{19}$,
$f_{M} = \frac{4}{19}$ and $f_{R} = \frac{3}{19}$.

Conditioning will be described more abstractly below.  The first form,
with the entire closure of a pipe, leading to the `posterior'
distribution $\omega'$, is an instance of updating with a \emph{sharp}
predicate.  The second form, leading to posterior $\omega''$, is
updating with a \emph{fuzzy} or \emph{soft} predicate, see
Example~\ref{PhysicsEx} for details.

\section{Two solutions for the medical question}\label{MedicalSolutionSec}

This section introduces basic terminology, notation, and results for
distributions, predicates, validity, and updating (conditioning) in
order to answer the medical test challenge from the introduction in an
informal but systematic manner. A more general treatment follows in
subsequent sections.

Consider an urn containing 10~balls, with 5~red ($R$), 2~blue ($B$)
and 3~green ($G$). There is an associated probability distribution
that gives the chance of drawing at random a ball of a particular
colour. We write this distribution as $\frac{1}{2}\ket{R} +
\frac{1}{5}\ket{B} + \frac{3}{10}\ket{G}$. It uses ket notation
$\ket{-}$, borrowed from quantum physics, that we also used for the
pipes in Subsection~\ref{PhysicsSubsec}. The number $\frac{1}{2} =
\frac{5}{10}$ written before $\ket{R}$ is the probability of drawing a
red ball, corresponding to the fraction of red balls in the urn.
Similarly for $B$ and $G$.

In general, a (discrete finite probability) distribution over a set
$X$ is given by an expression of the form $r_{1}\ket{x_1} + \cdots +
r_{n}\ket{x_n}$ where $x_{1}, \ldots, x_{n}$ are elements from the set
$X$, and $r_{1}, \ldots, r_{n}$ are probabilities from the unit
interval $[0,1]$ satisfying $r_{1} + \cdots + r_{n} = 1$.

A predicate on a set $X$ is a function $p\colon X \rightarrow [0,1]$.
For a distribution $\omega$ over $X$ we write $\omega\models p$ for
the validity (or expected value) of the predicate $p$ in $\omega$.
Explicitly, we define this validity as the following number in
$[0,1]$.
\begin{equation}
\label{ValidityKetEqn}
\begin{array}{rcl}
r_{1}\ket{x_1} + \cdots + r_{n}\ket{x_n} \models p
& \;\coloneqq\; &
r_{1}\cdot p(x_{1}) + \cdots + r_{n}\cdot p(x_{n}).
\end{array}
\end{equation}

\noindent This validity $\omega\models p$ can also be written as
$\displaystyle\expec_{x\sim\omega} p(x)$.

If we have two predicates $p,q\colon X \rightarrow [0,1]$ we write
$p\andthen q \colon X \rightarrow [0,1]$ for their conjunction,
defined as pointwise multiplication: $\big(p \andthen q\big)(x)
\coloneqq p(x)\cdot q(x)$, for $x\in X$.

\subsection{Question 1}\label{QuestionOneSubsec}

At this stage we can already start our analysis of the medical test
scenario from the introduction. Recall the disease prevalence of
$5\%$. This is captured in a (prior) disease distribution $\omega$. In
ket notation we write it as:
\[ \begin{array}{rcl}
\omega
& \coloneqq &
\frac{1}{20}\ket{d} + \frac{19}{20}\ket{\no{d}},
\qquad\mbox{a distribution over the set }D = \{d,\no{d}\}.
\end{array} \]

\noindent The symbols $d$ and $\no{d}$ are used for `disease' and
`no-disease'. In this expression the presence of the disease $d$ gets
probability $\frac{1}{20}$ and the absence of the disease $\no{d}$
gets probability $\frac{19}{20}$.

Consider next the two predicates $\postest, \negtest \colon D
\rightarrow [0,1]$, for `positive test' and for `negative
test'. Following the descriptions of the sensitivity (of
$\frac{9}{10}$) and specificity (of $\frac{3}{5}$) of the test they
satisfy:
\[ \begin{array}{rclclcrclcl}
\postest(d)
& = &
\frac{9}{10} 
& & & \qquad\qquad &
\negtest(d)
& = &
1 - \frac{9}{10} 
& = & 
\frac{1}{10} 
\\
\postest(\no{d})
& = &
1 - \frac{3}{5}
& = &
\frac{2}{5}
& &
\negtest(\no{d})
& = &
\frac{3}{5}.
& &
\end{array} \]

\noindent We can compute the probability of a positive test in the
prior distribution $\omega = \frac{1}{20}\ket{d} +
\frac{19}{20}\ket{\no{d}}$ as:
\[ \begin{array}{rcccccccl}
\omega \models \postest
& \,=\, &
\frac{1}{20}\cdot \postest(d) + \frac{19}{20}\cdot \postest(\no{d})
& = &
\frac{1}{20}\cdot \frac{9}{10} + \frac{19}{20}\cdot \frac{2}{5}
& = &
\frac{85}{200}
& = &
\frac{17}{40}.
\end{array} \]

\noindent Similarly, the probability of a negative test is:
\[ \begin{array}{rcccccccl}
\omega \models \negtest
& \,=\, &
\frac{1}{20}\cdot \negtest(d) + \frac{19}{20}\cdot \negtest(\no{d})
& = &
\frac{1}{20}\cdot \frac{1}{10} + \frac{19}{20}\cdot \frac{3}{5}
& = &
\frac{115}{200}
& = &
\frac{23}{40}.
\end{array} \]

We can now address the first question: what is the likelihood of this
three-test outcome, with two positive, one negative? There are two
reasonable approaches, associated with the researchers Jeffrey and
Pearl.
\begin{description}
\item[Jeffrey] We have just calculated the probabilities of a single
  positive test and of a single negative test as two validities
  $\frac{17}{40}$ and $\frac{23}{40}$. The Jeffrey-style likelihood of
  the three tests, two positive one negative, is obtained by
  multiplying these validities, as in:
\begin{equation}
\label{JeffreyMedPriorValidity}
\begin{array}{rcl}
3\cdot \big(\omega \models \postest\big) \cdot
  \big(\omega \models \postest\big) \cdot
  \big(\omega \models \negtest\big)
& = &
3\cdot\frac{17}{40} \cdot \frac{17}{40} \cdot \frac{23}{40}
\\[+0.2em]
& = &
19941/64000  
\hspace*{\arraycolsep}\approx\hspace*{\arraycolsep}
0.3116.
\end{array}
\end{equation}

\noindent The multiplication factor $3$ is needed to accomodate the
three possible orders of the test outcomes, pos-pos-neg, pos-neg-pos,
neg-pos-pos, each with the same probability. One may recognise in this
expression the binomial probability for two-out-of-three, with
parameter $\big(\omega\models\postest\big) = \frac{17}{40}$.


\item[Pearl] Instead of multiplying these validities, one can multiply
  the predicates by taking their conjunction, as in $\threetest
  \coloneqq \postest \andthen \postest \andthen \negtest$. This works
  pointwise, so that:
\[ \begin{array}{rcl}
\threetest(d)
& = &
\postest(d) \cdot \postest(d) \cdot \negtest(d)
\hspace*{\arraycolsep}=\hspace*{\arraycolsep}
\frac{9}{10} \cdot \frac{9}{10} \cdot \frac{1}{10}
\hspace*{\arraycolsep}=\hspace*{\arraycolsep}
\frac{81}{1000}
\\[+0.2em]
\threetest(\no{d})
& = &
\postest(\no{d}) \cdot \postest(\no{d}) \cdot \negtest(\no{d})
\hspace*{\arraycolsep}=\hspace*{\arraycolsep}
\frac{2}{5} \cdot \frac{2}{5} \cdot \frac{3}{5}
\hspace*{\arraycolsep}=\hspace*{\arraycolsep}
\frac{12}{125}.
\end{array} \]

\noindent The Pearl-style answer to the first question is the validity
of this conjunction of predicates:
\begin{equation}
\label{PearlMedPriorValidity}
\begin{array}{rcl}
3\cdot\big(\omega \models \postest \andthen \postest \andthen \negtest\big)
\hspace*{\arraycolsep}=\hspace*{\arraycolsep}
3\cdot\big(\omega \models \threetest\big)
& = &
3\cdot\Big(\frac{1}{20}\cdot\frac{81}{1000} + 
   \frac{19}{20} \cdot \frac{12}{125}\Big)
\\
& = &
\frac{1143}{4000}
\hspace*{\arraycolsep}\approx\hspace*{\arraycolsep}
0.2858
\end{array}
\end{equation}


\noindent Again, the factor $3$ is used to deal with the three
possible orders in the conjunction. The Pearl outcome differs slightly
from Jeffrey's outcome~\eqref{JeffreyMedPriorValidity}. In general,
the differences may be substantial, see the illustrations before
Remark~\ref{VarianceRem} below.
\end{description}

Using more traditional expectation notation $\expec$, with sampling
$x\sim\omega$, one can describe the Jeffrey likelihood as on the left
below, and the Pearl likelihood as on the right.
\[ 3\cdot\Big(\expec_{x\sim\omega} \postest(x)\Big)^{2} \cdot
   \Big(\expec_{x\sim\omega} \negtest(x)\Big)
\hspace*{6em}
\displaystyle3\cdot\expec_{x\sim\omega} 
   \Big(\postest(x)^{2}\cdot\negtest(x)\Big). \]

\noindent In the Jeffrey case one evaluates each of the three tests
separately / independently, as if for a new person from the
population. In the Pearl cases one first combines the three tests via
conjunction, and then evaluates the result. This corresponds to
applying the tests to the same individual. In light of this
interpretation we associate (in this medical setting) Jeffrey's method
with an epidemiological approach, and Pearl's method with a clinical
approach.

\subsection{Question 2}\label{QuestionTwoSubsec}

The second question involves updating of distributions. We describe it
here as incorporating evidence into a distribution. The evidence has
the form of a predicate. It corresponds to the taps on the outgoing
pipes in Subsection~\ref{PhysicsSubsec}.

Thus, for a general distribution $\omega = r_{1}\ket{x_1} + \cdots +
r_{n}\ket{x_n}$ over a set $X$ and for a predicate $p\colon X
\rightarrow [0,1]$, we introduce a new, updated distribution
$\omega\update{p}$.
\begin{equation}
\label{UpdateKetEqn}
\begin{array}{rcl}
\omega\update{p}
& \;\coloneqq\; &
\displaystyle\frac{r_{1}\cdot p(x_{1})}{\omega\models p}\,\bigket{x_1} 
   + \cdots + \frac{r_{n}\cdot p(x_{n})}{\omega\models p}\,\bigket{x_n}.
\end{array}
\end{equation}

\noindent This updating only works if the validity $\omega\models p$
is non-zero.

We can now update the prior $\omega$ with the positive-test predicate
$\postest$. This gives, according to the above formula:
\begin{equation}
\label{PostestUpdateEqn}
\begin{array}{rcl}
\omega\update{\postest}
& = &
\displaystyle
   \frac{\nicefrac{1}{20}\cdot \postest(d)}{\omega\models\postest}\,\bigket{d}
   +
   \frac{\nicefrac{19}{20}\cdot \postest(\no{d})}{\omega\models\postest}
   \,\bigket{\no{d}}
\\[+1em]
& = &
\displaystyle
   \frac{\nicefrac{1}{20}\cdot\nicefrac{9}{10}}{\nicefrac{17}{40}}\,\bigket{d}
   +
   \frac{\nicefrac{19}{20}\cdot\nicefrac{2}{5}}{\nicefrac{17}{40}}
   \,\bigket{\no{d}}
\hspace*{\arraycolsep}=\hspace*{\arraycolsep}\textstyle
\frac{9}{85}\,\bigket{d} + \frac{76}{85}\,\bigket{\no{d}}.
\end{array}
\end{equation}

\noindent In a similar way one obtains as update of the prior with a
negative test:
\begin{equation}
\label{NegtestUpdateEqn}
\begin{array}{rcl}
\omega\update{\negtest}
& = &
\frac{1}{115}\,\bigket{d} + \frac{114}{115}\,\bigket{\no{d}}.
\end{array}
\end{equation}

\noindent We see, as expected, that updating with a positive test
yields a higher disease probability in~\eqref{PostestUpdateEqn}, than
the $5\%$ of the prior, and that updating with a negative test yields
a lower disease probability, in~\eqref{NegtestUpdateEqn}.

\auxproof{
\[ \begin{array}{rcl}
\omega\update{\negtest}
& = &
\displaystyle
   \frac{\nicefrac{1}{20}\cdot \negtest(d)}{\omega\models\negtest}\,\bigket{d}
   +
   \frac{\nicefrac{19}{20}\cdot \negtest(\no{d})}{\omega\models\negtest}
   \,\bigket{\no{d}}
\\[+1em]
& = &
\displaystyle
   \frac{\nicefrac{1}{20}\cdot\nicefrac{1}{10}}{\nicefrac{23}{40}}\,\bigket{d}
   +
   \frac{\nicefrac{19}{20}\cdot\nicefrac{3}{5}}{\nicefrac{23}{40}}
   \,\bigket{\no{d}}
\hspace*{\arraycolsep}=\hspace*{\arraycolsep}\textstyle
\frac{1}{115}\,\bigket{d} + \frac{114}{115}\,\bigket{\no{d}}.
\end{array} \]
}

We have seen that there are two ways to answer the first question
about likelihoods of tests. There are also two corresponding ways to
perform updating.
\begin{description}
\item[Jeffrey] The posterior $\omega_J$, given the evidence of two
  positive and one negative test, in Jeffrey-style is a weighted
  mixture, reflecting two-out-three positive plus one-out-of-three
  negative:
\begin{equation}
\label{JeffreyMedUpdate}
\begin{array}{rcl}
\omega_{J}
\hspace*{\arraycolsep}\coloneqq\hspace*{\arraycolsep}
\frac{2}{3} \cdot \omega\update{\postest} +  \frac{1}{3} \cdot \omega\update{\negtest}
& = &
\frac{431}{5865}\bigket{d} + \frac{5434}{5865}\bigket{\no{d}}
\\[0.2em]
& \approx &
0.073\bigket{d} + 0.927\bigket{\no{d}}.
\end{array}
\end{equation}

\noindent This Jeffrey-posterior $\omega_{J}$ has a slightly higher
disease probability of $7.3\%$ than the prior ($5\%$).

\item[Pearl] Recall the conjunction $\threetest = \postest \andthen
  \postest \andthen \negtest$. In Pearl-style updating one uses this
  predicate as evidence for the posterior:
\begin{equation}
\label{PearlMedUpdate}
\begin{array}{rcl}
\omega_{P}
\hspace*{\arraycolsep}\coloneqq\hspace*{\arraycolsep}
\omega\update{\postest \andthen \postest \andthen \negtest}
\hspace*{\arraycolsep}=\hspace*{\arraycolsep}
\omega\update{\threetest}
& = &
\displaystyle
   \frac{\nicefrac{1}{20}\cdot \nicefrac{81}{1000}}{\nicefrac{381}{4000}}
   \,\bigket{d}
   +
   \frac{\nicefrac{19}{20}\cdot\nicefrac{12}{125}}{\nicefrac{381}{4000}}
   \,\bigket{\no{d}}
\\[0.8em]
& = &
\frac{27}{635}\bigket{d} + \frac{608}{635}\bigket{\no{d}}
\\[0.2em]
& \approx &
0.043\bigket{d} + 0.957\bigket{\no{d}}.
\end{array}
\end{equation}

\noindent The Pearl-posterior $\omega_{P}$ has a lower disease
probability than the prior. As we shall see later, this Pearl update
$\omega\update{\threetest}$ can equivalently be computed as three
successive updates
$\omega\update{\postest}\update{\postest}\update{\negtest}$. The order
of the separate updates is irrelevant.
\end{description}

\begin{figure}
\[ \hspace*{-3em}\vcenter{\hbox{\begin{picture}(150,100)
\thicklines
\put(0, 35){\line(1, 0){57}}
\put(93, 35){\line(1, 0){57}}
\put(57, 15){\line(0, 1){20}}
\put(93, 15){\line(0, 1){20}}
\put(0, 35){\line(0, 1){25}}
\put(150, 35){\line(0, 1){25}}
\put(0,60){\line(1,0){25}}
\put(33,60){\line(1,0){60}}
\put(125,60){\line(1,0){25}}
\put(25,60){\line(0, 1){20}}
\put(33,60){\line(0, 1){20}}
\put(93,60){\line(0, 1){20}}
\put(125,60){\line(0, 1){20}}
\put(40,5){\mbox{1 liter per second}}
\put(72,25){$\uparrow$}
\put(25,45){\mbox{prior $\omega = \frac{1}{20}\ket{d} + \frac{19}{20}\ket{\no{d}}$}}
\put(26.5,77){$\uparrow$}
\put(106,77){$\uparrow$}
\put(24,90){$\frac{1}{20}$}
\put(103,90){$\frac{19}{20}$}
\end{picture}}}
\hspace*{4em}
\vcenter{\hbox{\begin{picture}(100,130)
\thicklines
\put(0, 35){\line(1, 0){57}}
\put(93, 35){\line(1, 0){57}}
\put(57, 15){\line(0, 1){20}}
\put(93, 15){\line(0, 1){20}}
\put(0, 35){\line(0, 1){25}}
\put(150, 35){\line(0, 1){25}}
\put(0,60){\line(1,0){25}}
\put(33,60){\line(1,0){60}}
\put(125,60){\line(1,0){25}}
\put(25,60){\line(0, 1){10}}
\put(27,70){\line(0, 1){15}}
\put(29,85){\line(0, 1){15}}
\put(31,100){\line(0, 1){10}}
\put(33,60){\line(0, 1){50}}
\put(93,60){\line(0, 1){10}}
\put(111,70){\line(0, 1){15}}
\put(120,85){\line(0, 1){15}}
\put(122,100){\line(0, 1){10}}
\put(125,60){\line(0, 1){50}}
\put(40,5){\mbox{1 liter per second}}
\put(72,25){$\uparrow$}
\put(19,70){\line(1, 0){8}}
\put(17, 70){\oval(4, 6)}
\put(-3,66){$\postest$}
\put(87,70){\line(1, 0){24}}
\put(85, 70){\oval(4, 6)}
\put(21,85){\line(1, 0){8}}
\put(19, 85){\oval(4, 6)}
\put(-3, 81){$\postest$}
\put(105,85){\line(1, 0){15}}
\put(103, 85){\oval(4, 6)}
\put(23,100){\line(1, 0){8}}
\put(21, 100){\oval(4, 6)}
\put(-3, 96){$\negtest$}
\put(114, 100){\line(1, 0){8}}
\put(112, 100){\oval(4, 6)}
\put(29.5,110){$\uparrow$}
\put(121,110){$\uparrow$}
\put(25,123){$\frac{27}{635}$}
\put(117,123){$\frac{608}{635}$}
\end{picture}}} \]
\\
\[ \hspace*{-3em}\vcenter{\hbox{\begin{picture}(300,165)
\thicklines
\put(0, 35){\line(1, 0){57}}
\put(93, 35){\line(1, 0){57}}
\put(57, 15){\line(0, 1){20}}
\put(93, 15){\line(0, 1){20}}
\put(0, 35){\line(0, 1){25}}
\put(150, 35){\line(0, 1){25}}
\put(0,60){\line(1,0){25}}
\put(33,60){\line(1,0){60}}
\put(125,60){\line(1,0){25}}
\put(25,60){\line(0, 1){10}}
\put(27,70){\line(0, 1){21}}
\put(33,60){\line(0, 1){25}}
\put(93,60){\line(0, 1){10}}
\put(111,70){\line(0, 1){20}}
\put(125,60){\line(0, 1){30}}
\put(111,92){\line(0, 1){24}}
\put(125,92){\line(0, 1){10}}
\put(19,70){\line(1, 0){8}}
\put(17, 70){\oval(4, 6)}
\put(-3,66){$\postest$}
\put(87,70){\line(1, 0){24}}
\put(85, 70){\oval(4, 6)}
\put(40,5){\mbox{$\frac{2}{3}$ liter per second}}
\put(72,25){$\uparrow$}
\put(60.5,142){$\uparrow$}
\put(240,142){$\uparrow$}
\put(54,155){$\frac{431}{5865}$}
\put(233,155){$\frac{5434}{5865}$}
\put(180, 35){\line(1, 0){57}}
\put(273, 35){\line(1, 0){57}}
\put(237, 15){\line(0, 1){20}}
\put(273, 15){\line(0, 1){20}}
\put(180, 35){\line(0, 1){25}}
\put(330, 35){\line(0, 1){25}}
\put(180,60){\line(1,0){25}}
\put(213,60){\line(1,0){60}}
\put(305,60){\line(1,0){25}}
\put(205,60){\line(0, 1){10}}
\put(211,70){\line(0, 1){15}}
\put(213,60){\line(0, 1){27}}
\put(273,60){\line(0, 1){10}}
\put(287,70){\line(0, 1){32}}
\put(305,60){\line(0, 1){60}}
\put(199,70){\line(1, 0){12}}
\put(197, 70){\oval(4, 6)}
\put(177,66){$\negtest$}
\put(267,70){\line(1, 0){20}}
\put(265, 70){\oval(4, 6)}
\put(220,5){\mbox{$\frac{1}{3}$ liter per second}}
\put(252,25){$\uparrow$}
\put(38, 85){\oval(10, 10)[tl]}
\put(32, 91){\oval(10, 10)[tl]}
\put(32,96){\line(1,0){25}}
\put(36,90){\line(1,0){25}}
\put(206, 85){\oval(10, 10)[tr]}
\put(208, 87){\oval(10, 10)[tr]}
\put(206,90){\line(-1,0){136}}
\put(208,92){\line(-1,0){136}}
\put(60, 95){\oval(10, 10)[br]}
\put(54, 101){\oval(10, 10)[br]}
\put(70, 95){\oval(10, 10)[bl]}
\put(72, 97){\oval(10, 10)[bl]}
\put(59,100){\line(0,1){45}}
\put(67,95){\line(0,1){50}}
\put(116, 116){\oval(10, 10)[tl]}
\put(130, 102){\oval(10, 10)[tl]}
\put(130, 107){\line(1,0){105}}
\put(116, 121){\line(1,0){105}}
\put(282, 102){\oval(10, 10)[tr]}
\put(300, 120){\oval(10, 10)[tr]}
\put(282, 107){\line(-1,0){37}}
\put(300, 125){\line(-1,0){37}}
\put(235, 112){\oval(10, 10)[br]}
\put(221, 126){\oval(10, 10)[br]}
\put(245, 112){\oval(10, 10)[bl]}
\put(263, 130){\oval(10, 10)[bl]}
\put(226, 124){\line(0,1){21}}
\put(258, 128){\line(0,1){17}}
\end{picture}}} \]
\caption{Posterior, updated disease distributions as water flows
  regulated by taps, in the style of
  Subsection~\ref{PhysicsSubsec}. The prior is on the top left, and
  the Pearl update is on the top right. It involves successive
  conditionings, via successive taps, corresponding to the conjunction
  of predicates $\postest \andthen \postest \andthen \negtest$ used
  for updating~\eqref{PearlMedUpdate}. The Jeffrey update, as a convex
  combination of conditionings, is at the bottom. The incoming flow of
  one liter per second is divided, in a convex sum of $\frac{2}{3}$
  and $\frac{1}{3}$ liter over two pumps. The positive test tap
  setting is applied to the left pump and the negative test tap
  setting is used on the right. The resulting outgoing flows are
  combined in merged pipes, corresponding to the convex
  sum~\eqref{JeffreyMedUpdate}. The diameters of the various pipes are
  not precise and only give an indication.}
\label{FlowFig}
\end{figure}
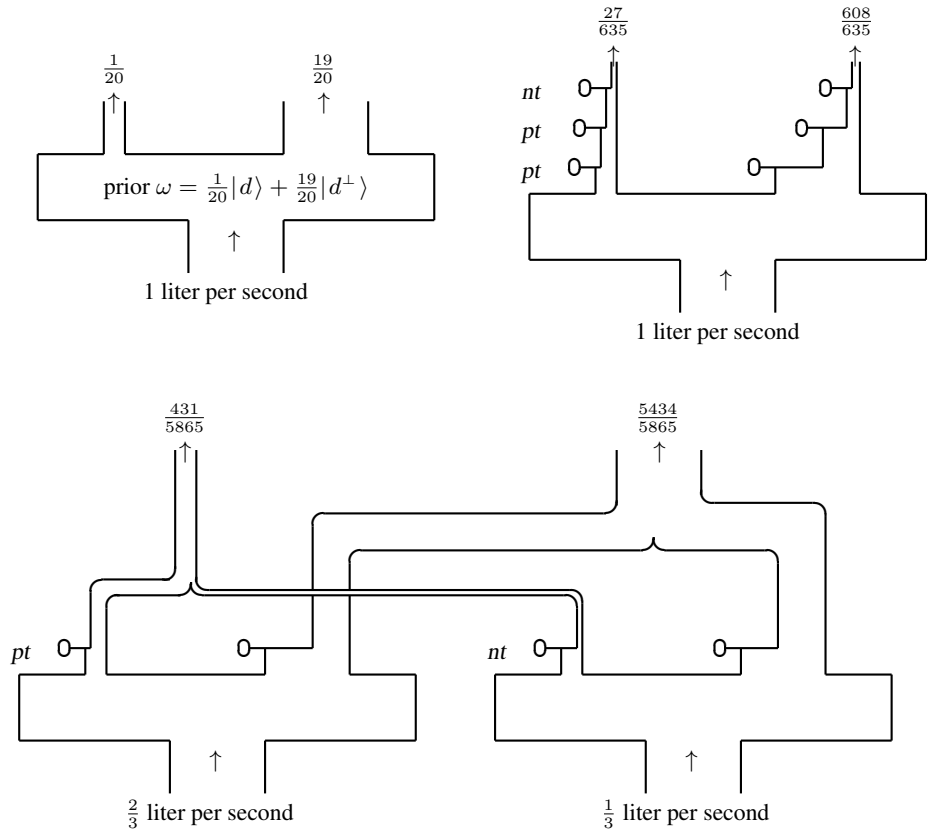

The Jeffrey and Pearl updates are represented in Figure~\ref{FlowFig}
as adapted flows, in the style of Subsection~\ref{PhysicsSubsec}. In
the remainder of this paper we shall work with the mathematical
formulations of conditioning and no longer uses such water flow
illustrations. In principle, these flows can be used to describe more
general probabilistic constructions, like product distributions and
marginalisation (see Section~\ref{ValiditySec} below). One could even
try to build such physical models for Bayesian networks, in which
evidence at some point in the network, for instance in the form of
blocking outgoing flows, propagates through the network and influences
other flows, according to the laws of forward and backward
reasoning~\cite{JacobsZ21}. One then has to reason with both water
pressure and flows to form a physical model of string
diagrams~\cite{Fritz20,ChoJ19,Selinger11}. These string diagrams form
an intuitive formalism that is increasingly used in quantum
foundations~\cite{CoeckeK16} and probability theory.

Which update outcome makes most sense? Let's reason informally. We
have seen in the formulation of the question that the predicted
positive test probability is $\frac{17}{40} = 0.425$. Thus one expects
roughly~$2$ out of~$5$ tests to be positive. The evidence that we have
shows $2$ positive out of~$3$, with a higher positive test ratio. In
light of this a higher posterior disease probability makes sense ---
as with Jeffrey and not with Pearl.

\begin{remark}
\label{PearlMedicalUpdateRem}
Now that we have seen how to update a distribution, one can reconsider
the first question and argue that one should compute the three-test
likelihood by iteratively evaluating, in a dependent manner, each of
the individual tests in an updated distribution. The outcome order
pos-pos-neg gives:
\[ \begin{array}{rcl}
\big(\omega\models\postest\big)\cdot
   \big(\omega\update{\postest}\models\postest\big)\cdot
   \big(\omega\update{\postest}\update{\postest}\models\negtest\big)
& = &
\frac{381}{4000}.
\end{array} \]

\noindent This can be done in three different orders, each time giving
the same outcome. Hence we have to multiply this outcome by three.  It
reproduces the Pearl validity that we have seen
in~\eqref{PearlMedPriorValidity} as: $3 \cdot \frac{381}{4000} =
\frac{1143}{4000}$. Later in
Lemma~\ref{PredicateUpdateLem}~\eqref{PredicateUpdateLemIter} a
general explanation is given.
\end{remark}

\subsection{Question 3}\label{QuestionThreeSubsec}

The third question is in fact the same one as the first, but now with
the Jeffrey and Pearl updates $\omega_{J}$ and $\omega_{P}$ instead of
the original prior $\omega$.
\begin{description}
\item[Jeffrey] We apply the same
  formula~\eqref{JeffreyMedPriorValidity}, but now with $\omega_{J}$
  instead of $\omega$. For this we first calculate:
\[ \begin{array}{rccccclcrcccccl}
\omega_{J} \models \postest
& \,=\, &
\frac{5123}{11730}
& \approx &
0.437
& \qquad\qquad &
\omega_{J} \models \negtest
& \,=\, &
\frac{6607}{11730}
& \approx &
0.563
\end{array} \]

\noindent The Jeffrey-likelihood of the three tests, two positive one
negative, in the Jeffrey-posterior is then:
\begin{equation}
\label{JeffreyMedPosteriorValidity}
\begin{array}{rcl}
3 \cdot \big(\omega_{J} \models \postest\big)^{2} \cdot
  \big(\omega_{J} \models \negtest\big)
& \approx &
0.322  
\end{array}
\end{equation}

\item[Pearl] We now calculate, as in~\eqref{PearlMedPriorValidity},
the Pearl-style likelihood in the Pearl-posterior:
\begin{equation}
\label{PearlMedPosteriorValidity}
\begin{array}{rcccl}
3 \cdot \big(\omega_{P} \models \threetest\big)
& = &
3 \cdot \big(\omega_{P} \models \postest^{2} \andthen \negtest\big)
& \approx &
0.2861 
\end{array}
\end{equation}
\end{description}

In the end we take a step back and compare the likelihoods of the
three tests that we have found as answers to questions~1 and~3. The
following table gives an overview.
\begin{center}
\begin{tabular}{c||c|c}
& \quad\qquad Jeffrey \mbox{\quad\qquad} & 
  \quad\qquad Pearl \mbox{\quad\qquad} 
\\
\hline\hline
prior & 0.3116 \mbox{ in Eqn.}\,\eqref{JeffreyMedPriorValidity}
   & 0.2858 \mbox{ in Eqn.}\,\eqref{PearlMedPriorValidity}
\\
\hspace*{1em} posterior \mbox{\hspace*{1em}} 
   & 0.322 \mbox{ in Eqn.}\,\eqref{JeffreyMedPosteriorValidity}
   & 0.2861 \mbox{ in Eqn.}\,\eqref{PearlMedPosteriorValidity}
\end{tabular}
\end{center}

\noindent In both cases we see increases of likelihoods, from prior to
posterior, even though the increases are small, certainly in the Pearl
case. The increases make perfect sense, since the posterior is
obtained by incorporating during updating the very evidence of which
we determine the likelihood.  Informally speaking, after learning $p$,
this $p$ becomes more true.  Later on in
Theorems~\ref{JeffreyEvicenceUpdateThm}~\eqref{JeffreyEvicenceUpdateThmInc},
and~\ref{PearlEvicenceUpdateThm}~\eqref{PearlEvicenceUpdateThmInc}.
we shall see general results about such increases, for the Jeffrey,
and Pearl cases.

The numbers in the prior and test described in the introduction have
been chosen in a devious way, namely such that such increases do not
happen if one mixes up the mechanisms of Pearl and Jeffrey. This
happens for both mix-ups.

If one combines Jeffrey likelikhood with Pearl update one computes
for question~3 the validity:
\[ \begin{array}{rcl}
3 \cdot \big(\omega_{P} \models \postest\big)^{2} \cdot
  \big(\omega_{P} \models \negtest\big)
& \approx &
0.3081.
\end{array} \]

\noindent This is less than the original Jeffrey
likelihood~\eqref{JeffreyMedPriorValidity} of 0.3116.

Similarly, if Pearl likelihood is combined with a Jeffrey update one's
answer to question~3 is:
\[ \begin{array}{rcl}
3 \cdot \big(\omega_{J} \models \threetest\big)
& \approx &
0.2847.
\end{array} \]

\noindent This is less than the original Pearl
likelihood~\eqref{PearlMedPriorValidity} of 0.2858

Our lesson is that mixing up likelihood and update mechanisms may
break the likelihood increase that is naturally associated with
learning / updating. Therefore it is important not only to be aware of
the kind of likelihood computation (Jeffrey or Pearl), but also of the
(associated) kind of update mechanism.

\subsection{Replies to medical questions from the 
   introduction}\label{ReplySubsec}

As mentioned in the introduction, seventeen academic colleagues have
been kind enough to send in answers to the questions formulated in the
introduction. There is quite a bit of variation in the replies,
especially for updating. The number of participants is too low to draw
general conclusions, but the replies do give indications. At least
they show that specialists in the field do not all give the same
answer --- a fact that serves as motivation for the clarifications
provided in this paper.

Roughly, people in AI choose Pearl likelihood~\eqref{PearlValidity}
and also do the corresponding Pearl update
rule~\eqref{PearlMedUpdate}. This is a consistent approach. Four of
the respondents used a tool (Octave, Prodigy, EfProb, Genfer) for the
computation, which then does require a particular formalisation of the
solution.

People in medical statistics mostly choose Jeffrey
likelihood~\eqref{JeffreyValidity}, one of them via the tool
Julia. Several of them mention that they choose an independent
interpretation of the three tests. They do not use the corresponding
Jeffrey update rule~\eqref{JeffreyMedUpdate} that we apply above. This
update rule seems to be unfamiliar, even among specialists. Three of
these respondents combine Jeffrey's likelihood with Pearl's update ---
which is more familiar --- but the decrease of likelihood resulting
from this mix-up does not seem to raise concerns among them.

Several of the respondents shared their calculations. It was hard to
recognise systematics in the notation or in the approach --- apart
from two people referring to the binomial distribution. Several people
fill in numbers in a concrete computation and calculate an outcome,
without explaining the particular arrangement of numbers.

It is fair to say, even after such a small survey, that there is no
common, established `logic' to anwer the medical test questions
involving multiple pieces of evidence.

\section{Multisets and distributions}\label{MltDstSec}

After these preliminary discussions and observations about the medical
test questions we now proceed to introduce and develop the relevant
theory. This section describes for this purpose some basic facts about
multisets and distributions.

A multiset is like a subset, except that elements may occur multiple
times. An urn with five red balls, two blue and three green, is
written as a multiset of the form $5\ket{R} + 2\ket{B} + 3\ket{G}$.
We thus use ket's $\ket{-}$ not only for distributions, but also for
multisets. The three-test that we considered in the previous section
involves a multiset of predicates, namely $2\ket{\postest} +
1\ket{\negtest}$. Data items that are used for (probabilistic or
machine) learning can often be organised appropriately as multisets,
especially when items may occur multiple times and the order of
occurrence is irrelevant.

In general, a (finite) multiset over a set $X$ is written as
$m_{1}\ket{x_1} + \cdots + m_{N}\ket{x_N}$, with $x_{1}, \ldots, x_{N}
\in X$ and $m_{1}, \ldots, m_{N} \in \NNO$. The number $m_i$ expresses
the multiplicity of element $x_i$, that is, the number of its
occurrences in the multiset. Such a multiset over a set $X$ may
equivalently be described as a function $\varphi\colon X \rightarrow
\NNO$, with finite support, where the support of $\varphi$ is the
subset $\supp(\varphi) \coloneqq \setin{x}{X}{\varphi(x) \neq 0}$.  We
can then write $\varphi$ alternatively as formal sum
$\sum_{x}\varphi(x)\ket{x}$. We will freely switch between the ket
notation and function notation and use whichever form suits best.
We write $\natMlt(X)$ for the set of multisets over a set $X$.

The size $\|\varphi\|\in\NNO$ of a multiset $\varphi\in\natMlt(X)$ is
the total number of its elements, including multiplicities. Thus we
define $\|\varphi\| \coloneqq \sum_{x} \varphi(x)$. For a number
$K\in\NNO$ we write $\natMlt[K](X) \subseteq \natMlt(X)$ for the
subset of multisets of size $K$.


A $K$-sized list $(x_{1}, \ldots, x_{K}) \in X^{K}$ of elements from
$X$ can be turned into a $K$-sized multiset, via what we call
accumulation.  It is defined as $\acc(x_{1}, \ldots, x_{K}) =
1\ket{x_1} + \cdots + 1\ket{x_K}$ and gives a function $\acc \colon
X^{K} \rightarrow \natMlt[K](X)$. Concretely, we have
$\acc(c,b,a,a,b,a) = 3\ket{a} + 2\ket{b} + 1\ket{c}$.

Given a multiset $\varphi\in\natMlt[K](X)$ there are $\coefm{\varphi}$
many sequences in $X^{K}$ that accumulate to $\varphi$. This number
$\coefm{\varphi}\in\NNO$ is the multinomial coefficient of the
multiset $\varphi$ of size $K$, defined as:
\begin{equation}
\label{MultinomialCoefficientEqn}
\begin{array}{rclcrcl}
\coefm{\varphi}
& \coloneqq &
\displaystyle\frac{K!}{\facto{\varphi}}
& \qquad\mbox{where}\qquad &
\facto{\varphi}
& \coloneqq &
\displaystyle\prod_{x\in X} \varphi(x)!.
\end{array}
\end{equation}


\noindent Multinomial coefficients are useful, for instance in the
famous Multinomial Theorem (see \textit{e.g.}~\cite{Ross18}) which has
a snappy formulation in terms of multisets, as:
\begin{equation}
\label{MultinomMltTheoremEqn}
\begin{array}{rcll}
\big(r_{1} + \cdots + r_{n}\big)^{K}
& = &
\displaystyle\sum_{\varphi\in\natMlt[K](\{1,\ldots,n\})} 
   \coefm{\varphi}\cdot \prod_{1\leq i \leq n} r_{i}^{\varphi(i)}
& \qquad \mbox{for $r_{i}\in\R$.}
\end{array}
\end{equation}

We turn to distributions. As mentioned in
Section~\ref{MedicalSolutionSec}, a distribution over a set $X$ is a
finite formal convex sum $\sum_{i} r_{i}\ket{x_i}$ with $x_{i}\in X$,
$r_{i}\in [0,1]$ and $\sum_{i}r_{i}=1$. We can equivalently describe
such a distribution as a function $\omega\colon X \rightarrow [0,1]$
with finite support and $\sum_{x} \omega(x) = 1$. We write $\Dst(X)$
for the set of distributions on a set $X$. Each element $x\in X$ gives
rise to a point (or Dirac) distribution $1\ket{x} \in \Dst(X)$.

Each non-empty multiset $\varphi\in\natMlt(X)$ can be turned into a
distribution via normalisation. We call this frequentist learning and
describe it as $\flrn(\varphi)$, where:
\begin{equation}
\label{FlrnEqn}
\begin{array}{rcl}
\flrn(\varphi)
& \coloneqq &
\displaystyle\sum_{x\in X} \, \frac{\varphi(x)}{\|\varphi\|}\,\bigket{x}.
\end{array}
\end{equation}

\noindent Thus, for instance, $\flrn\big(3\ket{a} + 4\ket{b} +
5\ket{c}\big) = \frac{1}{4}\ket{a} + \frac{1}{3}\ket{b} +
\frac{5}{12}\ket{c}$.

For two distributions $\omega\in\Dst(X)$ and $\rho\in\Dst(Y)$ we can
form the parallel `tensor' product $\omega\otimes\rho\in\Dst(X\times
Y)$. It is defined in ket notation on the left below, and as function on
the right.
\[ \begin{array}{rclcrcl}
\omega\otimes\rho
& \coloneqq &
\displaystyle\sum_{x\in X,y\in Y} \omega(x)\cdot\rho(y)\,\bigket{x,y}
& \quad\mbox{\textit{i.e.}}\quad &
\big(\omega\otimes\rho\big)(x,y)
& \coloneqq &
\omega(x)\cdot\rho(y).
\end{array} \]

\noindent We write $\omega^{n}$ for the $n$-fold tensor
$\omega\otimes\cdots\otimes\omega \in \Dst(X^{n})$.

Distributions are closed under convex sums, in the following way.  For
numbers $r_{1}, \ldots, r_{n}\in [0,1]$, with $\sum_{i} r_{i} = 1$,
and distributions $\omega_{1}, \ldots, \omega_{n} \in \Dst(X)$ we can
form a new distribution in $\Dst(X)$, namely:
\[ \begin{array}{rcccl}
{\displaystyle\sum}_{i} \, r_{i} \cdot \omega_{i}
& = &
r_{1}\cdot\omega_{1} + \cdots + r_{n}\cdot\omega_{n}
& = &
\displaystyle\sum_{x\in X} \left(\textstyle{\displaystyle\sum}_{i} \, 
   r_{i} \cdot \omega_{i}(x)\right)\bigket{x}.
\end{array} \]

\noindent Concretely: $\frac{1}{3}\cdot\Big(\frac{1}{2}\ket{a} +
\frac{1}{2}\ket{b}\Big) + \frac{2}{3}\cdot\Big(\frac{1}{4}\ket{a} +
\frac{3}{4}\ket{b}\Big) = \frac{1}{3}\ket{a} + \frac{2}{3}\ket{b}$.

A function $f\colon X \rightarrow Y$ can be turned into a function
$\Dst(f) \colon \Dst(X) \rightarrow \Dst(Y)$. It acts on a
distribution in $\Dst(X)$ as described below.
\begin{equation}
\label{DstFunEqn}
\begin{array}{rcl}
\Dst(f)\left({\displaystyle\sum}_{i} \, r_{i}\bigket{x_i}\right)
& \coloneqq &
{\displaystyle\sum}_{i} \, r_{i}\bigket{f(x_{i})}.
\end{array}
\end{equation}

\noindent The function $\Dst(\pi_{1}) \colon \Dst(X\times Y)
\rightarrow \Dst(X)$ obtained from the first projection $\pi_{1}
\colon X\times Y \rightarrow X$ performs marginalisation. It maps a
joint distribution $\tau$ over $X\times Y$ to a distribution over $X$
by summing out the $Y$-part, as in:
\[ \begin{array}{rclcrcl}
\Dst(\pi_{1})(\tau)
& = &
\displaystyle\sum_{x\in X} \left(\sum_{y\in Y}\tau(x,y)\right)\bigket{x}
& \quad\mbox{so that}\quad &
\Dst(\pi_{1})\big(\omega\otimes\rho\big)
& = &
\omega.
\end{array} \]

\noindent The copy map $\Delta \colon X \rightarrow X\times X$, given
by $\Delta(x) = (x,x)$, gives $\Dst(\Delta) \colon \Dst(X) \rightarrow
\Dst(X\times X)$. It satisfies in general $\Dst(\Delta)(\omega) \neq
\omega\otimes\omega$. For instance, for the fair coin flip distribution
$\sigma = \frac{1}{2}\ket{H} + \frac{1}{2}\ket{T}$ one has:
\[ \begin{array}{rcl}
\Dst(\Delta)(\sigma)
& = &
\frac{1}{2}\ket{H,H} + \frac{1}{2}\ket{T,T}
\\[+0.2em]
& \neq &
\frac{1}{4}\ket{H,H} + \frac{1}{4}\ket{H,T} +
  \frac{1}{4}\ket{T,H} + \frac{1}{4}\ket{T,T}
\hspace*{\arraycolsep}=\hspace*{\arraycolsep}
\sigma\otimes\sigma.
\end{array} \]

\noindent Indeed, flipping a coin once and copying the outcomes is
different from flipping two coins.

\section{Validity for multisets of evidence}\label{ValiditySec}

In Section~\ref{MedicalSolutionSec} we have already introduced
validity $\omega\models p \coloneqq \sum_{x} \omega(x)\cdot p(x)$ for
a distribution $\omega$ and a predicate $p$. Also, we introduced
likelihoods in the style of Jeffrey and Pearl, for multiple
predicates. In this section we take a more systematic look at these
notions.

\begin{definition}
\label{PredDef}
Let $X$ be an arbitrary set.
\begin{enumerate}
\item A \emph{predicate} on $X$ is a function $X \rightarrow
  [0,1]$. More generally, a \emph{factor} on $X$ is a function $X
  \rightarrow \nnR$ and an \emph{observation} is a function $X
  \rightarrow \R$. Thus, each predicate is a factor, and each factor
  is an observation. A predicate is called \emph{sharp} if it
  restricts to $\{0,1\} \subseteq [0,1]$, that is, if either $p(x) =
  0$ or $p(x) = 1$, for each $x\in X$.

\item The truth predicate $\one\colon X \rightarrow [0,1]$ is always
  $1$ and the falsity predicate $\zero \colon X \rightarrow [0,1]$ is
  always $0$. Explicitly, $\one(x) = 1$ and $\zero(x) = 0$ for each
  $x\in X$. Both $\one$ and $\zero$ are thus sharp.

\item Observations (and factors and predicates) are ordered pointwise:
  $p\leq q$ means that $p(x) \leq q(x)$ holds for all $x$. Thus, each
  factor $p$ satisfies $\zero \leq p$ and each predicate $p$
  additionally satisfies $p \leq \one$.

\item For each subset $U\subseteq X$ there is sharp indicator
  predicate $\indic{U} \colon X \rightarrow [0,1]$ with $\indic{U}(x)
  = 1$ if $x\in U$ and $\indic{U}(x) = 0$ if $x\not\in U$. For an
  element $x\in X$ we write $\indic{x}$ for $\indic{\{x\}}$. This
  $\indic{x}$ is called a point predicate.

\item For two observations $p,q\colon X \rightarrow \R$ on the same
  set we define the conjunction $p\andthen q \colon X \rightarrow \R$
  as $\big(p\andthen q\big)(x) \coloneqq p(x)\cdot q(x)$. This
  conjunction $\andthen$ restricts to factors and to (sharp)
  predicates. It is easy to see that $\andthen$ is commutative and
  that it satisfies $p\andthen \one = p$ and $p\andthen \zero =
  \zero$.

\item For two observations $p\colon X \rightarrow \R$ and $q\colon Y
  \rightarrow \R$ on (possibly) different sets we define the parallel
  conjunction $p\otimes q \colon X \times Y \rightarrow \R$ also by
  pointwise multiplication: $\big(p \otimes q\big)(x,y) \coloneqq p(x)
  \cdot q(y)$. This $\otimes$ restricts to factors and predicates too.

\item For two observations $p,q\colon X \rightarrow \R$ we define the
  sum $p+q \colon X \rightarrow \R$ also pointwise, as
  $\big(p+q\big)(x) \coloneqq p(x) + q(x)$. This sum restricts to
  factors but not to predicates.

\item For an observation $p\colon X \rightarrow \R$ and a scalar
  $s\in\R$ one defines the scalar multiplication $s\cdot p \colon X
  \rightarrow \R$ as $\big(s\cdot p\big)(x) \coloneqq s\cdot p(x)$.
  This scalar multiplication restricts to factors if one restricts the
  scalar $s$ to the non-negative reals $\nnR$, and to predicates if
  the scalars are from the unit interval $[0,1]$.

\item For a predicate $p\colon X \rightarrow [0,1]$ there is the
  orthosupplement / negation $p^{\bot} \colon X \rightarrow [0,1]$
  defined as $p^{\bot}(x) \coloneqq 1 - p(x)$. Then $p^{\bot\bot} = p$
  and $p + p^{\bot} = \one$.

\item We write $\Obs(X) = \R^{X}$ for the set of observations /
  functions $X \rightarrow \R$. The subset of factors and predicates
  are written as $\Pred(X) \subseteq \Fact(X) \subseteq \Obs(X)$. One
  may identify sharp predicates with subsets, giving an inclusion
  $\indic{(-)} \colon \Pow(X) \rightarrow \Pred(X)$.
\end{enumerate}
\end{definition}

Traditionally in probability theory one assigns probabilities to
`events', that is, to subsets $U\subseteq X$ of the space of
possiblities $X$. In the present setting such an event corresponds to
a sharp indicator predicate $\indic{U}$. The probability of this event
is often written as $\Prob(U)$, where the distribution at hand is left
implicit. We would write this as $\omega\models\indic{U} = \sum_{x\in
  U}\omega(x) = \Prob(U)$. Leaving the distribution implicit is
inconvenient, or even confusing, when there are different
distributions around. For instance, the important validity increase of
Theorem~\ref{UpdateGainThm}~\eqref{UpdateGainThmSingle} cannot be
expressed at all if one leaves the distribution implicit.

Predicates with values in $[0,1]$, as used above, are often called
fuzzy or soft. They make perfectly good sense in a probabilistic
setting, as we have seen in Section~\ref{MedicalSolutionSec} with the
positive and negative test predicates $\postest$ and $\negtest$. See
also for instance~\cite[\S3.6-3.7]{Darwiche09} for other usage of
fuzzy predicates in a probabilistic context. Using the notation
introduced in the above definition we can write these two predicates
$\postest$ and $\negtest$ as weighted sums of point predicates:
$\postest = \frac{9}{10}\cdot\indic{d} +
\frac{2}{5}\cdot\indic{\no{d}}$ and $\negtest =
\frac{1}{10}\cdot\indic{d} + \frac{3}{5}\cdot\indic{\no{d}}$. Notice
that $\postest + \negtest = \one$, so that we can also write $\postest
= \negtest^{\bot}$.

The sum $p+q$ is defined above only for observations and factors
$p,q$. It may also be defined for predicates $p,q$ as a partial
operation that exists when $p(x) + q(x) \leq 1$ for each $x$. Such
partial sums are axiomatised in the notion of effect algebra; it
appeared in the context of quantum logic,
see~\cite{FoulisB94,Gudder96,DvurecenskijP00,ChoJWW15b}, but will not
be used here.

It is easy to see that $p\otimes\zero = \zero$. In contrast,
$p\otimes\one$ does not trivialise. In logical terms, $p\otimes\one$
is the weakening of $p$, arising by moving a predicate (or factor,
observation) $p\colon X \rightarrow [0,1]$ in context $X$ to an
extended context $X\times Y$, as $p\otimes\one \colon X \times Y
\rightarrow [0,1]$, with $(p\otimes\one)(x,y) = p(x)$. Thus, where
marginalisation moves a (joint) distribution to a smaller context,
weakening moves a predicate to a larger context, see
Lemma~\ref{ValidityLem}~\eqref{ValidityLemMarg} below.

We collect some basic results about validity $\models$. The proofs
are easy and left to the reader.

\begin{lemma}
\label{ValidityLem}
Recall from~\eqref{ValidityKetEqn} that we write $\omega\models p$ for
the sum $\sum_{x} \omega(x)\cdot p(x)$, for a distribution
$\omega\in\Dst(X)$ and an observation $p\in\Obs(X)$. Then:
\begin{enumerate}
\item \label{ValidityLemOne} $\omega\models\one \,=\, 1$ and
  $\omega\models\zero \,=\, 0$, and $\omega\models\indic{x} \,=\,
  \omega(x)$;

\item \label{ValidityLemLin} $\omega \models (p+q) \,=\,
  \big(\omega\models p\big) + \big(\omega\models q\big)$ and
  $\omega\models (s\cdot p) \,=\, s\cdot \big(\omega\models p\big)$;

\item \label{ValidityLemOrd} if $p\leq q$ then $\big(\omega\models
  p\big) \leq \big(\omega \models q\big)$;

\item \label{ValidityLemTensor} $(\omega\otimes\rho) \models (p\otimes
  q) \,=\, \big(\omega\models p\big) \cdot \big(\rho \models q\big)$;

\item \label{ValidityLemOrtho} $\omega\models p^{\bot} \,=\, 1 -
  \big(\omega\models p)$;

\item \label{ValidityLemMarg} $\Dst(\pi_{1})(\tau) \models p \,=\,
  \tau \models p\otimes \one$ and $\omega \models (p\andthen q) \,=\,
  \Dst(\Delta)(\omega) \models (p\otimes q)$. \QED
\end{enumerate}
\end{lemma}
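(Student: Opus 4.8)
The plan is to prove each of the six items of Lemma~\ref{ValidityLem} by direct computation from the definition $\omega\models p = \sum_{x}\omega(x)\cdot p(x)$ given in~\eqref{ValidityKetEqn}. Each item is a routine unfolding, so I would organise the proof as a short list keyed to the numbered parts, treating the algebraic identities as consequences of the definition together with elementary properties of finite sums.

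For item~\eqref{ValidityLemOne}, I would substitute $\one(x)=1$, $\zero(x)=0$, and $\indic{x}(x')=[x'=x]$ into the defining sum, using $\sum_{x}\omega(x)=1$ for the first equality. For item~\eqref{ValidityLemLin}, linearity follows immediately since $(p+q)(x)=p(x)+q(x)$ and $(s\cdot p)(x)=s\cdot p(x)$ can be pulled out of the sum; this is just distributivity and the fact that finite summation is linear. For item~\eqref{ValidityLemOrd}, if $p\leq q$ pointwise then $\omega(x)\cdot p(x)\leq \omega(x)\cdot q(x)$ since $\omega(x)\geq 0$, and summing preserves the inequality. Item~\eqref{ValidityLemOrtho} is a special case combining~\eqref{ValidityLemOne} and~\eqref{ValidityLemLin}: from $p+p^{\bot}=\one$ one gets $\big(\omega\models p\big)+\big(\omega\models p^{\bot}\big)=\omega\models\one=1$.

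The only items requiring a genuine index manipulation are~\eqref{ValidityLemTensor} and~\eqref{ValidityLemMarg}. For~\eqref{ValidityLemTensor} I would expand $(\omega\otimes\rho)\models(p\otimes q)=\sum_{x,y}\omega(x)\rho(y)\cdot p(x)q(y)$ and factor the double sum as $\big(\sum_{x}\omega(x)p(x)\big)\cdot\big(\sum_{y}\rho(y)q(y)\big)$, which is exactly $\big(\omega\models p\big)\cdot\big(\rho\models q\big)$; the factorisation of a product of finite sums is the key move. For the first identity in~\eqref{ValidityLemMarg}, I would compute $\Dst(\pi_1)(\tau)\models p=\sum_{x}\big(\sum_{y}\tau(x,y)\big)p(x)$ using the definition of marginalisation, then recognise this as $\sum_{x,y}\tau(x,y)\cdot p(x)=\tau\models(p\otimes\one)$ since $(p\otimes\one)(x,y)=p(x)\cdot 1$. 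For the second identity, I would compute $\Dst(\Delta)(\omega)\models(p\otimes q)=\sum_{x}\omega(x)\cdot(p\otimes q)(x,x)=\sum_{x}\omega(x)\cdot p(x)q(x)$, which equals $\omega\models(p\andthen q)$ by the pointwise definition of $\andthen$.

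The main obstacle, such as it is, lies in item~\eqref{ValidityLemMarg}: one must be careful to track which set each sum ranges over and to use the correct definitions of $\Dst(\pi_1)$ and $\Dst(\Delta)$ from~\eqref{DstFunEqn}, particularly the fact that $\Dst(\Delta)(\omega)$ places all mass on the diagonal, so that only the terms $(x,x)$ survive. Since the authors themselves note that ``the proofs are easy and left to the reader,'' I would present this as a compact verification rather than a detailed derivation, emphasising only the two steps (product factorisation and diagonal restriction) that are not purely mechanical.
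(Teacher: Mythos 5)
Your proof is correct and is exactly the routine unfolding of the definition that the paper intends when it states that ``the proofs are easy and left to the reader''; every item checks out, including the two slightly less mechanical steps (factorising the double sum for the tensor case and restricting to the diagonal for $\Dst(\Delta)$). Nothing is missing.
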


In the present context we use the word evidence for multiple pieces of
information whose validity can be determined and that can be used for
updating (learning). A crucial point is that evidence may have
multiple items, for instance coming from an experiment that is
repeated several times. More concretely, we define evidence as a
multiset of factors. We use factors, and not more general
observations, since we like to use evidence for updating (in the next
section). Updating works for factors, not for observations, since
negative values are problematic in normalisation.

\begin{definition}
\label{EvidenceDef}
\begin{enumerate}
\item Evidence is given by a multiset of factors over the same
  underlying set, say $X$, so that evidence is a multiset
  $\psi\in\natMlt\big(\Fact(X)\big)$. A common special case is where
  all factors involved are predicates, so that
  $\psi\in\natMlt\big(\Pred(X)\big)$.

\item We use the following two special cases.
\begin{itemize}
\item A single factor $p\colon X \rightarrow \nnR$ will also be called
  evidence, where formally it would be a single piece of evidence as
  given by the singleton multiset $1\ket{p}$.

\item When evidence $\psi\in\natMlt\big(\Pred(X)\big)$ consists only of
  point predicates $\indic{x}$, this $\psi$ may be identified with a
  multiset $\psi \in \natMlt(X)$ over $X$. We may then call $\psi$ point
  evidence.
\end{itemize}

\item We say that evidence $\psi\in\natMlt\big(\Fact(X)\big)$
  \emph{matches} (or is a match) when the sum of all factors in the
  support of $\psi$ is below the truth predicate $\one$, that is, if
  $\sum_{p\in\supp(\psi)} p \leq \one$. This implies that all factors
  in $\psi$ are predicates, and thus that $\psi$ inhabits the set
  $\natMlt\big(\Pred(X)\big)$. We call $\psi$ a \emph{perfect match} when
  $\sum_{p\in\supp(\psi)} p = \one$.
\end{enumerate}
\end{definition}

The evidence used in Section~\ref{MedicalSolutionSec} consisted of two
positive and one negative test. We can write this now as a multiset
$2\bigket{\postest} + 1\bigket{\negtest}$. It is a perfect match since
$\postest + \negtest = \one$.

We define evidence as a \emph{multiset}, and not as a \emph{list}, of
factors because we regard the order of the factors involved as
irrelevant. This order-irrelevance also applies to updating, in
Section~\ref{UpdateSec}. 

We need to fix some notation for iterated conjunctions of predicates
and evidence.

\begin{definition}
\label{EvidenceConjunctionDef}
\begin{enumerate}
\item \label{EvidenceConjunctionDefFact} Let a factor $p\in\Fact(X)$
  be given, with a number $n\in\NNO$.  We iterate the two conjunctions
  $\andthen$ and $\otimes$ for factors in the following way.
\[ \begin{array}{rclcrcl}
p^{n}
& \coloneqq &
\underbrace{p \andthen \cdots \andthen p}_{n\text{ times}} \in \Fact(X)
& \quad\mbox{and}\quad &
p^{\otimes n}
& \coloneqq &
\underbrace{p \otimes \cdots \otimes p}_{n\text{ times}} \in \Fact\big(X^{n}\big).
\end{array} \]

\noindent Explicitly, $p^{n}(x) = p(x)^{n}$ and $p^{\otimes n}(x_{1},
\ldots, x_{n}) = p(x_{1}) \cdot \ldots \cdot p(x_{n})$.  In the border
case, when $n=0$, we have $p^{0} = \one\in\Fact(X)$ and $p^{\otimes 0}
= \one \in \Fact(X^{0}) = \Fact(1)$, where $1$ is a singleton set.

\item \label{EvidenceConjunctionDefEv} For an evidence multiset
  $\psi\in\natMlt[K]\big(\Fact(X)\big)$ of size $K$ we write:
\[ \begin{array}{rcl}
\evidand\psi
& \coloneqq &
\displaystyle\bigandthen\limits_{p\in\supp(\psi)}\, p^{\psi(p)} \,\in\, \Fact(X)
\\[+1em]
\evidtensor\psi
& \coloneqq &
\displaystyle\bigotimes_{p\in\supp(\psi)} p^{\otimes\psi(p)} \,\in\,\Fact\big(X^{K}\big).
\end{array} \]
\end{enumerate}
\end{definition}

For instance, for evidence $\psi = 2\ket{q} + 3\ket{r}$ we have:
\[ \begin{array}{rcl}
\evidand\psi
& = &
q^{2} \andthen r^{3}
\hspace*{\arraycolsep}=\hspace*{\arraycolsep}
q \andthen q \andthen r \andthen r \andthen r
\\[+0.2em]
\evidtensor\psi
& = &
q^{\otimes 2} \andthen r^{\otimes 3}
\hspace*{\arraycolsep}=\hspace*{\arraycolsep}
q \otimes q \otimes r \otimes r \otimes r.
\end{array} \]

\noindent As an aside: the notation $\evidtensor\psi$ is a bit
dangerous because the order of tensoring is relevant, in principle.
We evaluate such expressions $\evidtensor\psi$ only in distributions
of the form $\omega\otimes\cdots\otimes\omega$ or
$\Dst(\Delta)(\omega)$, so that such order issues do not matter.

We now turn to a general formulation of the two likelihoods of Jeffrey
and Pearl that we distinguished in Subsection~\ref{QuestionOneSubsec}.

\begin{definition}
\label{EvidenceValidityDef}
Let $\omega\in\Dst(X)$ be a distribution and
let $\psi\in\natMlt\big(\Fact(X)\big)$ be an evidence multiset of
size $K = \|\psi\|$.
\begin{enumerate}
\item The \emph{Jeffrey validity} $\omega \Jmodels \psi$ of $\psi$ in
  $\omega$ is defined as:
\begin{equation}
\label{JeffreyValidity}
\begin{array}{rcl}
\omega \Jmodels \psi
& \coloneqq &
\displaystyle\coefm{\psi}\cdot
   \prod_{p\in\supp(\psi)} \big(\omega\models p\big)^{\psi(p)}
\\[+0.8em]
& = &
\coefm{\psi}\cdot \displaystyle\left(
   \underbrace{\omega\otimes\cdots\otimes\omega}_{K\text{ times}} \models
  \bigotimes_{p\in\supp(\psi)} p^{\otimes\psi(p)}\right)
\\[+1.4em]
& = &
\coefm{\psi}\cdot\Big(\omega^{K} \models \evidtensor\psi\Big).
\end{array}
\end{equation}

\noindent This last equation holds by
Lemma~\ref{ValidityLem}~\eqref{ValidityLemTensor}. We recall that
$\coefm{\psi}$ is the multinomial coefficient
from~\eqref{MultinomialCoefficientEqn}. It takes care of the different
orders in which the validities of the factors in $\psi$ may be
evaluated.

\item The \emph{Pearl validity} $\omega \Pmodels \psi$ of $\psi$ in
  $\omega$ is:
\begin{equation}
\label{PearlValidity}
\begin{array}{rcl}
\omega \Pmodels \psi
& \coloneqq &
\coefm{\psi}\cdot\Big(\omega \models \evidand\psi\Big)
\\[+0.5em]
& = &
\displaystyle\coefm{\psi}\cdot \left(
   \omega\models \bigandthen\limits_{p\in\supp(\psi)}\, p^{\psi(p)} \right)
\\[+0.8em]
& = &
\coefm{\psi}\cdot \displaystyle\left(
   \Dst(\Delta)(\omega) \models
  \bigotimes_{p\in\supp(\psi)} p^{\otimes\psi(p)}\right)
\\[+1.4em]
& = &
\coefm{\psi}\cdot\Big(\Dst(\Delta)(\omega) \models \evidtensor\psi\Big),
\end{array}
\end{equation}

\noindent where $\Delta \colon X \rightarrow X^{K}$ is the $K$-fold
copier $x \mapsto (x,\ldots,x)$. The last equation holds by
Lemma~\ref{ValidityLem}~\eqref{ValidityLemMarg}.
\end{enumerate}
\end{definition}

The last lines of~\eqref{JeffreyValidity} and~\eqref{PearlValidity}
give a clear difference between the formulations of Jeffrey and Pearl:
they both use the validity of the parallel conjunction factor
$\evidtensor\psi$, in the parallel product distribution $\omega^{K} =
\omega\otimes\cdots\otimes\omega \in \Dst\big(X^{K}\big)$ for Jeffrey,
and in the copied distribution $\Dst(\Delta)(\omega) \in
\Dst\big(X^{K}\big)$ for Pearl. As we have seen in
Section~\ref{MltDstSec}, these two distributions $\omega^{K}$ and
$\Dst(\Delta)(\omega)$ differ, in general. The difference between the
joint distributions $\omega^K$ and $\Dst(\Delta)(\omega)$ corresponds
to the difference between independent and dependent evaluation of the
evidence items, in Jeffrey's and in Pearl's approach.  Later, in
Lemma~\ref{PredicateUpdateLem}~\eqref{PredicateUpdateLemIter}, we
shall see that this Pearl validity can be expressed in yet another
way, as discussed in Remark~\ref{PearlMedicalUpdateRem}.

It is easy to see that the earlier Jeffrey and Pearl likelihood
calculations~\eqref{JeffreyMedPriorValidity}
and~\eqref{PearlMedPriorValidity} are instances of the above general
definitions~\eqref{JeffreyValidity} and~\eqref{PearlValidity}. As
further illustration, we elaborate these definitions for $\psi =
2\ket{p} + 3\ket{q}$, then:
\[ \begin{array}{rcl}
\omega \Jmodels \psi
& \smash{\stackrel{\eqref{JeffreyValidity}}{=}} &
\coefm{\psi} \cdot \big(\omega\models p)^{2} \cdot \big(\omega\models q)^{3}
\\[+0.2em]
& = &
\frac{5!}{2!\cdot 3!} \cdot
\big(\omega\models p) \cdot \big(\omega\models p) \cdot 
   \big(\omega\models q) \cdot \big(\omega\models q) \cdot \big(\omega\models q)
\\[+0.2em]
& = &
10 \cdot \Big(\omega\otimes\omega\otimes\omega\otimes\omega\otimes\omega \models
   p \otimes p\otimes q \otimes q \otimes q\Big)
\\[+0.4em]
& = &
10 \cdot \Big(\omega^{5} \models p^{\otimes2} \otimes q^{\otimes3}\Big)
\\[+0.4em]
& = &
10 \cdot \Big(\omega^{5} \models \evidtensor\psi\Big).
\end{array} \]

\noindent Similarly, for Pearl:
\[ \begin{array}{rcl}
\omega \Pmodels \psi
& \smash{\stackrel{\eqref{PearlValidity}}{=}} &
\coefm{\psi} \cdot \Big(\omega \models \evidand\psi\Big)
\\[+0.4em]
& = &
10 \cdot \Big(\omega \models p^{2} \andthen q^{3}\Big)
\\[+0.4em]
& = &
10 \cdot \Big(\omega \models p \andthen p \andthen q \andthen q \andthen q\Big)
\\[+0.4em]
& = &
10 \cdot \Big(\Dst(\Delta)(\omega) \models 
   p \otimes p\otimes q \otimes q \otimes q\Big)
\\[+0.4em]
& = &
10 \cdot \Big(\Dst(\Delta)(\omega) \models p^{\otimes2} \otimes q^{\otimes3}\Big)
\\[+0.4em]
& = &
10 \cdot \Big(\Dst(\Delta)(\omega) \models \evidtensor\psi\Big).
\end{array} \]

\begin{figure}
\begin{center}
\includegraphics[scale=0.5]{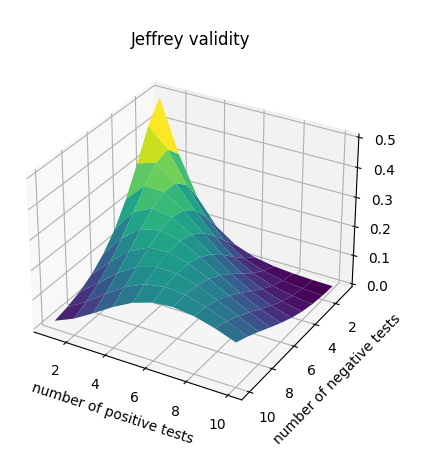}
\hspace*{2em}
\includegraphics[scale=0.5]{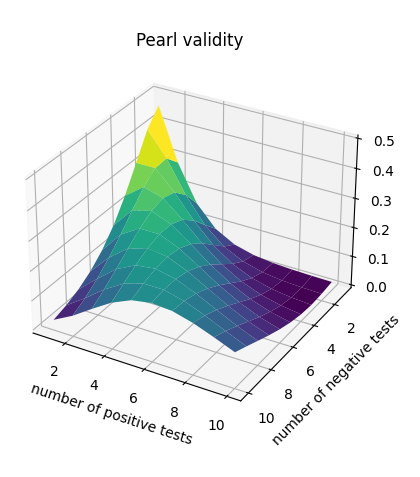}
\end{center}
\caption{Jeffrey and Pearl validities in the medical test
  scenario from the introduction, with 100 different versions of
  evidence $i\bigket{\postest} + j\ket{\negtest}$, where
  $i,j\in\{1,\ldots,10\}$.}
\label{ValidityFig}
\end{figure}

In Figure~\ref{ValidityFig} we plot, in the earlier medical setting,
the 100 Jeffrey and Pearl validities for evidence of the form
$i\bigket{\postest} + j\ket{\negtest}$, representing $i$ positive and
$j$ negative tests, where $i,j\in\{1,\ldots,10\}$. We see that the
shapes look remarkably similar, but there are small differences
between their validities (all less than 0.033). 

Unlike in these cases, there can be substantial differences between
Jeffrey and Pearl valdities. For instance for the set $X =
\{a,b,c\}$ consider the distribution and predicate:
\[ \begin{array}{rclcrcl}
\omega
& = &
\frac{3}{10}\ket{a} + \frac{3}{10}\ket{b} + \frac{2}{5}\ket{c}
& \qquad\qquad &
p
& = &
\frac{1}{100}\cdot\indic{a} + \frac{1}{100}\cdot\indic{b} + 
   \frac{49}{50}\cdot\indic{c},
\end{array} \]

\noindent with (perfectly matching) evidence $\psi = 1\bigket{p} +
1\bigket{p^{\bot}}$. The difference between the Jeffrey and Pearl
validities is more than $0.45$, since:
\[ \begin{array}{rclcrcl}
\omega\Jmodels\psi
& \approx &
0.479
& \qquad\mbox{and}\qquad &
\omega\Pmodels\psi
& \approx &
0.028.
\end{array} \]

\noindent It may also happen that the Pearl validity is bigger. For
instance, for the same distribution $\omega$, but with predicate $q =
\frac{3}{10}\cdot\indic{a} + \frac{1}{5}\cdot\indic{b} +
\frac{9}{10}\cdot\indic{c}$ and evidence $\chi = 1\bigket{q} +
4\bigket{q^{\bot}}$ one gets:
\[ \begin{array}{rclcrcl}
\omega\Jmodels\chi
& \approx &
0.054
& \qquad\mbox{and}\qquad &
\omega\Pmodels\chi
& \approx &
0.154.
\end{array} \]

\noindent Hence it does matter which kind of validity is used in a
particular situation. We see that the Jeffrey valdity can be both
bigger and smaller than the Jeffrey validity. In one special situation
it is always smaller, namely when we have have evidence of the form
$n\ket{p}$ with one factor $p$ occurring multiple (in fact, $n$)
times. Then, for any distribution $\omega$,
\[ \begin{array}{rcccccl}
\omega\Jmodels n\ket{p}
& = &
\big(\omega\models p\big)^{n}
& \leq &
\omega \models p^{n}
& = &
\omega\Pmodels n\ket{p}.
\end{array} \]

\noindent The proof for $n=2$ is given in~\cite[Proof of
  Prop.~12]{Jacobs23c}, but the inequality holds for any $n\in\NNO$.

\ignore{

X = alphaspace(3)
w = DState([0.3, 0.3, 0.4], X)
a = 0.01
b = 0.01
p = DPred([a, b, 1 - a - b], X)
ev = DState([1, 1], Space(p, ~p))
jv = w.Jeffrey_validity(ev)
pv = w.Pearl_validity(ev)
print("\nJeffrey validity: ", jv )
print("Pearl validity:   ", pv )
print("Validity difference: ", abs(jv - pv) )
print("\nFor distribution and evidence:")
print( w )
print( ev )

# Jeffrey validity:  0.479192
# Pearl validity:    0.027560000000000015
# Validity difference:  0.451632

# For distribution and evidence:
# 0.3|a> + 0.3|b> + 0.4|c>
# 1|0.01 . a || 0.01 . b || 0.98 . c> + 1|0.99 . a || 0.99 . b || 0.02 . c>

X = alphaspace(3)
w = DState([0.2, 0.2, 0.6], X)
p = DPred([0.3, 0.2, 0.9], X)
ev = DState([1, 4], Space(p, ~p))
jv = w.Jeffrey_validity(ev)
pv = w.Pearl_validity(ev)
print("\nJeffrey validity: ", jv )
print("Pearl validity:   ", pv )
print("Validity difference: ", abs(jv - pv) )
print("\nFor distribution and evidence:")
print( p )
print( ev )

# Jeffrey validity:  0.053747711999999996
# Pearl validity:    0.15422
# Validity difference:  0.10047228799999999

# For distribution and evidence:
# 0.3 . a || 0.2 . b || 0.9 . c
# 1|0.3 . a || 0.2 . b || 0.9 . c> + 4|0.7 . a || 0.8 . b || 0.1 . c>

}

\begin{remark}
\label{VarianceRem}
When the evidence consists of two factors both occurring once,
  say $\psi = 1\ket{p_{1}} + 1\ket{p_{2}}$, one has as Jeffrey and
  Pearl validities:
\[ \begin{array}{rclcrcl}
\omega\Jmodels\psi
& = &
2 \cdot \big(\omega\models p_{1}\big) \cdot \big(\omega\models p_{2}\big)
& \qquad\mbox{and}\qquad &
\omega \Pmodels \psi
& = &
2 \cdot\big(\omega \models p_{1}\andthen p_{2}\big).
\end{array} \]

\noindent These two expressions occur in the covariance of $p_{1}, p_{2}$,
since:
\[ \begin{array}{rcl}
\mathsl{Cov}\big(\omega,p_{1},p_{2}\big)
& \,\coloneqq\, &
\omega \models \big(p_{1} - (\omega\models p_{1})\cdot\one\big) \andthen
   \big(p_{2} - (\omega\models p_{2})\cdot\one\big)
\\[+0.2em]
& = &
\big(\omega\models p_{1} \andthen p_{2}\big) -
   \big(\omega\models p_{1}\big)\cdot\big(\omega\models p_{2}\big)
\\[+0.2em]
& = &
\frac{1}{2}\cdot\Big(\big(\omega \Pmodels \psi\big) - 
   \big(\omega\Jmodels\psi\big)\Big).
\end{array} \]

\noindent This relationship suggests that one can define covariance
more generally, of the form $\mathsl{Cov}\big(\omega, \psi\big)$, for
a general evidence multiset $\psi$, via the difference between Pearl
and Jeffrey validity of $\psi$.
\end{remark}


\begin{lemma}
\label{MatchingEvidenceBoundLem}
Let $\omega\in\Dst(X)$ be an arbitrary distribution.
\begin{enumerate}
\item Let $\psi\in\natMlt\big(\Pred(X)\big)$ be matching evidence, so
  that $\sum_{p\in\supp(\psi)} p \leq \one$. Both the Jeffrey and
  Pearl validities $\omega \Jmodels \psi$ and $\omega \Pmodels \psi$
  are in the unit interval $[0,1]$. Explicitly:
\[ \begin{array}{rclcrcl}
\omega \Jmodels \psi
& \leq &
1
& \qquad\mbox{and}\qquad &
\omega \Pmodels \psi
& \leq &
1.
\end{array} \]

\item Let $P \subseteq \Pred(X)$ be a finite subset of predicates that
  add up to one: $\sum_{p\in P} p = \one$. For each $K\geq 1$, the
  sums of Jeffrey and Pearl validities of all (perfectly matching)
  evidence over $P$ of size $K$ add up to one and form distributions
  of the form:
\[ \begin{array}{rcl}
\displaystyle\sum_{\varphi\in\natMlt[K](P)} 
   \big(\omega\Jmodels \varphi\big)\,\bigket{\varphi}
& = &
\displaystyle\sum_{\varphi\in\natMlt[K](P)} \coefm{\varphi} \cdot
   \prod_{p\in P} \big(\omega\models p\big)^{\varphi(p)}\,\bigket{\varphi}
\\[+1.2em]
\displaystyle\sum_{\varphi\in\natMlt[K](P)} 
   \big(\omega\Pmodels \varphi\big)\,\bigket{\varphi}
& = &
\displaystyle\sum_{\varphi\in\natMlt[K](P)} \coefm{\varphi} \cdot
    \big(\omega\models \bigandthen_{p\in P}p^{\varphi(p)}\big)\,\bigket{\varphi}.
\end{array} \]
\end{enumerate}
\end{lemma}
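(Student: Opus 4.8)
The plan is to derive all four statements from the Multinomial Theorem~\eqref{MultinomMltTheoremEqn}, feeding it two different kinds of input. For the Jeffrey claims I would plug the scalar validities $s_{p} \coloneqq \big(\omega\models p\big)$ into the identity, whereas for the Pearl claims I would plug in the pointwise values $p(x)$ and only afterwards take validity. The unifying observation is that, because we are dealing with predicates, all the values involved are non-negative, so every summand in a multinomial expansion is non-negative; this is exactly what converts the multinomial \emph{equality} into the desired \emph{bounds} and \emph{sums}.

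For Part~1, Jeffrey case, I would set $s_{p} = \big(\omega\models p\big)$ for $p\in\supp(\psi)$. Each $s_{p}\in[0,1]$, and by linearity (Lemma~\ref{ValidityLem}~\eqref{ValidityLemLin}) together with the matching hypothesis $\sum_{p}p\leq\one$ one has $\sum_{p\in\supp(\psi)} s_{p} = \big(\omega\models\sum_{p}p\big) \leq \big(\omega\models\one\big) = 1$. Expanding $\big(\sum_{p}s_{p}\big)^{K}$ by the Multinomial Theorem over the index set $\supp(\psi)$, the quantity $\omega\Jmodels\psi = \coefm{\psi}\cdot\prod_{p}s_{p}^{\psi(p)}$ is precisely the summand indexed by $\psi$ itself, a multiset of size $K=\|\psi\|$ with elements in $\supp(\psi)$. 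Since all summands are $\geq 0$, this single term is bounded by the full sum $\big(\sum_{p}s_{p}\big)^{K}\leq 1$. For the Pearl case I would run the same bound pointwise: for each $x$, with $t_{p}=p(x)\geq 0$ and $\sum_{p}t_{p} = \big(\sum_{p}p\big)(x)\leq 1$, the Multinomial Theorem gives $\coefm{\psi}\cdot\prod_{p}p(x)^{\psi(p)} \leq \big(\sum_{p}p(x)\big)^{K}\leq 1$, that is, $\coefm{\psi}\cdot\evidand\psi \leq \one$ as predicates. Taking validity and using monotonicity and linearity (Lemma~\ref{ValidityLem}~\eqref{ValidityLemLin},~\eqref{ValidityLemOrd},~\eqref{ValidityLemOne}) yields $\omega\Pmodels\psi = \big(\omega\models \coefm{\psi}\cdot\evidand\psi\big) \leq \big(\omega\models\one\big) = 1$. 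The lower bound $0$ is immediate in both cases.

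For Part~2 the two displayed equalities are just the unfolded definitions~\eqref{JeffreyValidity} and~\eqref{PearlValidity}, so the real content is that the coefficients sum to one. In the Jeffrey sum, the Multinomial Theorem over the index set $P$ with $r_{p}=\big(\omega\models p\big)$ collapses $\sum_{\varphi\in\natMlt[K](P)}\coefm{\varphi}\prod_{p}(\omega\models p)^{\varphi(p)}$ to $\big(\sum_{p\in P}\omega\models p\big)^{K} = \big(\omega\models\sum_{p\in P}p\big)^{K} = \big(\omega\models\one\big)^{K} = 1$, using $\sum_{p\in P}p=\one$. In the Pearl sum, I would first pull the outer sum through validity by linearity (Lemma~\ref{ValidityLem}~\eqref{ValidityLemLin}), reducing the claim to showing that the factor $\sum_{\varphi\in\natMlt[K](P)}\coefm{\varphi}\cdot\bigandthen_{p\in P}p^{\varphi(p)}$ equals $\one$; evaluating at an arbitrary $x$ and applying the Multinomial Theorem pointwise gives $\big(\sum_{p\in P}p(x)\big)^{K} = \one(x)^{K} = 1$, so the factor is $\one$ and its validity is $1$. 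Non-negativity of every coefficient is clear, so both expressions genuinely define distributions.

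The only genuinely delicate point is the bookkeeping that identifies the abstract index set $\{1,\dots,n\}$ in~\eqref{MultinomMltTheoremEqn} with the finite predicate set $P$, respectively with $\supp(\psi)$, and, in Part~1, the observation that $\psi$ is merely one multiset among all those of size $K=\|\psi\|$, so that its term sits inside the full multinomial expansion. Once the non-negativity of the individual terms is noted --- and this is exactly where the restriction to predicates rather than signed observations is used --- bounding a single term by the total sum is automatic, and everything else is routine manipulation of validity.
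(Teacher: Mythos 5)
Your proposal is correct and follows essentially the same route as the paper: bounding the single multinomial term by the full expansion of $\big(\sum_{p}\omega\models p\big)^{K}$ for Jeffrey, running the same bound pointwise to get $\coefm{\psi}\cdot\evidand\psi\leq\one$ for Pearl, and invoking the Multinomial Theorem with $\sum_{p\in P}p=\one$ for Part~2. Your write-up of Part~2 is somewhat more explicit than the paper's one-line appeal to the Multinomial Theorem, but the argument is the same.
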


\begin{proof}
\begin{enumerate}
\item Let the evidence multiset $\psi$ have size $K$, that is, $\|\psi\| =
K$.  We use the Multinomial Theorem~\ref{MultinomMltTheoremEqn}. In
the Jeffrey case this gives:
\[ \begin{array}{rcl}
\omega \Jmodels \psi
& \smash{\stackrel{\eqref{JeffreyValidity}}{=}} &
\displaystyle \coefm{\psi} \cdot \prod_{p\in\supp(\psi)}
   \big(\omega\models p\big)^{\psi(p)}
\\[+1.0em]
& \leq &
\displaystyle\sum_{\varphi\in\natMlt[K](\supp(\psi))} \coefm{\varphi} \cdot
   \prod_{p\in\supp(\psi)} \big(\omega\models p\big)^{\varphi(p)}
\\[+1.4em]
& \smash{\stackrel{\eqref{MultinomMltTheoremEqn}}{=}} &
\displaystyle\left(\sum_{p\in\supp(\psi)} \omega\models p\right)^{K}
\\[+1.4em]
& = &
\displaystyle\left(\omega\models\sum_{p\in\supp(\psi)} p\right)^{K}
   \hspace*{1em}\mbox{by Lemma~\ref{ValidityLem}~\eqref{ValidityLemLin}}
\\[+1.2em]
& \leq &
\Big(\omega\models\one\Big)^{K} 
   \hspace*{5.3em}\mbox{since $\psi$ is matching}
\\[+0.4em]
& = &
1^{K}
\hspace*{\arraycolsep}=\hspace*{\arraycolsep} 1.
\end{array} \]

For the Pearl case we write $q \coloneqq \bigandthen_{p} p^{\psi(p)}$
for the conjunction. We now use the Multinomial Theorem to show that
$\coefm{\psi}\cdot q$ is pointwise below $\one$, in:
\[ \begin{array}{rcl}
\Big(\coefm{\psi}\cdot q\Big)(x)
\hspace*{\arraycolsep}=\hspace*{\arraycolsep}
\displaystyle \coefm{\psi} \cdot q(x)
& = &
\displaystyle \coefm{\psi} \cdot \prod_{p\in\supp(\psi)} p(x)^{\psi(p)}
\\[+1.2em]
& \leq &
\displaystyle \sum_{\varphi\in\natMlt[K](\supp(\psi))} \coefm{\varphi} \cdot
   \prod_{p\in\supp(\psi)} p(x)^{\varphi(p)}
\\
& \smash{\stackrel{\eqref{MultinomMltTheoremEqn}}{=}} &
\displaystyle\left(\sum_{p\in\supp(\psi)} p(x)\right)^{K}
\\[+1.4em]
& \leq &
1^{K} \hspace*{6em}\mbox{since $\psi$ is matching}
\\
& = &
1.
\end{array} \]
   
\noindent Now we hare done by
Lemma~\ref{ValidityLem}~\eqref{ValidityLemLin}:
\[ \begin{array}{rcccccccl}
\omega \Pmodels \psi
& \smash{\stackrel{\eqref{PearlValidity}}{=}} &
\coefm{\psi}\cdot \Big(\omega \models q\Big)
& = &
\omega \models \coefm{\psi}\cdot q
& \leq &
\omega \models \one
& = &
1.
\end{array} \]

\item Again, by the Multinomial Theorem the probabilities
  $\omega\Jmodels\varphi$ and $\omega\Pmodels\varphi$, for all
  perfectly matching evidence $\varphi\in\natMlt[K](P)$, now add up to
  precisely one. \QED
\end{enumerate}
\end{proof}

The match requirement is necessary in order to get validities below
one. Consider, for instance the uniform distribution $\omega =
\frac{1}{2}\ket{a} + \frac{1}{2}\ket{b}$ with non-matching evidence
$\psi = 2\ket{p} + 3\ket{q}$, for $p = \indic{a} +
\frac{1}{2}\cdot\indic{b}$ and $q = \frac{4}{5}\cdot\indic{a} +
\frac{1}{2}\cdot\indic{b}$. Then $\omega\Jmodels\psi =
\frac{19773}{12800} > 1$, but $< 2$, and $\omega\Pmodels\psi =
\frac{2173}{800} > 2$.

\ignore{

X = Space("a", "b")
w = uniform_distribution(X, frac=True)
p = DPred([ one_frac, Frac(1,2) ], X)
q = DPred([ Frac(4,5), Frac(1,2) ], X)
psi = DState([2,3], Space(p,q))
jv = w.Jeffrey_validity(psi)
print( jv, jv.as_float() )
pv = w.Pearl_validity(psi)
print( pv, pv.as_float() )

# 19773/12800 1.544765625
# 2173/800 2.71625

}

Jeffrey validity is sometimes expressed via log-likelihood. One
motivation to switch to (natural) logarithmic formulations is that the
multiplication in~\eqref{JeffreyValidity} may lead to very small
numbers, so small that rounding errors start to kick in. Another
motivation is that sum formulas may be easier to maximise, for
instance by taking derivatives --- since the derivative of a sum is a
sum of derivatives.

\begin{lemma}
\label{JeffreyValLogLem}
Let $\omega, \omega'\in\Dst(X)$ be two distributions, with an evidence
multiset $\psi \in \natMlt\big(\Fact(X)\big)$. Then:
\[ \begin{array}{rcl}
\omega\Jmodels\psi \,\leq\, \omega'\Jmodels\psi
& \;\Longleftrightarrow\; &
\displaystyle
   \flrn(\psi) \models \ln\left(\frac{\omega\models -}{\omega'\models -}\right)
\,\leq\, 0.
\end{array} \]
\end{lemma}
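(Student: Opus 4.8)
The plan is to unfold both sides of the claimed equivalence and to observe that it is nothing more than the strict monotonicity of $\ln$ (equivalently $\exp$) applied after cancelling a common positive constant. Writing $K = \|\psi\|$, the Jeffrey validity from~\eqref{JeffreyValidity} is $\omega\Jmodels\psi = \coefm{\psi}\cdot\prod_{p\in\supp(\psi)}(\omega\models p)^{\psi(p)}$, and likewise for $\omega'$. The multinomial coefficient $\coefm{\psi}$ depends only on $\psi$ and is a strictly positive integer, so it occurs identically on both sides. Hence $\omega\Jmodels\psi \leq \omega'\Jmodels\psi$ is equivalent to the inequality between the bare products $\prod_{p\in\supp(\psi)}(\omega\models p)^{\psi(p)} \leq \prod_{p\in\supp(\psi)}(\omega'\models p)^{\psi(p)}$.

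Before proceeding I would record the standing positivity assumption implicit in any log-likelihood formulation, namely $\omega\models p > 0$ and $\omega'\models p > 0$ for every $p\in\supp(\psi)$, so that all the logarithms below are defined on $\pR$. Applying $\ln$, which is a strictly increasing bijection $\pR \to \R$, turns the product inequality into the sum inequality $\sum_{p\in\supp(\psi)}\psi(p)\cdot\ln(\omega\models p) \leq \sum_{p\in\supp(\psi)}\psi(p)\cdot\ln(\omega'\models p)$. Crucially, since $\ln$ and $\exp$ are mutually inverse order-isomorphisms, this step is reversible; it is precisely this reversibility that upgrades the argument from an implication to the stated biconditional. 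Collecting the two sums into one and using $\ln a - \ln b = \ln(a/b)$ rewrites the condition as $\sum_{p\in\supp(\psi)}\psi(p)\cdot\ln\!\big(\frac{\omega\models p}{\omega'\models p}\big) \leq 0$.

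It then remains to match this last expression with the right-hand side of the lemma. Dividing by $K = \|\psi\| > 0$ leaves the inequality unchanged, and by the definition of frequentist learning~\eqref{FlrnEqn} the distribution $\flrn(\psi) = \sum_{p\in\supp(\psi)}\frac{\psi(p)}{K}\ket{p}$ lives on $\Fact(X)$. Feeding the observation $p \mapsto \ln\!\big(\frac{\omega\models p}{\omega'\models p}\big)$ into the validity formula~\eqref{ValidityKetEqn} gives exactly $\flrn(\psi)\models\ln\!\big(\frac{\omega\models -}{\omega'\models -}\big) = \frac{1}{K}\sum_{p\in\supp(\psi)}\psi(p)\cdot\ln\!\big(\frac{\omega\models p}{\omega'\models p}\big)$, so the $\leq 0$ condition on this validity is equivalent to the sum condition derived above. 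I do not expect any real obstacle here: the entire argument is a chain of reversible steps, and the only point requiring care is the well-definedness of the logarithms, i.e.\ the positivity of the validities $\omega\models p$ and $\omega'\models p$, which is the natural non-degeneracy hypothesis behind switching to log-likelihoods.
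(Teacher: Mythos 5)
Your proof is correct and follows essentially the same route as the paper's: both arguments cancel the multinomial coefficient, apply $\ln$ as an order-isomorphism to turn the product inequality into a weighted sum of logarithms, divide by $\|\psi\|$, and recognise the result as the validity of the log-ratio observation in $\flrn(\psi)$. The only (harmless) differences are that you cancel $\coefm{\psi}$ before taking logs rather than after, and that you make explicit the positivity assumption $\omega\models p > 0$, $\omega'\models p > 0$ which the paper leaves implicit.
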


\begin{proof}
We use that the (natural) logarithm $\ln \colon \pR \rightarrow \R$
preserves and reflects the order: $a \leq b$ iff $\ln(a) \leq
\ln(b)$. We additionally use that the logarithm sends multiplications
to sums.
\[ \begin{array}[b]{rcl}
\lefteqn{\omega\Jmodels\psi
\,\leq\,
\omega'\Jmodels\psi}
\\[+0.2em]
& \Longleftrightarrow &
\ln\big(\omega\Jmodels\psi\big)
\,\leq\,
\ln\big(\omega'\Jmodels\psi\big)
\\[+0.2em]
& \smash{\stackrel{\eqref{JeffreyValidity}}{\Longleftrightarrow}} &
\displaystyle
\ln\left(\coefm{\psi} \cdot
   \prod_{p\in\supp(\psi)} \big(\omega\models p\big)^{\psi(p)}\right)
\,\leq\,
\ln\left(\coefm{\psi} \cdot
   \prod_{p\in\supp(\psi)} \big(\omega'\models p\big)^{\psi(p)}\right)
\\[+1.4em]
& \Longleftrightarrow &
\displaystyle\ln\big(\coefm{\psi}\big) + 
   \sum_{p\in\supp(\psi)} \psi(p)\cdot \ln\big(\omega\models p\big)
\\[+1.0em]
& & \qquad \,\leq\, \displaystyle
\ln\big(\coefm{\psi}\big) + 
\sum_{p\in\supp(\psi)} \psi(p)\cdot \ln\big(\omega'\models p\big)
\\[+0.4em]
& \Longleftrightarrow &
\displaystyle\sum_{p\in\supp(\psi)} 
   \frac{\psi(p)}{\|\psi\|} \cdot \ln\big(\omega\models p\big)
\,\leq\,
\sum_{p\in\supp(\psi)} \frac{\psi(p)}{\|\psi\|} \cdot 
   \ln\big(\omega'\models p\big)
\\[+1.4em]
& \Longleftrightarrow &
\flrn(\psi) \models \ln\big(\omega\models -\big)
\,\leq\,
\flrn(\psi) \models \ln\big(\omega'\models -\big).
\\[+0.4em]
& \Longleftrightarrow &
\flrn(\psi) \models \Big(\ln\big(\omega\models -\big) - 
   \ln\big(\omega'\models -\big)\Big)
\,\leq\,
0
\\[+0.2em]
& \Longleftrightarrow &
\displaystyle
\flrn(\psi) \models \ln\left(\frac{\omega\models -}{\omega'\models -}\right)
\,\leq\, 0.
\end{array} \eqno{\QEDbox} \]
\end{proof}

\subsection{Jeffrey validity of point evidence}\label{PointEvSubsec}

In the remainder of this section we concentrate on point evidence and
its Jeffrey validity. But first, we introduce the multinomial
distribution, in general, multivariate form.  It captures the
probabilities associated with drawing (with replacement) multiple
balls from an urn, filled with balls of various colours.  The urn is
represented by a distribution $\omega\in\Dst(X)$, where $X$ is the set
of colours. Thus, the probability of drawing a ball of colour $x\in X$
is given by $\omega(x)$. We are interested in drawing multiple balls,
of a fixed size $K$. These draws are represented by a multiset
$\varphi\in\natMlt[K](X)$. The multinomial distribution
$\multinomial[K](\omega) \in \Dst\big(\natMlt[K](X)\big)$ assigns
probabilities to such draws. It can be described as follows.
\begin{equation}
\label{MultinomialEqn}
\begin{array}{rcl}
\multinomial[K](\omega)
\hspace*{\arraycolsep}\coloneqq\hspace*{\arraycolsep}
\Dst(\acc)\big(\omega^{K}\big)
& \smash{\stackrel{\eqref{DstFunEqn}}{=}} & 
\displaystyle\sum_{\vec{x}\in X^{K}} \, \omega^{K}(\vec{x})
   \, \bigket{\acc(\vec{x})}
\\[+1em]
& = &
\displaystyle\sum_{\varphi\in\natMlt[K](X)} \, 
   \sum_{\vec{x}\in\acc^{-1}(\varphi)} \, \omega^{K}(\vec{x})\,\bigket{\varphi}.
\\[+1.2em]
& = &
\displaystyle\sum_{\varphi\in\natMlt[K](X)} \, \coefm{\varphi} \cdot
   \prod_{x\in X} \omega(x)^{\varphi(x)}\,\bigket{\varphi}.
\end{array}
\end{equation}

\noindent For instance, for $\omega = \frac{1}{2}\ket{R} +
\frac{1}{3}\ket{G} + \frac{1}{6}\ket{B}$ the multinomial distribution
of draws of size three is:
\[ \begin{array}{l}
\multinomial[3](\omega)
\hspace*{\arraycolsep}=\hspace*{\arraycolsep}
\frac{1}{8}\Bigket{3\ket{R}} + 
\frac{1}{4}\Bigket{2\ket{R} + 1\ket{G}} + 
\frac{1}{6}\Bigket{1\ket{R} + 2\ket{G}} + 
\frac{1}{27}\Bigket{3\ket{G}}
\\[+0.2em]
\qquad + \,
\frac{1}{8}\Bigket{2\ket{R} + 1\ket{B}} + 
\frac{1}{6}\Bigket{1\ket{R} + 1\ket{G} + 1\ket{B}} + 
\frac{1}{18}\Bigket{2\ket{G} + 1\ket{B}} 
\\[+0.2em]
\qquad + \,
\frac{1}{24}\Bigket{1\ket{R} + 2\ket{B}} + 
\frac{1}{36}\Bigket{1\ket{G} + 2\ket{B}} + 
\frac{1}{216}\Bigket{3\ket{B}}.
\end{array} \]

\ignore{

X = Space("R", "G", "B")
w = DState([Frac(1,2), Frac(1,3), Frac(1,6)], X)
print( Multinomial(3)(w) )

1/8|3|R>> + 
1/4|2|R> + 1|G>> + 
1/6|1|R> + 2|G>> + 
1/27|3|G>> + 
1/8|2|R> + 1|B>> + 
1/6|1|R> + 1|G> + 1|B>> + 
1/18|2|G> + 1|B>> + 
1/24|1|R> + 2|B>> + 
1/36|1|G> + 2|B>> + 
1/216|3|B>>

}

\noindent Each draw, as multiset of size~3, is inside a big ket, with
the probability of that draw written before the big ket.

\begin{lemma}
\label{JeffreyPointMultinomialLem}
Let $\psi\in\natMlt[K](X)$ be point evidence, where we identify $\psi =
\sum_{x} \psi(x)\ket{x}$ with the multiset of point predicates
$\sum_{x} \psi(x)\bigket{\indic{x}}$.
\begin{enumerate}
\item For a distribution $\omega\in\Dst(X)$,
\[ \begin{array}{rcl}
\omega \Jmodels \psi
& = &
\multinomial[K](\omega)(\psi).
\end{array} \]

\item The distribution that gives the highest Jeffrey validity to the
point evidence $\psi$ is $\flrn(\psi)$.
\end{enumerate}
\end{lemma}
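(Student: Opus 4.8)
The plan is to prove both parts by unfolding the definition~\eqref{JeffreyValidity} of Jeffrey validity for point predicates, and then treating the second, maximisation claim as a standard maximum-likelihood argument built on Gibbs' inequality.

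For the first part I would start from~\eqref{JeffreyValidity}, which for the point-predicate evidence reads $\omega\Jmodels\psi = \coefm{\psi}\cdot\prod_{x\in\supp(\psi)}\big(\omega\models\indic{x}\big)^{\psi(x)}$. Applying Lemma~\ref{ValidityLem}~\eqref{ValidityLemOne}, namely $\omega\models\indic{x} = \omega(x)$, rewrites this as $\coefm{\psi}\cdot\prod_{x\in\supp(\psi)}\omega(x)^{\psi(x)}$. Since $\omega(x)^{0}=1$, the product over $\supp(\psi)$ coincides with the product over all of $X$, so the expression is exactly the coefficient of $\ket{\psi}$ in the last line of~\eqref{MultinomialEqn}, that is, $\multinomial[K](\omega)(\psi)$. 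This part is immediate from the definitions.

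For the second part, observe that the factor $\coefm{\psi}$ is independent of $\omega$, so maximising $\omega\mapsto\omega\Jmodels\psi$ over $\Dst(X)$ amounts to maximising $\prod_{x\in\supp(\psi)}\omega(x)^{\psi(x)}$ subject to $\sum_{x}\omega(x)=1$. If $\omega(x)=0$ for some $x\in\supp(\psi)$ the validity is $0$, hence not maximal, so I may assume $\omega$ is strictly positive on $\supp(\psi)$ and pass to logarithms. Writing $\rho\coloneqq\flrn(\psi)$, so that $\rho(x)=\psi(x)/K$ by~\eqref{FlrnEqn}, and dividing the log-likelihood by the positive constant $K$, the claim reduces to the comparison $\sum_{x}\rho(x)\ln\omega(x)\leq\sum_{x}\rho(x)\ln\rho(x)$, which is precisely the non-negativity $\DKL(\rho\,\|\,\omega)\geq 0$. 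I would establish this directly from Jensen's inequality applied to the strictly concave logarithm:
\[ \sum_{x\in\supp(\psi)}\rho(x)\,\ln\frac{\omega(x)}{\rho(x)}
   \;\leq\; \ln\Big(\sum_{x\in\supp(\psi)}\rho(x)\cdot\frac{\omega(x)}{\rho(x)}\Big)
   \;=\; \ln\Big(\sum_{x\in\supp(\psi)}\omega(x)\Big)
   \;\leq\; \ln(1) \;=\; 0. \]
Multiplying by $K$ and re-exponentiating yields $\prod_{x}\omega(x)^{\psi(x)}\leq\prod_{x}\rho(x)^{\psi(x)}$, i.e.\ $\omega\Jmodels\psi\leq\flrn(\psi)\Jmodels\psi$.

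The only genuine obstacle is the bookkeeping around the support, not the inequality itself. The final step $\leq\ln(1)$ relies on $\sum_{x\in\supp(\psi)}\omega(x)\leq 1$, which is where any mass that $\omega$ places outside $\supp(\psi)$ is harmlessly discarded; this is what makes $\flrn(\psi)$ the maximiser even though competitors may spread over a larger support. Strict concavity of $\ln$ forces equality in the Jensen step only when $\omega(x)/\rho(x)$ is constant on $\supp(\psi)$, and together with $\sum_{x}\omega(x)=1$ this pins down $\omega=\rho=\flrn(\psi)$, justifying the uniqueness implicit in the phrase ``the distribution'' in the statement. (A Lagrange-multiplier computation on $\sum_{x}\psi(x)\ln\omega(x)$ would give the same critical point $\omega(x)=\psi(x)/K$, but the Gibbs argument has the advantage of establishing the global bound and uniqueness in one stroke.)
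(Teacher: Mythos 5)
Your proof of part~1 is exactly the paper's: unfold~\eqref{JeffreyValidity}, apply $\omega\models\indic{x}=\omega(x)$ from Lemma~\ref{ValidityLem}~\eqref{ValidityLemOne}, and recognise the multinomial formula~\eqref{MultinomialEqn}. For part~2, however, you take a genuinely different route. The paper does not prove this item at all: it declares it a standard result, cites \cite[Ex.~17.5]{KollerF09} and the appendix of~\cite{Jacobs23c}, and points to the Lagrange-multiplier method used in its own appendix (for Theorem~\ref{UpdateGainThm}). You instead give a self-contained argument via Gibbs' inequality: drop the $\omega$-independent factor $\coefm{\psi}$, pass to logarithms, and bound $\sum_{x}\flrn(\psi)(x)\ln\big(\omega(x)/\flrn(\psi)(x)\big)$ by Jensen applied to the concave $\ln$, with the extra care that mass of $\omega$ outside $\supp(\psi)$ only weakens the bound via $\sum_{x\in\supp(\psi)}\omega(x)\leq 1$. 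Your argument is correct and arguably buys more than the cited approach: the Lagrange computation only locates a stationary point of the constrained log-likelihood and in principle needs a separate check that it is a global maximum, whereas the Jensen step yields the global inequality outright and, via strict concavity, the uniqueness of the maximiser --- which the lemma's phrase ``the distribution'' tacitly asserts but the paper never justifies. The cost is negligible; if anything, your route harmonises with the paper's later uses of Jensen's inequality (e.g.\ in Lemma~\ref{DKLchannelLem} and the proof of Lemma~\ref{SumIncreaseLem}).
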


\begin{proof}
\begin{enumerate}
\item We use the equation $\omega\models\indic{x} = \omega(x)$ in:
\[ \begin{array}[b]{rcl}
\omega \Jmodels \psi
& \smash{\stackrel{\eqref{JeffreyValidity}}{=}} &
\displaystyle \coefm{\psi} \cdot \prod_{x\in X} 
   \big(\omega\models \indic{x}\big)^{\psi(x)}
\\[+1.0em]
& = &
\displaystyle \coefm{\psi} \cdot \prod_{x\in X} \omega(x)^{\psi(x)}
\hspace*{\arraycolsep}\,\smash{\stackrel{\eqref{MultinomialEqn}}{=}}\,\hspace*{\arraycolsep}
\multinomial[K](\omega)(\psi).
\end{array} \]

\item This is a standard result, see
  \textit{e.g.}~\cite[Ex.~17.5]{KollerF09}, or the appendix
  of~\cite{Jacobs23c}, which one proves via the Lagrange multiplier
  method (as in the appendix). \QED
\end{enumerate}
\end{proof}

Pearl validity of the form~\eqref{PearlValidity} does not make much
sense for point evidence because it becomes zero as soon as there is
more than one point, since $\indic{x} \andthen \indic{y} = \zero$ when
$x \neq y$. However, in Section~\ref{ChannelSec}, we shall see that
point evidence in the presence of channels does make sense for Pearl
validity.

For future use we show how Jeffrey validity of point evidence is
related to Kullback-Leibler divergence $\DKL$. This divergence is a
standard method for measuring the difference between two
distributions. Its definition (here) uses the natural logarithm
$\ln$. Explicitly, for $\omega,\rho\in\Dst(X)$,
\begin{equation}
\label{KLdivEqn}
\begin{array}{rcl}
\DKL(\omega,\rho)
& \coloneqq &
\displaystyle\sum_{x\in X}\, \omega(x) \cdot \displaystyle
   \ln\left(\frac{\omega(x)}{\rho(x)}\right).
\end{array}
\end{equation}

\noindent The convention is that $r\cdot \ln(r) = 0$ when $r =
0$. This is not a metric, since KL-divergence $\DKL$ is not symmetric:
$\DKL(\omega,\rho) \neq \DKL(\rho,\omega)$, in general. One does have
$\DKL(\omega,\rho) \geq 0$, which can be shown via Jensen's
inequality, and $\DKL(\omega,\rho) = 0$ if and only if $\omega=\rho$.

\begin{lemma}[From~\cite{JacobsS23b}]
\label{KLdivMulnomOrderLem}
Let $\varphi\in\natMlt[K](X)$ be a given multiset of size
$K>0$.  Consider two distributions $\omega,\omega'\in\Dst(X)$ with the
same support as $\varphi$. Then:
\[ \begin{array}{rcl}
\omega\Jmodels\varphi \leq \omega'\Jmodels\varphi
& \Longleftrightarrow &
\multinomial[K]\big(\omega\big)(\varphi)
  \leq \multinomial[K]\big(\omega'\big)(\varphi)
\\[0.2em]
& \Longleftrightarrow &
\DKL\big(\flrn(\varphi), \, \omega\big)
  \geq \DKL\big(\flrn(\varphi), \, \omega'\big).
\end{array} \]

\noindent The first equivalence $\Longleftrightarrow$ is a direct
consequence of Lemma~\ref{JeffreyPointMultinomialLem}. The second
equivalence is the main statement.
\end{lemma}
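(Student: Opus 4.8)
The plan is to treat the two equivalences separately. The first one is immediate: Lemma~\ref{JeffreyPointMultinomialLem} gives $\omega\Jmodels\varphi = \multinomial[K]\big(\omega\big)(\varphi)$, and the same identity applied to $\omega'$ gives $\omega'\Jmodels\varphi = \multinomial[K]\big(\omega'\big)(\varphi)$. Hence the inequality $\omega\Jmodels\varphi \leq \omega'\Jmodels\varphi$ is literally the inequality $\multinomial[K]\big(\omega\big)(\varphi) \leq \multinomial[K]\big(\omega'\big)(\varphi)$, and nothing further is needed.

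For the second (main) equivalence, the idea is to show that, up to an additive constant depending only on $\varphi$, the divergence $\DKL\big(\flrn(\varphi),\omega\big)$ equals $-\tfrac{1}{K}\ln\big(\multinomial[K]\big(\omega\big)(\varphi)\big)$. First I would unfold the multinomial probability via~\eqref{MultinomialEqn} as $\multinomial[K]\big(\omega\big)(\varphi) = \coefm{\varphi}\cdot\prod_{x\in X}\omega(x)^{\varphi(x)}$ and take logarithms, obtaining $\ln\big(\multinomial[K]\big(\omega\big)(\varphi)\big) = \ln\coefm{\varphi} + \sum_{x}\varphi(x)\ln\omega(x)$. Next, writing $f \coloneqq \flrn(\varphi)$, so that $f(x) = \varphi(x)/K$ by~\eqref{FlrnEqn} and $\|\varphi\| = K$, I would expand the definition~\eqref{KLdivEqn}:
\[ \DKL\big(\flrn(\varphi),\omega\big) \;=\; \sum_{x} f(x)\ln f(x) \,-\, \frac{1}{K}\sum_{x}\varphi(x)\ln\omega(x). \]
Substituting the previous expression for $\sum_x \varphi(x)\ln\omega(x)$ then yields
\[ \DKL\big(\flrn(\varphi),\omega\big) \;=\; C_{\varphi} \,-\, \frac{1}{K}\ln\big(\multinomial[K]\big(\omega\big)(\varphi)\big), \qquad C_{\varphi} \coloneqq \sum_{x} f(x)\ln f(x) + \frac{1}{K}\ln\coefm{\varphi}, \]
where $C_{\varphi}$ depends only on $\varphi$ and not on the second distribution.

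The conclusion is then a one-line order argument. Since the same constant $C_{\varphi}$ occurs for both $\omega$ and $\omega'$, subtracting cancels it, giving $\DKL\big(\flrn(\varphi),\omega\big) - \DKL\big(\flrn(\varphi),\omega'\big) = -\tfrac{1}{K}\big(\ln(\multinomial[K]\big(\omega\big)(\varphi)) - \ln(\multinomial[K]\big(\omega'\big)(\varphi))\big)$. Because $K > 0$ and $\ln$ is strictly increasing, the inequality $\multinomial[K]\big(\omega\big)(\varphi) \leq \multinomial[K]\big(\omega'\big)(\varphi)$ is equivalent to $\DKL\big(\flrn(\varphi),\omega\big) \geq \DKL\big(\flrn(\varphi),\omega'\big)$, the direction reversing because of the minus sign. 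I expect the only delicate point to be well-definedness rather than the algebra: every logarithm and every factor must be finite and positive. This is exactly what the hypothesis that $\omega$ and $\omega'$ share the support of $\varphi$ secures --- for $x\in\supp(\varphi)$ we have $\omega(x),\omega'(x) > 0$, so $\ln\omega(x)$ is finite and $\prod_x\omega(x)^{\varphi(x)} > 0$, while for $x\notin\supp(\varphi)$ the factor is $\omega(x)^{0}=1$ and the corresponding divergence term vanishes under the convention that $r\ln r = 0$ at $r=0$. Together with $K>0$, this makes every step above legitimate.
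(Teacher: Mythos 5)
Your proof is correct and follows essentially the same route as the paper: both take logarithms of the multinomial probability, cancel the $\ln\coefm{\varphi}$ term, rescale by $\nicefrac{1}{K}$, and add the $\varphi$-only entropy term to recognise the KL-divergence, with the support hypothesis guaranteeing all logarithms are finite. The only difference is presentational --- you package the computation as an explicit affine identity $\DKL\big(\flrn(\varphi),\omega\big) = C_{\varphi} - \tfrac{1}{K}\ln\big(\multinomial[K](\omega)(\varphi)\big)$ and then compare, whereas the paper runs the same steps as a chain of equivalences of inequalities.
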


This result makes perfect sense: the lower the (Jeffrey) validity of
point evidence $\varphi$ is in distribution $\omega$, the higher the
divergence is between the normalised distribution $\flrn(\varphi)$ and
$\omega$. Very briefly: low validity corresponds to high divergence.

\begin{proof}
We use, as first step below, that the logarithm $\ln$ preserves and
reflects the order and that it sends multiplications to sums.
\[ \hspace*{-0.5em}\begin{array}[b]{rcl}
\lefteqn{\multinomial[K]\big(\omega\big)(\varphi)
  \leq \multinomial[K]\big(\omega'\big)(\varphi)}
\\[+0.2em]
& \Longleftrightarrow &
\ln\Big(\multinomial[K]\big(\omega\big)(\varphi)\Big)
  \leq \ln\Big(\multinomial[K]\big(\omega'\big)(\varphi)\Big)
\\[+0.2em]
& \smash{\stackrel{\eqref{MultinomialEqn}}{\Longleftrightarrow}} &
\displaystyle
\ln\left(\coefm{\varphi} \cdot \prod_{x\in X}\, \omega(x)^{\varphi(x)}\right)
   \leq
\ln\left(\coefm{\varphi} \cdot \prod_{x\in X}\, \omega'(x)^{\varphi(x)}\right)
\\[+1.0em]
& \Longleftrightarrow &
\displaystyle
\ln\Big(\coefm{\varphi}\Big) \,+\,
   \sum_{x\in X}\, \varphi(x) \cdot \ln\Big(\omega(x)\Big)
   \leq
\ln\Big(\coefm{\varphi}\Big) \,+\,
   \sum_{x\in X}\, \varphi(x) \cdot \ln\Big(\omega'(x)\Big)
\\
& \Longleftrightarrow &
	\displaystyle
	- \sum_{x\in X}\, \frac{\varphi(x)}{\|\varphi\|} \cdot 
	\ln\big(\omega(x)\big)
	\geq 
	- \sum_{x\in X}\, \frac{\varphi(x)}{\|\varphi\|} \cdot 
	\ln\big(\omega'(x)\big)
\\[+1.2em]
& \Longleftrightarrow &
\displaystyle
\sum_{x\in X}\, \flrn(\varphi)(x) \cdot \ln\Big(\flrn(\varphi)(x)\Big) 
   - \sum_{x\in X}\, \flrn(\varphi)(x) \cdot \ln\big(\omega(x)\big)
\\[+1.2em]
& & \hspace*{2em} \displaystyle \geq\,
\sum_{x\in X}\, \flrn(\varphi)(x) \cdot \ln\Big(\flrn(\varphi)(x)\Big)
  - \sum_{x\in X}\, \flrn(\varphi)(x) \cdot \ln\big(\omega'(x)\big)
\\[+0.8em]
& \Longleftrightarrow &
\displaystyle \sum_{x\in X}\, \flrn(\varphi)(x) \cdot 
\ln\left(\frac{\flrn(\varphi)(x)}{\omega(x)}\right)
\geq 
\sum_{x\in X}\, \flrn(\varphi)(x) \cdot 
   \ln\left(\frac{\flrn(\varphi)(x)}{\omega'(x)}\right)
\\[+1.0em]
& \smash{\stackrel{\eqref{KLdivEqn}}{\Longleftrightarrow}} &
\DKL\big(\flrn(\varphi), \, \omega\big)
   \geq \DKL\big(\flrn(\varphi), \, \omega'\big).
\end{array} \eqno{\QEDbox} \]
\end{proof}

\section{Updating distributions}\label{UpdateSec}

This section first recalls how to update a distribution with a single
predicate, basically as in~\eqref{UpdateKetEqn}. Then it introduces
two update mechanisms, for Jeffrey and for Pearl, about updating a
distribution with evidence, in the form of a multiset of predicates.

\begin{definition}
\label{PredicateUpdateDef}
Let $\omega\in\Dst(X)$ be distribution and $p\colon X \rightarrow
\nnR$ be a factor with non-zero validity $\omega\models p$. We write
$\omega\update{p} \in \Dst(X)$ for the new, Bayesian update of
$\omega$ obtained as:
\begin{equation}
\label{UpdateEqn}
\begin{array}{rcl}
\omega\update{p}
& \coloneqq &
\displaystyle\sum_{x\in X}\, \frac{\omega(x)\cdot p(x)}{\omega\models p}
   \bigket{x}.
\end{array}
\end{equation}

\noindent Whenever we write $\omega\update{p}$ we implicitly assume
$\omega\models p \neq 0$.
\end{definition}

We have seen an example of how this updating works
in~\eqref{PostestUpdateEqn}, by incorporating the predicate into the
distribution.

\begin{example}
\label{PhysicsEx}
We can now describe the physical model from
Subsection~\ref{PhysicsSubsec} in a mathematically precise
manner. There we started with a distribution $\omega =
\frac{1}{2}\ket{L} + \frac{1}{3}\ket{M} + \frac{1}{6}\ket{R}$
corresponding to the division of water flows over three pipes, labeled
as left ($L$), middle ($M$) and right ($R$).

During the first update, the middle pipe was blocked. This can be
captured by the sharp predicate $\big(\indic{M}\big)^{\bot} =
\indic{\{L,R\}}$ corresponding to the subset $\{L,R\} \subseteq
\{L,M,R\}$ of open pipes. This sharp predicate has validity:
\[ \begin{array}{rcccccccl}
\omega \models \indic{\{L,R\}}
& = &
\displaystyle\sum_{x\in\{L,M,R\}} \omega(x) \cdot \indic{\{L,R\}}(x)
& = &
\omega(L) + \omega(R)
& = &
\frac{1}{2} + \frac{1}{6}
& = &
\frac{2}{3}.
\end{array} \]

\noindent Updating with this predicate yields the distribution written
as $\omega'$ in Subsection~\ref{PhysicsSubsec}. Explicitly:
\[ \begin{array}{rcl}
\omega'
\hspace*{\arraycolsep}\coloneqq\hspace*{\arraycolsep}
\omega\update{\indic{\{L,R\}}}
& \smash{\stackrel{\eqref{UpdateEqn}}{=}} &
\displaystyle\sum_{x\in\{L,M,R\}} 
   \frac{\omega(x) \cdot \indic{\{L,R\}}(x)}{\omega \models \indic{\{L,R\}}}
   \ket{x}
\\[+1.4em]
& = &
\displaystyle
\frac{\nicefrac{1}{2}\cdot 1}{\nicefrac{2}{3}}\ket{L} + 
   \frac{\nicefrac{1}{3}\cdot 0}{\nicefrac{2}{3}}\ket{M} + 
   \frac{\nicefrac{1}{6}\cdot 1}{\nicefrac{2}{3}}\ket{R}
\hspace*{\arraycolsep}=\hspace*{\arraycolsep}
\textstyle\frac{3}{4}\ket{L} + \frac{1}{4}\ket{R}.
\end{array} \]

\noindent The second distribution $\omega''$ in
Subsection~\ref{PhysicsSubsec} is obtained by updating $\omega$ with
the fuzzy predicate $p = \frac{2}{3}\cdot\indic{L} +
\frac{1}{3}\cdot\indic{M} + \frac{1}{2}\cdot\indic{R}$. It captures
the openness of the three taps. This leads to a validity
$\omega\models p = \frac{19}{36}$. It is called a normalisation factor
in Subsection~\ref{PhysicsSubsec}.  The calculation of $\omega''$
given there fits the above definition~\eqref{UpdateEqn} of
the Bayesian update.
\end{example}

Bayesian updating satisfies several basic, standard properties that
are easy to prove.

\begin{lemma}
\label{PredicateUpdateLem}
Let $\omega\in\Dst(X)$ and $p,q\in\Fact(X)$ be factors.
\begin{enumerate}
\item \label{PredicateUpdateLemConj} $\omega\update{\one} = \omega$ and
  $\omega\update{p\andthen q} = \omega\update{p}\update{q}$; as a result, the order
  of multiple updates is irrelevant;

\item \label{PredicateUpdateLemPoint} $\omega\update{\indic{x}} =
  1\ket{x}$, for each $x\in\supp(\omega)$;

\item \label{PredicateUpdateLemScal} $\omega\update{s\cdot p} =
  \omega\update{p}$ for a scalar $s > 0$;

\item \label{PredicateUpdateLemProd} The product rule holds:
\[ \begin{array}{rcl}
\omega\update{p} \models q
& \,=\, &
\displaystyle\frac{\omega \models p\andthen q}{\omega\models p}.
\end{array} \]

\item \label{PredicateUpdateLemBayes} Bayes' rule holds:
\[ \begin{array}{rcl}
\omega\update{p} \models q
& \,=\, &
\displaystyle\frac{(\omega\update{q} \models p) \cdot (\omega \models q)}
   {\omega\models p}.
\end{array} \]

\item \label{PredicateUpdateLemIter} For a sequence of factors $p_{1},
  \ldots, p_{n} \in \Fact(X)$ one has:
\[ \begin{array}{rcl}
\lefteqn{\omega \otimes \big(\omega\update{p_1}\big) \otimes 
   \big(\omega\update{p_1}\update{p_2}\big) \otimes \cdots \otimes
   \big(\omega\update{p_1}\cdots\update{p_{n-1}}\big) \models 
   p_{1}\otimes \cdots \otimes p_{n}}
\\[+0.2em]
& = &
\omega \models p_{1} \andthen \cdots \andthen p_{n}. \hspace*{18em}
\\[+0.2em]
& = &
\frac{1}{n!}\cdot\Big(\omega \Pmodels\, \sum_{i} 1\ket{p_i}\Big).
\end{array} \]
\end{enumerate}
\end{lemma}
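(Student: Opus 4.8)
The plan is to prove the two stated equalities separately, treating the first (a telescoping identity) as the real content and the second (the rewriting through Pearl validity) as bookkeeping against the definitions.

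For the first equality I would begin by collapsing the tensor. The left-hand validity is taken in a product distribution against the parallel conjunction $p_1 \otimes \cdots \otimes p_n$, so Lemma~\ref{ValidityLem}~\eqref{ValidityLemTensor} factors it into a product of $n$ scalar validities:
\[
\prod_{i=1}^{n} \Big(\omega\update{p_1}\cdots\update{p_{i-1}} \models p_i\Big),
\]
where the $i=1$ term is simply $\omega \models p_1$ (empty update). Next I would use item~\eqref{PredicateUpdateLemConj} to fuse each iterated update into a single update with a conjunction, $\omega\update{p_1}\cdots\update{p_{i-1}} = \omega\update{p_1 \andthen \cdots \andthen p_{i-1}}$, and then apply the product rule of item~\eqref{PredicateUpdateLemProd} to each factor. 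Writing $V_i \coloneqq \omega \models p_1 \andthen \cdots \andthen p_i$ with the convention $V_0 = \omega \models \one = 1$, the $i$-th factor becomes exactly the ratio $V_i / V_{i-1}$, and the product telescopes to $V_n / V_0 = \omega \models p_1 \andthen \cdots \andthen p_n$, which is the right-hand side.

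For the second equality I would expand the Pearl validity of $\psi = \sum_i 1\ket{p_i}$ straight from its definition~\eqref{PearlValidity}, namely $\omega \Pmodels \psi = \coefm{\psi} \cdot \big(\omega \models \evidand\psi\big)$. When the $p_i$ are distinct the multiset $\psi$ consists of $n$ elements each of multiplicity one, so $\coefm{\psi} = n!$ and $\evidand\psi = p_1 \andthen \cdots \andthen p_n$; dividing by $n!$ recovers $\omega \models p_1 \andthen \cdots \andthen p_n$ and closes the chain. I would flag one caveat here: if some $p_i$ coincide, the conjunction still equals $\evidand\psi$, but the multinomial coefficient drops below $n!$ (as in the running medical example, where $\coefm{\psi}=3$ rather than $3!$), so the factor $\tfrac{1}{n!}$ is correct precisely under the distinctness assumption, which I would state explicitly.

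The main subtlety is well-definedness rather than algebra. The telescoping tacitly requires each denominator $V_0, \ldots, V_{n-1}$ to be nonzero, so that the intermediate updates $\omega\update{p_1 \andthen \cdots \andthen p_{i-1}}$ exist and each product-rule step is legitimate. Here $V_0 = \omega \models \one = 1$, and $V_1, \ldots, V_{n-1} \neq 0$ is exactly what the paper's standing convention (``whenever we write $\omega\update{p}$ we assume $\omega \models p \neq 0$'') guarantees, since the left-hand side already displays the updates $\omega\update{p_1}\cdots\update{p_{i-1}}$ up to $i = n$. With these nonvanishing partial validities in hand the telescoping is unconditional, and $V_n$ itself is left unconstrained, so no monotonicity or positivity bound on the final conjunction is needed.
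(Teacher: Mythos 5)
Your argument is correct and is essentially the paper's own proof in unrolled form: the paper proceeds by induction on $n$, at each step splitting off the last tensor factor via Lemma~\ref{ValidityLem}~\eqref{ValidityLemTensor}, fusing the iterated updates with item~\eqref{PredicateUpdateLemConj}, and applying the product rule~\eqref{PredicateUpdateLemProd} — exactly the three ingredients of your telescoping product, with the induction hypothesis playing the role of the partial product $V_{i}/V_{0}$. Your remark on well-definedness of the intermediate updates matches the paper's standing convention and is fine.

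One point where you go beyond the paper: its proof establishes only the first equality and says nothing about the line $\frac{1}{n!}\cdot\big(\omega\Pmodels\sum_{i}1\ket{p_i}\big)$. Your caveat there is correct and worth keeping — if some $p_i$ coincide, the accumulated multiset has multiplicities above one, so $\coefm{\psi} = n!/\facto{\psi} < n!$ and the stated factor $\frac{1}{n!}$ is off by $\facto{\psi}$ (indeed, in the running example with $\psi = 2\ket{\postest}+1\ket{\negtest}$ the correct factor is $\frac{1}{3}$, not $\frac{1}{6}$). So the second equality holds as written only under the distinctness assumption you flag, and making that explicit improves on the statement rather than merely reproving it.
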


The last item generalises what we have seen in
Remark~\ref{PearlMedicalUpdateRem}. Recall
from~\eqref{JeffreyValidity} that the Jeffrey validity of the factors
in the last item takes the form $\omega \otimes \cdots \otimes \omega
\models p_{1} \otimes \cdots \otimes p_{n} = \prod_{i} \omega\models
p_{i}$. It thus involves the `independent' validity of each of the
factors $p_{i}$ in the same distribution $\omega$. The above last
item~\eqref{PredicateUpdateLemIter} shows the difference with Pearl
validity, which involves `dependent' validity of $p_{i+1}$ in the
distribution that is updated with the previous factors $p_{1}, \ldots,
p_{i}$. Thus, at state $i+1$ one takes into account what has already
been learned at previous stages $1,\ldots,i$.

\begin{proof}
We only prove the last item~\eqref{PredicateUpdateLemIter} since the
other ones are standard (see e.g.~\cite[Lemma~5]{JacobsZ21}) and easy
to show. We use induction on $n$, where the base case for $n=1$ is
obvious. Next, by Lemma~\ref{ValidityLem}~\eqref{ValidityLemTensor} and
Lemma~\ref{PredicateUpdateLem}~\eqref{PredicateUpdateLemConj},
\eqref{PredicateUpdateLemProd}:
\[ \hspace*{-2.5em}\begin{array}[b]{rcl}
\lefteqn{\omega \otimes \big(\omega\update{p_1}\big) \otimes \cdots \otimes
   \big(\omega\update{p_1}\cdots\update{p_{n-1}}\big) \otimes
   \big(\omega\update{p_1}\cdots\update{p_{n}}\big) \models 
   p_{1}\otimes \cdots \otimes p_{n} \otimes p_{n+1} }
\\[+0.2em]
& = &
\Big(\omega \otimes \big(\omega\update{p_1}\big) \otimes 
   \cdots \otimes
   \big(\omega\update{p_1}\cdots\update{p_{n-1}}\big) \models 
   p_{1}\otimes \cdots \otimes p_{n}\Big) \cdot
   \Big(\omega\update{p_1}\cdots\update{p_{n}}\models \rlap{$p_{n+1}\Big)$}
\\[+0.2em]
& \smash{\stackrel{\text{(IH)}}{=}} &
\Big(\omega \models p_{1} \andthen \cdots \andthen p_{n}\Big) \cdot
   \Big(\omega\update{p_{1} \andthen \cdots \andthen p_{n}} \models 
   p_{n+1}\Big)
\\[+0.2em]
& = &
\omega \models p_{1} \andthen \cdots \andthen p_{n} \andthen p_{n+1}.
\end{array} \eqno{\QEDbox} \]
\end{proof}

We now formulate two general results about the gains of learning.
They will be applied later for the rules of Jeffrey and Pearl. The
formulation below occurs in~\cite{Jacobs23c} and is proven there. For
completeness we reproduce the proof in the appendix.

\begin{theorem}
\label{UpdateGainThm}
Let $\omega\in\Dst(X)$ be a distribution.
\begin{enumerate}
\item \label{UpdateGainThmSingle} For a factor $p$ on $X$ with
  non-zero validity $\omega\models p$ one has:
\[ \begin{array}{rcl}
\omega\update{p} \models p
& \,\geq\, &
\omega\models p.
\end{array} \]

\item \label{UpdateGainThmSeq} Let $p_{1}, \ldots, p_{n}$ be factors
  on $X$, all with non-zero validity $\omega\models p_{i}$. The
  uniform convex sum of updates $\omega' \coloneqq \sum_{i}
  \frac{1}{n}\cdot\omega\update{p_i}$ satisfies:
\[ \begin{array}{rcl}
{\displaystyle\prod}_{i}\, (\omega' \models p_{i})
& \,\geq\, &
{\displaystyle\prod}_{i}\, (\omega \models p_{i}).
\end{array} \eqno{\QEDbox} \]
\end{enumerate}
\end{theorem}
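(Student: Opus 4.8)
The plan is to recognise the product $\prod_i(\mu\models p_i)$ as (the exponential of) a log-likelihood and to show that the Jeffrey update $\omega'$ is an ascent step for it, via an EM / minorize-maximise argument. Since part~\eqref{UpdateGainThmSingle} is exactly the instance $n=1$ of part~\eqref{UpdateGainThmSeq} (where $\omega'=\omega\update{p}$), I would first dispatch the single-factor case as a warm-up and then prove the general case. For part~\eqref{UpdateGainThmSingle}, writing $v=\omega\models p$ and using $\omega\update{p}(x)=\omega(x)p(x)/v$ gives $\omega\update{p}\models p=\big(\sum_x\omega(x)p(x)^2\big)/v$, so the claim is exactly $\sum_x\omega(x)p(x)^2\geq\big(\sum_x\omega(x)p(x)\big)^2$, i.e. non-negativity of the variance of $p$ under $\omega$ --- Jensen's inequality for the convex map $t\mapsto t^2$, or equivalently Cauchy--Schwarz.

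For part~\eqref{UpdateGainThmSeq} I would pass to logarithms and set $L(\mu):=\sum_i\ln(\mu\models p_i)$, so the goal becomes $L(\omega')\geq L(\omega)$. The key point is that the Jeffrey update factors multiplicatively through $\omega$: from $\omega'=\frac{1}{n}\sum_i\omega\update{p_i}$ and $\omega\update{p_i}(x)=\omega(x)p_i(x)/v_i$ (with $v_i:=\omega\models p_i$), each single-factor posterior $\omega\update{p_i}$ is itself a probability distribution on $X$. This lets me rewrite, for any distribution $\mu$ supported within $\supp(\omega)$,
\[ \frac{\mu\models p_i}{v_i} \;=\; \sum_{x}\omega\update{p_i}(x)\cdot\frac{\mu(x)}{\omega(x)}, \]
an expectation of $\mu/\omega$ against $\omega\update{p_i}$.

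Applying Jensen to the concave logarithm then produces the minorant
\[ L(\mu)-L(\omega) \;=\; \sum_i\ln\frac{\mu\models p_i}{v_i} \;\geq\; \sum_i\sum_x\omega\update{p_i}(x)\,\ln\frac{\mu(x)}{\omega(x)} \;=\; n\sum_x\omega'(x)\,\ln\frac{\mu(x)}{\omega(x)}, \]
where the last equality uses $\sum_i\omega\update{p_i}(x)=n\,\omega'(x)$. Evaluating this lower bound at $\mu=\omega'$ turns the right-hand side into $n\cdot\DKL(\omega',\omega)$, which is non-negative (as recorded after~\eqref{KLdivEqn}); hence $L(\omega')\geq L(\omega)$, and exponentiating gives the claim. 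Equivalently, that surrogate is maximised over distributions $\mu$ at $\mu=\omega'$ by Gibbs' inequality, which explains why the uniform mixture is precisely the right ascent direction.

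The main obstacle --- really the single idea that makes everything go through --- is spotting the correct Jensen step: interpreting $\frac{\mu\models p_i}{v_i}$ as an expectation against the single-factor posterior $\omega\update{p_i}$ rather than against $\omega$ itself. This is what linearises $L$ in $\ln\mu$ and collapses the sum over $i$ into the Jeffrey update $\omega'$. After that only non-negativity of KL-divergence is needed, and no delicate estimate on the individual validities $\omega\models p_i$ (nor the product rule) is required. I would take care to restrict to $\supp(\omega)$ throughout, so that $\mu/\omega$ and the logarithms are well defined; this is harmless since $\omega'$ is supported within $\supp(\omega)$.
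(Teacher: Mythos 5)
Your proposal is correct, and for part~\eqref{UpdateGainThmSeq} it is a genuinely cleaner route than the paper's. The paper proves the theorem in the appendix via an abstract sum-increase lemma in the style of Baum--Petrie--Soules--Weiss: one forms the surrogate $G(\omega,\mu)=\sum_{\vec{y}}F(\omega,\vec{y})\ln F(\mu,\vec{y})$, shows by Jensen that any $\mu$ with $G(\omega,\mu)\geq G(\omega,\omega)$ increases the target sum, and then identifies the \emph{argmax} of $G(\omega,-)$ by a Lagrange-multiplier computation --- carried out separately for the single-factor case and for $n=2$, with the general $n$ only asserted to ``work similarly'' at the cost of more book-keeping. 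Your argument uses the very same Jensen step (applied to $\ln$ of the ratio $\mu(x)/\omega(x)$ averaged against the single-factor posterior $\omega\update{p_i}$), but you bypass the argmax computation entirely: you only need that the minorant $n\sum_x\omega'(x)\ln\big(\mu(x)/\omega(x)\big)$, which vanishes at $\mu=\omega$, is non-negative at $\mu=\omega'$, and that is exactly $n\cdot\DKL(\omega',\omega)\geq 0$. This buys you the general-$n$ statement in one uniform stroke, makes the role of the Jeffrey mixture $\omega'$ transparent (it is the distribution matching the surrogate's weights, hence optimal by Gibbs), and reduces part~\eqref{UpdateGainThmSingle} to the one-line observation $\expec_{\omega}[p^2]\geq\big(\expec_{\omega}[p]\big)^2$, where the paper again runs the full Lagrange machinery. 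Your support bookkeeping is also handled correctly: the identity $\frac{\mu\models p_i}{\omega\models p_i}=\sum_x\omega\update{p_i}(x)\frac{\mu(x)}{\omega(x)}$ needs $\supp(\mu)\subseteq\supp(\omega)$, which holds for $\mu=\omega'$, and terms with $\omega'(x)=0$ drop out since then every $\omega\update{p_i}(x)=0$. The only cosmetic remark is that the two proofs are not as far apart as they may look: your minorant is, up to the constant $\prod_i(\omega\models p_i)$, exactly the paper's $G(\omega,\mu)-G(\omega,\omega)$, so what you have really done is replace the paper's optimisation step by the observation that one particular feasible point already suffices.
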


The first item expresses that in a Bayesian update of a distribution
with evidence / factor $p$ this $p$ becomes `more true' in the updated
posterior distribution, than in the original prior distribution. This
is a benefit that one expects from learning.  The second item is more
subtle and expresses an increase of multiplied validities.

We now give general formulations for the Jeffrey and Pearl update
mechanisms, with evidence. We have used the vertical bar $|$ for
Bayesian updating in the notation $\omega\update{p}$. For the Jeffrey,
and Pearl versions we make subtle variations: we change the vertical
bar $|$ into $\smash{\JupdateSign}$ for the Jeffrey update and into
$\smash{\PupdateSign}$ for the Pearl update, referring respectively to
the letters `J' and `P'.

\begin{definition}
\label{EvicenceUpdateDef}
Let $\omega\in\Dst(X)$ be a distribution with evidence
$\psi\in\natMlt\big(\Fact(X)\big)$.
\begin{enumerate}
\item \label{EvicenceUpdateDefJeffrey} The Jeffrey update $\omega
  \Jupdate{\psi}\in \Dst(X)$ of the distribution $\omega$ with the
  evidence $\psi$ is defined as a convex sum of Bayesian updates with
  separate predicates:
\begin{equation}
\label{EvicenceUpdateDefJeffreyEqn}
\begin{array}{rcl}
\omega \Jupdate{\psi}
& \coloneqq &
\displaystyle\sum_{p\in\supp(\psi)} \, \flrn(\psi)(p) \cdot \omega\update{p}.
\end{array}
\end{equation}

\item \label{EvicenceUpdateDefPearl} The Pearl update $\omega
  \Pupdate{\psi}\in \Dst(X)$ is the Bayesian update with a single
  factor, namely with the conjunction $\evidand\psi =
  \bigandthen_{p\in\supp(\psi)} p^{\psi(p)}$, as in:
\begin{equation}
\label{EvicenceUpdateDefPearlEqn}
\begin{array}{rcccl}
\omega \Pupdate{\psi}
& \coloneqq &
\omega\update{\evidand\psi}
& = &
\omega\update{\andthen_{p}\, p^{\psi(p)}}.
\end{array}
\end{equation}
\end{enumerate}
\end{definition}

For singleton evidence $1\ket{p}$ one has $\omega\Jupdate{1\ket{p}} =
\omega\Pupdate{1\ket{p}} = \omega\update{p}$, so Jeffrey and Pearl
updates coincide in this trivial case with a Bayesian update. The
differences between the update modes arise for multiple pieces of
evidence.

\begin{figure}
\begin{center}
\includegraphics[scale=0.5]{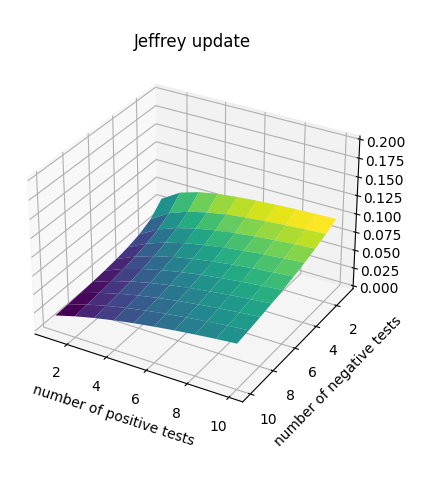}
\hspace*{2em}
\includegraphics[scale=0.5]{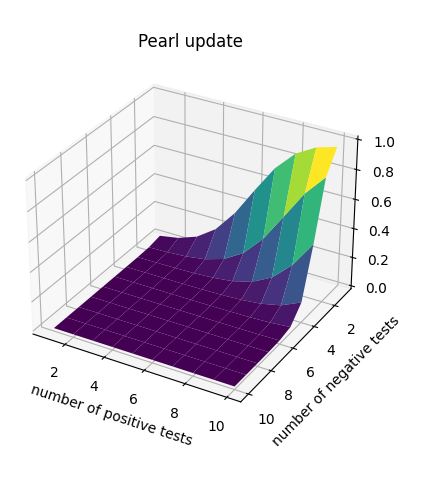}
\end{center}
\caption{Posterior disease probabilities in the medical test scenario
  from the introduction, for the two update mechanisms of Jeffrey and
  Pearl, with 100 different versions of evidence $i\bigket{\postest} +
  j\ket{\negtest}$, where $i,j\in\{1,\ldots,10\}$.}
\label{UpdateFig}
\end{figure}

Equations~\eqref{JeffreyMedUpdate} and~\eqref{PearlMedUpdate} provide
illustrations of these two update mechanisms.  Figure~\ref{UpdateFig}
contains plots of the updated (posterior) disease probabilities, for
Jeffrey and Pearl, using evidence of the form $i\bigket{\postest} +
j\ket{\negtest}$, with $i$ positive and $j$ negative tests, for all
values $i,j\in\{1,\ldots,10\}$. We see that in comparison with the
validities in Figure~\ref{ValidityFig} there are now considerable
differences between Jeffrey and Pearl.  In the Jeffrey case the plot
is rather flat, since the posterior disease probability is a convex
combination of the form:
\[ \begin{array}{rcl}
\lefteqn{\textstyle \frac{i}{i+j}\cdot \omega\update{\postest}(d) +
   \frac{j}{i+j}\cdot \omega\update{\negtest}(d),
   \qquad \mbox{for $1 \leq i,j \leq 10$, 
   see~\eqref{EvicenceUpdateDefJeffreyEqn}}}
\\[+0.3em]
& = &
\frac{i}{i+j}\cdot \frac{9}{85} + \frac{j}{i+j}\cdot \frac{1}{115}
   \hspace*{4em}\mbox{see~\eqref{PostestUpdateEqn} and \eqref{NegtestUpdateEqn}.}
\end{array} \]

In the Pearl case, on the right in Figure~\ref{UpdateFig}, we see more
variation. The relevant formula is not as simple as in the Jeffrey
case. The Pearl posterior disease probability is highest when the
evidence contains a high number $i$ of positive tests and a low number
$j$ of negative tests. This makes sense.

We proceed with some key validity increase --- and other ---
properties of these updates with evidence.  We begin with the Jeffrey
case.

\begin{theorem}
\label{JeffreyEvicenceUpdateThm}
Let $\omega\in\Dst(X)$ be a distribution with evidence
$\psi\in\natMlt\big(\Fact(X)\big)$.
\begin{enumerate}
\item \label{JeffreyEvicenceUpdateThmInc} Jeffrey updating increases
  the Jeffrey validity:
\[ \begin{array}{rcl}
\omega\Jupdate{\psi} \Jmodels \psi
& \,\geq\, &
\omega \Jmodels \psi.
\end{array} \]

\item \label{JeffreyEvicenceUpdateThmPoint} In the special case that
  $\psi$ is point evidence, inhabiting $\natMlt(X)$, Jeffrey updating
  gives a reduction of KL-divergence:
\[ \begin{array}{rcl}
\DKL\big(\flrn(\psi), \, \omega\Jupdate{\psi}\big)
& \,\leq\, &
\DKL\big(\flrn(\psi), \, \omega\big).
\end{array} \]

\item \label{JeffreyEvicenceUpdateThmScal} More of the same evidence
  $\psi\in\natMlt\big(\Fact(X)\big)$ does not change Jeffrey updating:
  for each $n\geq 1$,
\[ \begin{array}{rclcrcl}
\omega\Jupdate{n\cdot\psi}
& = &
\omega \Jupdate{\psi}
& \qquad\mbox{where}\qquad &
n\cdot\psi
& = &
\psi + \cdots + \psi.
\end{array} \]

\item \label{JeffreyEvicenceUpdateThmOrder} Jeffrey updating is
  sensitive to the order of updating: for other evidence
  $\chi\in\natMlt\big(\Fact(X)\big)$, in general,
\[ \begin{array}{rcl}
\omega\Jupdate{\psi}\,\Jupdate{\chi}
& \,\neq\, &
\omega\Jupdate{\chi}\,\Jupdate{\psi}.
\end{array} \]

\item \label{JeffreyEvicenceUpdateThmNothing} Let $\omega$ be
  `fractional', that is, the probablities in $\omega$ are fractions,
  and let $N\in\NNO$ be such that $N\cdot\omega$ is a
  multiset. Consider this $N\cdot\omega \in \natMlt(X)$ as point
  evidence. Then, updating with this point evidence has no effect:
\[ \begin{array}{rcl}
\omega\Jupdate{N\cdot\omega}
& = &
\omega.
\end{array} \]

\item \label{JeffreyEvicenceUpdateThmConv} A fractional convex sum of
  Jeffrey updates can be turned into a single Jeffrey update: for
  numbers $n_{i}\in\NNO$ with $\sum_{i} n_{i} = n$ and for evidence
  multisets $\psi_{i}\in\natMlt\big(\Fact(X)\big)$,
\[ \begin{array}{rcl}
\frac{n_1}{n}\cdot \omega\Jupdate{\psi_1} + \cdots +
   \frac{n_L}{n}\cdot \omega\Jupdate{\psi_L}
& = &
\omega\Jupdate{\psi}
\qquad\mbox{for }\; \psi = n_{1}\cdot\psi_{1} + \cdots + n_{L}\cdot\psi_{L}.
\end{array} \]
\end{enumerate}
\end{theorem}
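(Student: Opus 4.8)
The plan is to dispatch the six items separately, with item~\eqref{JeffreyEvicenceUpdateThmInc} carrying essentially all the content and the rest being short consequences or computations with $\flrn$. For~\eqref{JeffreyEvicenceUpdateThmInc} the idea is to \emph{unfold} the multiplicities of $\psi$ into a plain list: writing $K=\|\psi\|$ and listing each $p\in\supp(\psi)$ exactly $\psi(p)$ times as $p_{1},\ldots,p_{K}$, the uniform average $\sum_{i=1}^{K}\frac{1}{K}\,\omega\update{p_{i}}$ regroups into $\sum_{p\in\supp(\psi)}\flrn(\psi)(p)\cdot\omega\update{p}$, which by~\eqref{EvicenceUpdateDefJeffreyEqn} is exactly $\omega\Jupdate{\psi}$. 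Theorem~\ref{UpdateGainThm}~\eqref{UpdateGainThmSeq} applied to this list then gives $\prod_{i}(\omega\Jupdate{\psi}\models p_{i})\geq\prod_{i}(\omega\models p_{i})$, that is $\prod_{p}(\omega\Jupdate{\psi}\models p)^{\psi(p)}\geq\prod_{p}(\omega\models p)^{\psi(p)}$; multiplying through by the multinomial coefficient $\coefm{\psi}$ and reading off~\eqref{JeffreyValidity} turns this into $\omega\Jupdate{\psi}\Jmodels\psi\geq\omega\Jmodels\psi$. I expect this recognition --- that the weighted Jeffrey update is a \emph{uniform} average over the unfolded list, so that the uniform-weight gain theorem applies verbatim --- to be the one genuinely load-bearing step.

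Item~\eqref{JeffreyEvicenceUpdateThmPoint} I would then read off from~\eqref{JeffreyEvicenceUpdateThmInc} through Lemma~\ref{KLdivMulnomOrderLem}: with $\omega'=\omega\Jupdate{\psi}$ and $\varphi=\psi$, the validity increase just proved is precisely equivalent, by that lemma, to $\DKL(\flrn(\psi),\omega)\geq\DKL(\flrn(\psi),\omega\Jupdate{\psi})$, which is the assertion. The only thing to verify is the lemma's support hypothesis; this is harmless because for point evidence $\omega\update{\indic{x}}=1\ket{x}$ for $x\in\supp(\omega)$ by Lemma~\ref{PredicateUpdateLem}~\eqref{PredicateUpdateLemPoint}, so that $\omega\Jupdate{\psi}=\flrn(\psi)$; in fact this collapse already forces the right-hand divergence to be $0$, whence the inequality even holds by non-negativity of $\DKL$ alone.

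The two scale items are immediate once one notes that $\flrn$ ignores overall multiplicities. For~\eqref{JeffreyEvicenceUpdateThmScal}, $\flrn(n\cdot\psi)(p)=\frac{n\,\psi(p)}{n\,\|\psi\|}=\flrn(\psi)(p)$ with unchanged support, so the two convex sums in~\eqref{EvicenceUpdateDefJeffreyEqn} agree term by term. For~\eqref{JeffreyEvicenceUpdateThmNothing} I would reuse the collapse $\omega\Jupdate{\psi}=\flrn(\psi)$ for point evidence, together with $\flrn(N\cdot\omega)=\omega$, which holds since $\|N\cdot\omega\|=N\cdot\sum_{x}\omega(x)=N$ and hence $\flrn(N\cdot\omega)(x)=\frac{N\,\omega(x)}{N}=\omega(x)$.

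For~\eqref{JeffreyEvicenceUpdateThmConv} the approach is to expand both sides via~\eqref{EvicenceUpdateDefJeffreyEqn} and compare the coefficient of each Bayesian update $\omega\update{p}$, reducing the identity to an additivity statement for $\flrn$ on the scaled sum $\psi=\sum_{i}n_{i}\cdot\psi_{i}$. The delicate point here is the interaction of the outer weights $n_{i}/n$ with the individual sizes $\|\psi_{i}\|$ hidden inside each $\flrn(\psi_{i})$, and this size-bookkeeping is what the verification must track carefully. Finally, item~\eqref{JeffreyEvicenceUpdateThmOrder} is a non-equality and so calls only for a counterexample; since single-predicate evidence would make the Jeffrey updates collapse to Bayesian updates (which commute through $\andthen$ by Lemma~\ref{PredicateUpdateLem}~\eqref{PredicateUpdateLemConj}), I would take genuinely mixed evidence such as $\psi=1\ket{p}+1\ket{p^{\bot}}$ and $\chi=1\ket{q}+1\ket{q^{\bot}}$ on a two-element set and verify by direct calculation that $\omega\Jupdate{\psi}\,\Jupdate{\chi}$ and $\omega\Jupdate{\chi}\,\Jupdate{\psi}$ differ for suitable $p,q$.
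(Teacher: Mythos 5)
Your proof of items~\eqref{JeffreyEvicenceUpdateThmInc}--\eqref{JeffreyEvicenceUpdateThmScal} and~\eqref{JeffreyEvicenceUpdateThmNothing} follows the paper's own argument essentially verbatim: the load-bearing step is indeed the unfolding of $\psi$ into the list $p_{1},\ldots,p_{K}$ so that the weighted convex sum $\omega\Jupdate{\psi}$ is recognised as the uniform average of Theorem~\ref{UpdateGainThm}~\eqref{UpdateGainThmSeq}, and item~\eqref{JeffreyEvicenceUpdateThmPoint} is then read off through Lemma~\ref{KLdivMulnomOrderLem}. Your additional observation in item~\eqref{JeffreyEvicenceUpdateThmPoint} --- that for point evidence $\omega\Jupdate{\psi}=\sum_{x}\flrn(\psi)(x)\cdot 1\ket{x}=\flrn(\psi)$, so the posterior divergence is exactly $0$ and the inequality holds by non-negativity of $\DKL$ alone --- is correct (granted $\supp(\psi)\subseteq\supp(\omega)$, which is needed anyway for the updates to be defined) and is a shorter route than the paper's; it also quietly discharges the support hypothesis of Lemma~\ref{KLdivMulnomOrderLem}, which the paper's proof does not address.

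Two smaller points. For item~\eqref{JeffreyEvicenceUpdateThmOrder} you only describe a family of candidate counterexamples; since the claim is a non-equality you must exhibit a concrete instance (the paper computes $\omega\Jupdate{\psi}\Jupdate{\chi}\approx 0.059\ket{d}+0.941\ket{\no{d}}$ versus $0.061\ket{d}+0.939\ket{\no{d}}$ for $\psi=2\ket{\postest}+1\ket{\negtest}$ and $\chi=1\ket{\postest}+2\ket{\negtest}$). Your proposed family does contain working instances, so this is only a matter of finishing the computation. For item~\eqref{JeffreyEvicenceUpdateThmConv} your worry about the size-bookkeeping is well founded and should not be waved away: expanding both sides, the coefficient of $\omega\update{p}$ on the left is $\sum_{i}\frac{n_{i}}{n}\cdot\frac{\psi_{i}(p)}{\|\psi_{i}\|}$ while on the right it is $\big(\sum_{i}n_{i}\psi_{i}(p)\big)/\big(\sum_{i}n_{i}\|\psi_{i}\|\big)$, and these agree for all $p$ only when the sizes $\|\psi_{i}\|$ coincide (or if one reweights by $n_{i}\|\psi_{i}\|/\|\psi\|$ instead of $n_{i}/n$). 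Already $\psi_{1}=1\ket{p}$, $\psi_{2}=2\ket{q}$ with $n_{1}=n_{2}=1$ gives $\frac{1}{2}\omega\update{p}+\frac{1}{2}\omega\update{q}$ on the left against $\frac{1}{3}\omega\update{p}+\frac{2}{3}\omega\update{q}$ on the right. The paper offers no proof of this item at all, so your hesitation here is a genuine catch rather than a defect of your argument; a complete write-up should either add the equal-size hypothesis or adjust the weights.
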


Item~\eqref{JeffreyEvicenceUpdateThmNothing} says that you learn
nothing if you use the information from the distribution $\omega$ as
evidence. More informally, you learn nothing in Jeffrey style from
what you already know. The fact that $\omega$ is required to be
fractional is not essential. It is needed because we have defined
multisets of evidence to have only natural numbers as frequencies. One
can drop this limitation and allow non-negative real numbers as
frequencies. Then one can use the distribution $\omega\in\Dst(X)$ as a
multiset of point predicates itself --- without multiplication with
the number $N$ --- and write $\omega\Jupdate{\omega} = \omega$. The
same thing applies in item~\eqref{JeffreyEvicenceUpdateThmConv}.

\begin{proof}
Let the evidence multinomial $\psi\in\natMlt\big(\Fact(X)\big)$ have size $K$.
\begin{enumerate}
\item We write the multiset $\psi$ as accumulation $\psi = \acc(p_{1},
  \ldots, p_{K})$ of a sequence of factors, including multiple
  occurrences, with each $p\in\supp(\psi)$ occurring $\psi(p)$ many
  times in this sequence $p_{1}, \ldots, p_{K}$.
  Theorem~\ref{UpdateGainThm}~\eqref{UpdateGainThmSeq} tells us that:
\[ \begin{array}{rcl}
\omega' \Jmodels \psi
\hspace*{\arraycolsep}\smash{\stackrel{\eqref{JeffreyValidity}}{=}}\hspace*{\arraycolsep}
\displaystyle\coefm{\psi}\cdot\prod_{p\in\supp(\psi)} (\omega'\models p)^{\psi(p)}
& = &
\coefm{\psi}\cdot{\displaystyle\prod}_{i} \, (\omega'\models p_{i})
\\[-0.5em]
& \geq &
\coefm{\psi}\cdot{\displaystyle\prod}_{i} \, (\omega\models p_{i})
\\[+0.5em]
& = &
\displaystyle\coefm{\psi}\cdot\prod_{p\in\supp(\psi)} (\omega\models p)^{\psi(p)}
\hspace*{\arraycolsep}\smash{\stackrel{\eqref{JeffreyValidity}}{=}}\hspace*{\arraycolsep}
\omega \Jmodels \psi,
\end{array} \]

\noindent where, as defined in
Theorem~\ref{UpdateGainThm}~\eqref{UpdateGainThmSeq}:
\[ \begin{array}{rcccccccl}
\omega'
& \coloneqq &
{\displaystyle\sum}_{i} \, \frac{1}{K}\cdot \omega\update{p_i}
& = &
\displaystyle\prod_{p\in\supp(\psi)} \textstyle\frac{\psi(p)}{K}\cdot\omega\update{p}
& = &
\displaystyle\prod_{p\in\supp(\psi)} \flrn(\psi)(p)\cdot\omega\update{p}
& = &
\omega\Jupdate{\psi}.
\end{array} \]

\item This is obtained by combining the validity increase of the
  previous point with Lemma~\ref{JeffreyPointMultinomialLem}
  and~\ref{KLdivMulnomOrderLem}:
\[ \begin{array}{rcccccl}
\multinomial[K]\big(\omega\Jupdate{\psi}\big)(\psi)
& = &
\omega\Jupdate{\psi}\Jmodels\psi 
& \,\geq\, &
\omega\Jmodels\psi
& = &
\multinomial[K]\big(\omega\big)(\psi)
\end{array} \]

\noindent It gives the required divergence decrease:
\[ \begin{array}{rcl}
\DKL\big(\flrn(\psi), \, \omega\Jupdate{\psi}\big)
& \,\leq\, & 
\DKL\big(\flrn(\psi), \, \omega\big).
\end{array} \]

\item Since $\flrn\big(n\cdot\psi\big) = \flrn(\psi)$.

\item We give an illustration showing that the two orders of updating
  are not equal. We re-use the situation described in the
  introduction, with prior $\omega = \frac{1}{20}\ket{d} +
  \frac{19}{20}\ket{\no{d}}$ and positive and negative test predicates
  $\postest$ and $\negtest$. We take the original evidence $\psi =
  2\ket{\postest} + 1\ket{\negtest}$ with new evidence $\chi =
  1\ket{\postest} + 2\ket{\negtest}$.  Then:
\[ \begin{array}{rclcrcl}
\omega\Jupdate{\psi}\,\Jupdate{\chi}
& = &
0.059\ket{d} + 0.941\ket{\no{d}}
& \quad\mbox{and}\quad &
\omega\Jupdate{\chi}\,\Jupdate{\psi}
& = &
0.061\ket{d} + 0.939\ket{\no{d}}.
\end{array} \]

\ignore{

evidence1 = DState([2,1], Space(postest, negtest))
evidence2 = DState([1,2], Space(postest, negtest))
print("\nPearl val: ", prior.Pearl_validity(evidence1) )
print("\nJeffrey val: ", prior.Jeffrey_validity(evidence1) )
print("\nPearl upd: ", prior.Pearl_update(evidence1) )
print("\nJeffrey upd: ", prior.Jeffrey_update(evidence1) )
print("\nJeffrey two updates:")
print( prior.Jeffrey_update(evidence1).Jeffrey_update(evidence2) )
print( prior.Jeffrey_update(evidence2).Jeffrey_update(evidence1) )

# Jeffrey two updates:
# 0.0592|d> + 0.941|~d>
# 0.061|d> + 0.939|~d>

}

\item Suppose we can write $\omega = \sum_{i} \frac{n_{i}}{N}\ket{x_i}$, with
$n_{i}\in\NNO$ and $N = \sum_{i} n_{i}$. Then $N\cdot \omega$ is the
multiset $\sum_{i} n_{i}\ket{x_i}$, which we identify with the point evidence
$\sum_{i} n_{i}\bigket{\indic{x_i}}$. Hence, by Lemma~\ref{PredicateUpdateLem}~\eqref{PredicateUpdateLemPoint},
\[ \begin{array}{rcccccl}
\omega\Jupdate{N\cdot\omega}
& = &
{\displaystyle\sum}_{i}\, \flrn\big(N\cdot\omega\big)(x_{i}) \cdot
   \omega\update{\indic{x_i}}
& = &
{\displaystyle\sum}_{i}\, \frac{n_i}{N}\cdot 1\ket{x_i}
& = &
\omega.
\end{array} \eqno{\QEDbox} \]
\end{enumerate}
\end{proof}

We turn to Pearl's style of updating.

\begin{theorem}
\label{PearlEvicenceUpdateThm}
Let $\omega\in\Dst(X)$ be a distribution with evidence
$\psi,\chi\in\natMlt\big(\Fact(X)\big)$.
\begin{enumerate}
\item \label{PearlEvicenceUpdateThmInc} Pearl updating increases
  the Pearl validity:
\[ \begin{array}{rcl}
\omega\Pupdate{\psi} \Pmodels \psi
& \,\geq\, &
\omega \Pmodels \psi.
\end{array} \]

\item \label{PearlEvicenceUpdateThmProd} The `product' and `Bayes'
  analogues of
  Proposition~\ref{PredicateUpdateLem}~\eqref{PredicateUpdateLemProd},
  \eqref{PredicateUpdateLemBayes} hold for Pearl updating:
\[ \begin{array}{rcccl}
\omega\Pupdate{\psi} \Pmodels \chi
& = &
\displaystyle \frac{\omega \Pmodels \psi + \chi}{\omega \Pmodels \psi}
& = &
\displaystyle \frac{(\omega\Pupdate{\chi} \Pmodels \psi)\cdot
   (\omega \Pmodels \chi)}{\omega \Pmodels \psi},
\end{array} \]

\noindent where $\psi+\chi$ is the pointwise sum of multisets.

\item \label{PearlEvicenceUpdateThmComp} Pearl updates compose, as in:
\[ \begin{array}{rcl}
\omega\Pupdate{\psi}\Pupdate{\chi}
& \,=\, &
\omega\Pupdate{\psi+\chi}.
\end{array} \]

\item \label{PearlEvicenceUpdateThmOrder} Pearl updating is
  insensitive to the order of updating:
\[ \begin{array}{rcl}
\omega\Pupdate{\psi}\Pupdate{\chi}
& \,=\, &
\omega\Pupdate{\chi}\Pupdate{\psi}.
\end{array} \]

\item \label{PearlEvicenceUpdateThmNothing} One learns nothing via Pearl
  updates with uniform factors, that is, with scalars $s\cdot\one$ of
  the truth predicate, for $s > 0$. Explicitly,
\[ \begin{array}{rcl}
\omega\Pupdate{\psi}
& \,=\, &
\omega \qquad\mbox{for $\;\psi = {\displaystyle\sum}_{i}\, 
   \psi(i)\bigket{s_{i}\cdot \one}$ with $s_{i}>0$.}
\end{array} \]
\end{enumerate}
\end{theorem}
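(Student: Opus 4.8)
The plan is to reduce every clause to properties of the single-factor Bayesian update, exploiting the defining equation $\omega\Pupdate{\psi} = \omega\update{\evidand\psi}$ from Definition~\ref{EvicenceUpdateDef}~\eqref{EvicenceUpdateDefPearl} together with the combinatorial identity $\evidand(\psi+\chi) = \evidand\psi \andthen \evidand\chi$. The latter holds because, for each predicate $p$ in the combined support, the exponent simply adds, $p^{\psi(p)}\andthen p^{\chi(p)} = p^{\psi(p)+\chi(p)}$, and $\andthen$ is commutative and associative; this identity is the engine behind clauses~\eqref{PearlEvicenceUpdateThmProd}--\eqref{PearlEvicenceUpdateThmOrder}.

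For clause~\eqref{PearlEvicenceUpdateThmInc} I would unfold the Pearl validity as $\omega\Pupdate{\psi}\Pmodels\psi = \coefm{\psi}\cdot\big(\omega\update{\evidand\psi}\models\evidand\psi\big)$ and apply Theorem~\ref{UpdateGainThm}~\eqref{UpdateGainThmSingle} to the single factor $p = \evidand\psi$, which gives $\omega\update{\evidand\psi}\models\evidand\psi \geq \omega\models\evidand\psi$; multiplying by the constant $\coefm{\psi}$ yields the claim. Clause~\eqref{PearlEvicenceUpdateThmComp} follows by writing $\omega\Pupdate{\psi}\Pupdate{\chi} = (\omega\update{\evidand\psi})\update{\evidand\chi}$, collapsing the two updates via Lemma~\ref{PredicateUpdateLem}~\eqref{PredicateUpdateLemConj} into $\omega\update{\evidand\psi\andthen\evidand\chi}$, and rewriting the factor as $\evidand(\psi+\chi)$; note that the Pearl update ignores the multinomial coefficient entirely, so no coefficient bookkeeping intervenes here. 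Clause~\eqref{PearlEvicenceUpdateThmOrder} is then immediate from~\eqref{PearlEvicenceUpdateThmComp} and the commutativity of the multiset sum $\psi+\chi = \chi+\psi$.

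For clause~\eqref{PearlEvicenceUpdateThmNothing} I would first simplify $\evidand\psi$ when every factor has the form $s_{i}\cdot\one$: since $(s_{i}\cdot\one)^{\psi(i)} = s_{i}^{\psi(i)}\cdot\one$ and conjunctions of scaled truth predicates again give a scaled truth predicate, one gets $\evidand\psi = \bigandthen_{i} (s_{i}\cdot\one)^{\psi(i)} = c\cdot\one$ with $c = \prod_{i} s_{i}^{\psi(i)} > 0$. Then $\omega\Pupdate{\psi} = \omega\update{c\cdot\one} = \omega\update{\one} = \omega$, using scalar-invariance of updating from Lemma~\ref{PredicateUpdateLem}~\eqref{PredicateUpdateLemScal} followed by $\omega\update{\one}=\omega$ from~\eqref{PredicateUpdateLemConj}.

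The main obstacle is clause~\eqref{PearlEvicenceUpdateThmProd}. The engine is again the single-factor product rule, Lemma~\ref{PredicateUpdateLem}~\eqref{PredicateUpdateLemProd}, applied to $p = \evidand\psi$ and $q = \evidand\chi$: it gives $\omega\update{\evidand\psi}\models\evidand\chi = \frac{\omega\models\evidand(\psi+\chi)}{\omega\models\evidand\psi}$, using the conjunction identity in the numerator, and the second equality comes from instead invoking Bayes' rule~\eqref{PredicateUpdateLemBayes}. The delicate point is the bookkeeping of the multinomial coefficients $\coefm{\psi}$, $\coefm{\chi}$ and $\coefm{\psi+\chi}$ that decorate the Pearl validities. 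In the Bayes formulation I expect these to cancel cleanly --- the $\coefm{\psi}$ attached to $\omega\Pupdate{\chi}\Pmodels\psi$ cancelling against that in $\omega\Pmodels\psi$, and the conjunction validity of $\chi$ cancelling against $\omega\Pmodels\chi$, leaving exactly the $\coefm{\chi}$ on the left --- whereas the product formulation forces one to compare $\coefm{\psi}\cdot\coefm{\chi}$ with $\coefm{\psi+\chi}$, so this coefficient accounting is precisely the step that must be handled with care.
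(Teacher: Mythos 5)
Your proof follows the paper's own argument essentially step for step: every clause is reduced to the single\-/factor Bayesian update via $\omega\Pupdate{\psi}=\omega\update{\evidand\psi}$, the identity $\evidand(\psi+\chi)=\evidand\psi\andthen\evidand\chi$ drives clauses 2--4, clause 1 is Theorem~\ref{UpdateGainThm}~\eqref{UpdateGainThmSingle} applied to $\evidand\psi$, and clause 5 is the scalar\-/invariance argument with $\evidand\psi=s\cdot\one$. All of that is exactly what the paper does.

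The one place you go beyond the paper is the multinomial\-/coefficient bookkeeping in clause~\eqref{PearlEvicenceUpdateThmProd}, and your worry there is well founded. The paper's proof silently identifies $\omega\Pmodels\psi$ with $\omega\models\evidand\psi$, dropping the factor $\coefm{\psi}$ throughout. With the coefficients included, the Bayes\-/form equality does cancel cleanly, as you predict: the $\coefm{\psi}$'s cancel and the remaining $\coefm{\chi}$ matches the one on the left. But the product\-/form equality requires $\coefm{\psi}\cdot\coefm{\chi}=\coefm{\psi+\chi}$, which fails already for $\psi=1\ket{p}$, $\chi=1\ket{q}$ with $p\neq q$ (there $1\cdot 1\neq 2$), so the first equality of clause 2 is off by the factor $\coefm{\psi+\chi}/(\coefm{\psi}\cdot\coefm{\chi})$ unless one reads the Pearl validities in that formula without their multinomial coefficients. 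So you have not missed a step; you have located a genuine imprecision that the paper's proof papers over. Everything else in your proposal is correct as written.
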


\begin{proof}
\begin{enumerate}
\item Directly by Theorem~\ref{UpdateGainThm}~\eqref{UpdateGainThmSingle}:
\[ \begin{array}{rcccccl}
\omega\Pupdate{\psi} \Pmodels \psi
& \,=\, &
\omega\update{\evidand\psi} \models \evidand\psi
& \,\geq\, &
\omega \models \evidand\psi
& \,=\, &
\omega \Pmodels \psi.
\end{array} \]

\item The first thing to note is that $\evidand\big(\psi + \chi\big) =
  \evidand(\psi) \andthen \evidand(\chi)$. This uses $p^{n+m} = p^{n}
  \andthen p^{m}$. Then, following
  Definition~\ref{EvidenceConjunctionDef}~\eqref{EvidenceConjunctionDefEv}
  one has:
\[ \begin{array}{rcl}
\evidand\big(\psi + \chi\big)
\hspace*{\arraycolsep}=\hspace*{\arraycolsep}
\displaystyle\bigandthen_{p\in\Fact(X)}\, p^{(\psi+\chi)(p)}
& = &
\displaystyle\bigandthen_{p\in\Fact(X)}\, p^{\psi(p)+\chi(p)}
\\
& = &
\displaystyle\bigandthen_{p\in\Fact(X)}\, p^{\psi(p)} \andthen p^{\chi(p)}
\\
& = &
\displaystyle\bigandthen_{p\in\Fact(X)}\, p^{\psi(p)} \andthen 
   \displaystyle\bigandthen_{p\in\Fact(X)}\, p^{\chi(p)}
\\
& = &
\evidand(\psi) \andthen \evidand(\chi).
\end{array} \]

\noindent Now, using Proposition~\ref{PredicateUpdateLem}~\eqref{PredicateUpdateLemProd}:
\[ \begin{array}{rcl}
\omega\Pupdate{\psi} \Pmodels \chi
\hspace*{\arraycolsep}=\hspace*{\arraycolsep}
\omega\update{\evidand(\psi)} \models \evidand(\chi)
& = &
\displaystyle \frac{\omega\models\evidand(\psi) \andthen \evidand(\chi)}
   {\omega \models \evidand(\psi)}
\\[+0.8em]
& = &
\displaystyle \frac{\omega\models\evidand(\psi+\chi)}
   {\omega \models \evidand(\psi)}
\hspace*{\arraycolsep}=\hspace*{\arraycolsep}
\displaystyle \frac{\omega \Pmodels \psi + \chi}{\omega \Pmodels \psi}.
\end{array} \]

\noindent Similarly, using
Proposition~\ref{PredicateUpdateLem}~\eqref{PredicateUpdateLemBayes},
one obtains:
\[ \begin{array}{rcl}
\omega\Pupdate{\psi} \Pmodels \chi
\hspace*{\arraycolsep}=\hspace*{\arraycolsep}
\omega\update{\evidand(\psi)} \models \evidand(\chi)
& = &
\displaystyle \frac{\omega\models\evidand(\psi) \andthen \evidand(\chi)}
   {\omega \models \evidand(\psi)}
\\[+0.8em]
& = &
\displaystyle \frac{(\omega\update{\evidand(\chi)}\models\evidand(\psi)
   \cdot (\omega\models\evidand(\chi))}
   {\omega \models \evidand(\psi)}
\\[+0.8em]
& = &
\displaystyle \frac{(\omega\Pupdate{\chi} \Pmodels \psi)\cdot
   (\omega \Pmodels \chi)}{\omega \Pmodels \psi}.
\end{array} \]

\item We again use the equation $\evidand\big(\psi + \chi\big) =
  \evidand(\psi) \andthen \evidand(\chi)$, now together with
Proposition~\ref{PredicateUpdateLem}~\eqref{PredicateUpdateLemConj}, in:
\[ \begin{array}{rcccccccl}
\omega\Pupdate{\psi}\Pupdate{\chi}
& = &
\omega\update{\evidand(\psi)}\update{\evidand(\chi)}
& = &
\omega\update{\evidand(\psi) \andthen \evidand(\chi)}
& = &
\omega\update{\evidand(\psi + \chi)}
& = &
\omega\Pupdate{\psi+\chi}.
\end{array} \]

\item We use the previous point and the commutativity of addition $+$
of multisets, in:
\[ \begin{array}{rcccccl}
\omega\Pupdate{\psi}\Pupdate{\chi}
& = &
\omega\Pupdate{\psi+\chi}
& = &
\omega\Pupdate{\chi+\psi}
& = &
\omega\Pupdate{\chi}\Pupdate{\psi}.
\end{array} \]

\item Let the evidence be of the form $\psi = \sum_{i}\,
  \psi(i)\bigket{s_{i}\cdot \one}$ with $s_{i}>0$. Then $\evidand\psi
  = \mathop{\andthen_{i}} (s_{i}\cdot\one)^{\psi(i)} = s\cdot\one$, where $s =
  \prod_{i} s_{i}^{\psi(i)} > 0$. Then, by
  Proposition~\ref{PredicateUpdateLem}~\eqref{PredicateUpdateLemScal},
  \eqref{PredicateUpdateLemConj}, in:
\[ \begin{array}{rcccccccl}
\omega\Jupdate{\psi}
& = &
\omega\update{\evidand\psi}
& = &
\omega\update{s\cdot\one}
& = &
\omega\update{\one}
& = &
\omega.
\end{array} \eqno{\QEDbox} \]
\end{enumerate}
\end{proof}

We thus see that X-updating increases X-validity for $X \in
\big\{\mbox{Jeffrey}, \mbox{Pearl}\big\}$, see
Theorem~\ref{JeffreyEvicenceUpdateThm}~\eqref{JeffreyEvicenceUpdateThmInc}
and
Theorem~\ref{PearlEvicenceUpdateThm}~\eqref{PearlEvicenceUpdateThmInc}.
Such learning makes us appropriately wiser.  At the end of
Section~\ref{MedicalSolutionSec} we have seen that mixing validity and
update of different kinds may lead to a decrease of validity ---
making us less wise. We can now formulate this more precisely: for the
prior disease distribution $\omega = \frac{1}{20}\ket{d} +
\frac{19}{20}\ket{\no{d}}$ and the three-test evidence $\psi =
2\ket{\postest} + 1\ket{\negtest}$ one has validity decreases after
(the wrong kind of) updates:
\[ \begin{array}{rcccccl}
\omega\Pupdate{\psi}\Jmodels\psi
& = &
0.3081
& < &
0.3116
& = &
\omega\Jmodels\psi
\\[+0.2em]
\omega\Jupdate{\psi}\Pmodels\psi
& = &
0.2847
& < &
0.2858
& = &
\omega\Pmodels\psi.
\end{array} \]

\ignore{

print("\nDecreases via crossed use")
print( prior.Pearl_update(evidence1).Jeffrey_validity(evidence1),
       " < ", prior.Jeffrey_validity(evidence1) )
print( prior.Jeffrey_update(evidence1).Pearl_validity(evidence1),
       " < ", prior.Pearl_validity(evidence1) )

# Decreases via crossed use
# 0.3081094331480003  <  0.31157812499999993
# 0.28469309462915604  <  0.28575000000000006

}

A final question that we consider in this section is: what happens if
one tries to update a distribution $\omega\in\Dst(X)$ with
inconsistent evidence? We take this inconsistent evidence to be of the
form $\psi = 1\bigket{\indic{U}} + 1\bigket{\indic{\neg U}}$, with two
sharp predicates $\indic{U}$ and $\indic{\neg U}$, for a non-empty
subset $U\subseteq \supp(\omega)$ with non-empty complement $\neg U =
\supp(\omega) \setminus U$. The Jeffrey update with such inconsistent
evidence yields a convex combination of the two updates:
\begin{equation}
\label{InconsistentJeffreyEqn}
\begin{array}{rcl}
\omega\Jupdate{\psi}
& = &
\frac{1}{2}\cdot \omega\update{\indic{U}} +
   \frac{1}{2}\cdot \omega\update{\indic{\neg U}}.
\end{array}
\end{equation}

\noindent This can be interpreted as a `superposition' of the two
updated distributions. To be explicit, the
update $\omega\update{\indic{U}}$ is the normalised restriction of
$\omega$ to the subset $U$, as in:
\[ \begin{array}{rcccl}
\omega\update{\indic{U}}
& \smash{\stackrel{\eqref{UpdateEqn}}{=}} &
\displaystyle \sum_{x \in U} \frac{\omega(x)}{\omega\models\indic{U}}\,\bigket{x}
& = &
\displaystyle \sum_{x \in U} \frac{\omega(x)}{\sum_{y\in U}\omega(y)}\,\bigket{x}.
\end{array} \]

\noindent Thus, the two distributions $\omega\update{\indic{U}}$ and
$\omega\update{\indic{\neg U}}$ have disjoint supports, but happily
live together in the Jeffrey update~\eqref{InconsistentJeffreyEqn}.
This reinforces the idea that Jeffrey's approach involves a convex
combination of independent updates.

The Pearl update would be of the form $\omega\update{\indic{U}
  \andthen \indic{\neg U}} =
\omega\update{\indic{U}}\update{\indic{\neg U}}$.  We notice that the
conjunction $\andthen\psi = \indic{U} \andthen \indic{\neg U} =
\indic{U\cap \neg U} = \indic{\emptyset} = \zero$ yields
falsity. Updating with the predicate $\zero$ is impossible, since its
validity is $0$.

\section{Channels}\label{ChannelSec}

At this stage we have seen the basics of validity and updating
according to Jeffrey and Pearl, and how these updates make us
appropriately wiser. In the set up so far we have used a distribution
$\omega\in\Dst(X)$ on a set $X$, and evidence
$\psi\in\natMlt\big(\Fact(X)\big)$ on this same set $X$. We now
generalise this situation to allow evidence on a different set, say
$Y$. This is a common situation in which information on $Y$ is
observable, and information on $X$ is hidden, often called latent. The
connection between the sets $X$ and $Y$ happens via what is called a
channel, forming a generative model. This section outlines the role
that these channels play in probabilistic updating.

Channels formalise the idea of conditional probabilities. They carry a
rich mathematical structure that can be used in compositional
reasoning, with both sequential and parallel composition and with
reversal. The concept of a channel has emerged in various forms,
namely as conditional probability, stochastic matrix, probabilistic
classifier, Markov kernel, statistical model, conditional probability
table (in Bayesian network), probabilistic function / computation, and
finally as Kleisli map (in category theory).

We begin with the definition of a channel and with the associated
forward and backward transformation mechanisms.

\begin{definition}
\label{ChannelDef}
Let $X,Y$ be arbitrary sets.
\begin{enumerate}
\item A channel $c$ from $X$ to $Y$ is a function of the form $c
  \colon X \rightarrow \Dst(Y)$. It assigns a distribution $c(x) \in
  \Dst(Y)$ to an arbitrary element $x\in X$, and thus corresponds to a
  conditional probability $\Prob(y|x)$. We write such a channel as
  $c\colon X \chanto Y$, with a circle on the shaft of the arrow.

\item For a channel $c\colon X \chanto Y$ and a distribution
  $\omega\in\Dst(X)$ we can form the pushforward distribution $c \push
  \omega$ on $Y$, as:
\begin{equation}
\label{ChannelForwardEqn}
\begin{array}{rcl}
c \push \omega
& \coloneqq &
\displaystyle \sum_{y\in Y} \left(\sum_{x\in X}\,\omega(x)\cdot c(x)(y)\right)
   \bigket{y} \,\in\,\Dst(Y).
\end{array}
\end{equation}

\noindent This pushforward $c \push \omega$ is also called the
prediction, see Example~\ref{MedicalChannelEx} below.

\item For a channel $c\colon X \chanto Y$ and an observation $q\colon Y
  \rightarrow \R$ we can form the pullback observation $c \pull q$ on
  $X$, via:
\begin{equation}
\label{ChannelBackwardEqn}
\begin{array}{rcl}
\big(c \pull q\big)(x)
& \coloneqq &
\displaystyle \sum_{y\in Y} c(x)(y)\cdot q(y).
\end{array}
\end{equation}

\noindent This pullback operation $c \pull (-)$ restricts to factors
and to fuzzy predicates, but not to sharp predicates.

Using point predicates we can equivalently write (when the
set $X$ is finite):
\begin{equation}
\label{ChannelBackwardPointEqn}
\begin{array}{rcl}
c \pull q
& \coloneqq &
\displaystyle \sum_{x\in X} \left(\sum_{y\in Y} c(x)(y)\cdot q(y)\right)
   \cdot\indic{x}.
\end{array}
\end{equation}
\end{enumerate}
\end{definition}

The following easy result shows how forward and backward transformation
are closely related via validity. The proof follows from unravelling
the relevant definitions.

\begin{lemma}
\label{TransformationValidityLem}
For a channel $c\colon X \chanto Y$ with a distribution
$\omega\in\Dst(X)$ on its domain $X$ and an observation $q\in\Obs(Y)$
on its codomain one has the following equality of validities.
\begin{equation}
\label{TransformationValidityEqn}
\begin{array}{rcl}
\big(c \push \omega\big) \models q
& \,=\, &
\omega \models \big(c \pull q).
\end{array} 
\end{equation}
\end{lemma}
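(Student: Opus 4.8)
The plan is to prove the identity by simply unfolding both sides into explicit double sums and observing that they agree after interchanging the order of summation. Since everything is finite (distributions and observations have finite support in the relevant places, so the sums are finite), the interchange is unconditionally valid and there is no real analytic obstacle to overcome --- the entire content is bookkeeping with the three defining formulas~\eqref{ChannelForwardEqn}, \eqref{ChannelBackwardEqn} and~\eqref{ValidityKetEqn}.

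Concretely, I would start from the left-hand side and expand the outer validity $\big(c \push \omega\big) \models q$ as a sum over $Y$ using the definition of $\models$, then substitute the defining expression~\eqref{ChannelForwardEqn} for the pushforward distribution $c \push \omega$ evaluated at each $y$. This yields the nested sum
\[ \begin{array}{rcl}
\big(c \push \omega\big) \models q
& = &
\displaystyle\sum_{y\in Y} \big(c \push \omega\big)(y)\cdot q(y)
\hspace*{\arraycolsep}=\hspace*{\arraycolsep}
\displaystyle\sum_{y\in Y} \left(\sum_{x\in X} \omega(x)\cdot c(x)(y)\right)\cdot q(y).
\end{array} \]

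Next I would symmetrically expand the right-hand side, starting from the outer validity over $X$ and substituting the definition~\eqref{ChannelBackwardEqn} of the pullback factor $c \pull q$, giving
\[ \begin{array}{rcl}
\omega \models \big(c \pull q\big)
& = &
\displaystyle\sum_{x\in X} \omega(x)\cdot \big(c \pull q\big)(x)
\hspace*{\arraycolsep}=\hspace*{\arraycolsep}
\displaystyle\sum_{x\in X} \omega(x)\cdot\left(\sum_{y\in Y} c(x)(y)\cdot q(y)\right).
\end{array} \]
The two displayed expressions are term-by-term identical as double sums over $X \times Y$, with common summand $\omega(x)\cdot c(x)(y)\cdot q(y)$; the only difference is the order in which the sums are written. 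Since the sums are finite (or absolutely convergent in the countable case), Fubini's theorem lets me interchange $\sum_x$ and $\sum_y$, which closes the argument. The ``hard part,'' such as it is, amounts only to being careful that the summand $\omega(x)\cdot c(x)(y)\cdot q(y)$ really is the common term of both nested sums --- a matter of associativity and commutativity of multiplication in $\R$ --- and to noting that no positivity or convergence subtlety arises because all sums are finite.
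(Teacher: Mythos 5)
Your proof is correct and is exactly the argument the paper has in mind: the paper omits the details, stating only that the result ``follows from unravelling the relevant definitions,'' and your expansion of both sides into the common double sum $\sum_{x,y}\omega(x)\cdot c(x)(y)\cdot q(y)$ followed by a (finite, hence harmless) interchange of summation is precisely that unravelling.
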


Hidden in the medical test description in the introduction there is a
channel. We make it explicit below. In this situation the
distributions have only two elements in their support, so the
situation may look rather trivial.  But channels may involve much more
complicated distributions, where the approach of the above definition
provides a clear general methodology.

\begin{example}
\label{MedicalChannelEx}
We will redescribe the medical test from the introduction as a channel
$c\colon D \chanto T$, where $D = \{d,\no{d}\}$ is the disease set and
$T = \{p,n\}$ is the set of test outcomes. This test channel $c$ is
determined by the sensitivity of $\frac{9}{10}$ and specificity of
$\frac{3}{5}$, namely as:
\[ \begin{array}{rclcrcl}
c(d)
& = &
\frac{9}{10}\ket{p} + \frac{1}{10}\ket{n}
& \qquad\mbox{and}\qquad &
c(\no{d})
& = &
\frac{2}{5}\ket{p} + \frac{3}{5}\ket{n}.
\end{array} \]

\noindent One sees that in the first case, in presence of the disease
$d$, the channel outcome $c(d)$ gives a $\frac{9}{10}$ chance of a
positive test. Similarly, in absence of the disease $\no{d}$, the
distribution $c(\no{d})$ gives a $\frac{3}{5}$ chance of a negative
test.

We assumed a prevalence of $5\%$, corresponding to the prior disease
distribution $\omega = \frac{1}{20}\ket{d} +
\frac{19}{20}\ket{\no{d}}$. The predicted test outcome distribution is
the pushforward $c \push \omega$ of the prior along the channel.  It
is computed as:
\[ \begin{array}{rcl}
c \push \omega
& \smash{\stackrel{\eqref{ChannelForwardEqn}}{=}} &
\displaystyle\left(\sum_{x\in D} \omega(x)\cdot c(x)(p)\right)\ket{p}
   + \left(\sum_{x\in D} \omega(x)\cdot c(x)(n)\right)\ket{n}
\\[+1.2em]
& = &
\Big(\frac{1}{20}\cdot\frac{9}{10} + \frac{19}{20}\cdot\frac{2}{5}\Big)\ket{p}
+
\Big(\frac{1}{20}\cdot\frac{1}{10} + \frac{19}{20}\cdot\frac{3}{5}\Big)\ket{n}
\\[+0.6em]
& = &
\frac{17}{40}\ket{p} + \frac{23}{40}\ket{n}.
\end{array} \]

Backward transformation of (point) predicates is also relevant in this
example, since in this way we rediscover the positive / negative test
predicates $\postest$, $\negtest$ on $D$ that we used before:
\[ \begin{array}{rcl}
c \pull \indic{p}
& \smash{\stackrel{\eqref{ChannelBackwardPointEqn}}{=}} &
\displaystyle \left(\sum_{y\in T} c(d)(y)\cdot \indic{p}(y)\right)\cdot\indic{d}
   + \left(\sum_{y\in T} c(\no{d})(y)\cdot \indic{p}(y)\right)\cdot\indic{\no{d}}
\\[+1.4em]
& = &
c(d)(p)\cdot\indic{d} + c(\no{d})(p)\cdot\indic{\no{d}}
\hspace*{\arraycolsep}=\hspace*{\arraycolsep}
\frac{9}{10}\cdot\indic{d} + \frac{2}{5}\cdot\indic{\no{d}}
\hspace*{\arraycolsep}=\hspace*{\arraycolsep}
\postest
\\[+0.4em]
c \pull \indic{n}
& \smash{\stackrel{\eqref{ChannelBackwardPointEqn}}{=}} &
\displaystyle \left(\sum_{y\in T} c(d)(y)\cdot \indic{n}(y)\right)\cdot\indic{d}
   + \left(\sum_{y\in T} c(\no{d})(y)\cdot \indic{n}(y)\right)\cdot\indic{\no{d}}
\\[+1.4em]
& = &
c(d)(n)\cdot\indic{d} + c(\no{d})(n)\cdot\indic{\no{d}}
\hspace*{\arraycolsep}=\hspace*{\arraycolsep}
\frac{1}{10}\cdot\indic{d} + \frac{3}{5}\cdot\indic{\no{d}}
\hspace*{\arraycolsep}=\hspace*{\arraycolsep}
\negtest.
\end{array} \]
\end{example}

In order to be able to speak of validity and updating along a channel
we transform evidence along this channel. Then we can easily define
channel-based validity and updating.

\begin{definition}
\label{ChannelEvidenceDef}
Let $c\colon X \chanto Y$ be a channel, with a distribution
$\omega\in\Dst(X)$ on its domain $X$ and with evidence
$\psi\in\natMlt\big(\Fact(Y)\big)$ on its codomain $Y$.
\begin{enumerate}
\item \label{ChannelEvidenceDefTriplePull} We write $c \triplepull
  \psi$ for the evidence on $X$ obtained by transforming factor-wise
  along $c$. Thus:
\[ \begin{array}{rcl}
c \triplepull \psi
& \coloneqq &
\displaystyle\sum_{q\in\Fact(Y)} \, \psi(q)\bigket{c \pull q}
   \,\in\,\natMlt\big(\Fact(X)\big).
\end{array} \]

\noindent This can also be described via the functoriality of
$\natMlt$, applied to the function $c\pull(-) \colon \Fact(Y)
\rightarrow \Fact(X)$. 

\item \label{ChannelEvidenceDefVal} The Jeffrey validity of
  $\psi$ in $\omega$ along the channel $c$ is now defined as:
\begin{equation}
\label{JeffreyValidityAlong}
\begin{array}{rcl}
\omega \Jmodels c \triplepull \psi
& \,=\, &
\displaystyle\coefm{\psi}\cdot\prod_{q\in\Fact(Y)} \, 
   \big(\omega\models c \pull q\big)^{\psi(q)}
\\[+1em]
& \smash{\stackrel{\eqref{TransformationValidityEqn}}{=}} &
\displaystyle\coefm{\psi}\cdot\prod_{q\in\Fact(Y)} \, 
   \big(c \push \omega\models q\big)^{\psi(q)}
\\[+1em]
& = &
c \push \omega \Jmodels \psi.
\end{array}
\end{equation}

\noindent By construction, this Jeffrey validity satisfies an analogue
of Lemma~\ref{TransformationValidityLem}.

Similarly, the Pearl validity along the channel is defined as:
\begin{equation}
\label{PearlValidityAlong}
\begin{array}{rcl}
\omega \Pmodels c \triplepull \psi
& \,=\, &
\displaystyle\coefm{\psi}\cdot \Big(\omega \models 
   \evidand\big(c \triplepull \psi\big)\Big)
\\[+0.5em]
& = &
\displaystyle\coefm{\psi}\cdot \Big(\omega \models
   \bigandthen_{q\in\Fact(Y)} \, (c \pull q)^{\psi(q)}\Big).
\end{array}
\end{equation}

\noindent (This is \emph{not} equal to $c \push \omega \Pmodels \psi$
since transformation $c \pull (-) \colon \Fact(Y) \rightarrow
\Fact(X)$ does not preserve conjunctions $\andthen$.)

\item \label{ChannelEvidenceDefUpdate} Along the same lines, the
  Jeffrey update along a channel with evidence $\psi$ is:
\[ \begin{array}{rcl}
\omega\Jupdate{c \triplepull \psi}
& = &
\displaystyle\sum_{q\in\Fact{Y}} \, \flrn(\psi)(q)\cdot \omega\update{c \pull q}.
\end{array} \]

\noindent And the Pearl update along a channel is:
\[ \begin{array}{rcl}
\omega\Pupdate{c \triplepull \psi}
& = &
\omega\update{\andthen_{q} (c \pull q)^{\psi(q)}}.
\end{array} \]
\end{enumerate}
\end{definition}

We can now check that the Jeffrey and Pearl updates
in~\eqref{JeffreyMedUpdate} and \eqref{PearlMedUpdate} are updates
along the test channel $c\colon D \chanto T$ from
Example~\ref{MedicalChannelEx}, using the point evidence
$2\ket{\indic{p}} + 1\ket{\indic{n}}$.

The special case of updating with point evidence along a channel is
worth a closer look. The reformulations in terms of multinomials come
from~\cite{JacobsS23b}. The formulation of Jeffrey's update rule via a
dagger channel occurs in~\cite{Jacobs19c} and the associated decrease
of KL-divergence in
item~\eqref{ChannelUpdatePointEvidencePropJeffreyUpdate} was first
identified in~\cite{Jacobs21c}. In~\cite{JacobsS23b} one can also find
a description of Jeffrey updating along a channel via variational
inference.

\begin{proposition}
\label{ChannelUpdatePointEvidenceProp}
Let $c\colon X \chanto Y$ be a channel, with a distribution
$\omega\in\Dst(X)$ and with point evidence $\psi\in\natMlt[K](Y)$, so that $c
\triplepull \psi$ is $\sum_{y} \psi(y)\bigket{c \pull \indic{y}}$. Then:
\begin{enumerate}
\item \label{ChannelUpdatePointEvidencePropJeffreyValidity} The
  Jeffrey validity of $\psi$ along $c$ can be expressed as multinomial
  probability:
\[ \begin{array}{rcl}
\omega\Jmodels c \triplepull \psi
& \,=\, &
\multinomial[K]\big(c \push \omega\big)(\psi).
\end{array} \]

\item \label{ChannelUpdatePointEvidencePropPearlValidity} The Pearl
  validity of $\psi$ along $c$ is:
\[ \begin{array}{rcl}
\omega\Jmodels c \triplepull \psi
& \,=\, &
\big(\multinomial[K](c) \push \omega\big)(\psi),
\end{array} \]

\noindent where $\multinomial[K](c) \coloneqq \multinomial[K] \after c
\colon X \rightarrow \Dst(Y) \rightarrow \Dst\big(\natMlt[K](Y)\big)$.

\item \label{ChannelUpdatePointEvidencePropJeffreyUpdate} The Jeffrey
  update of $\omega$ with point evidence $\psi$ along $c$ can be
  described as:
\[ \begin{array}{rcl}
\omega\Jupdate{c \triplepull \psi}
& = &
c^{\dag}_{\omega} \push \flrn(\psi),
\end{array} \]

\noindent where $c^{\dag}_{\omega} \colon Y \chanto X$ is the
reversed `dagger' channel defined by $c^{\dag}_{\omega}(y) =
\omega\update{c \pull \indic{y}}$. Abbreviating $\omega' \coloneqq
c^{\dag}_{\omega} \push \flrn(\psi)$, we have associated validity
increases and divergence decreases:
\[ \begin{array}{rcl}
\multinomial[K]\big(c \push \omega'\big)(\psi)
\hspace*{\arraycolsep}\,=\,\hspace*{\arraycolsep}
c\push\omega'\Jmodels \psi
& \,\geq\, &
c\push\omega\Jmodels \psi
\hspace*{\arraycolsep}\,=\,\hspace*{\arraycolsep}
\multinomial[K]\big(c \push \omega\big)(\psi)
\\[+0.2em]
\DKL\big(\flrn(\psi), \, c \push \omega'\big)
& \,\leq\, &
\DKL\big(\flrn(\psi), \, c \push \omega\big).
\end{array} \]

\item \label{ChannelUpdatePointEvidencePropPearlUpdate} The Pearl
  update of $\omega$ with $\psi$ along the channel $c$ satisfies:
\[ \begin{array}{rcccl}
\omega\Pupdate{c \triplepull \psi}
& = &
\omega\update{\multinomial[K](c) \pull \indic{\psi}}
& = &
\multinomial[K](c)^{\dag}_{\omega}(\psi).
\end{array} \]

\noindent Writing $\omega' = \omega\update{\multinomial[K](c) \pull
  \indic{\psi}}$ we get:
\[ \begin{array}{rcccccl}
\big(\multinomial[K](c) \push \omega'\big)(\psi)
& \,=\, &
\omega'\Jmodels c \triplepull \psi
& \,\geq\, &
\omega\Jmodels c \triplepull \psi
& \,=\, &
\big(\multinomial[K](c) \push \omega\big)(\psi).
\end{array} \]

\end{enumerate}
\end{proposition}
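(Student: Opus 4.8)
The plan is to dispatch the four items in order, leaning throughout on two pointwise identities that translate the ``channel-plus-multinomial'' data back into the conjunctions of pullbacks already used in Definition~\ref{ChannelEvidenceDef}. For the first item I would argue directly: by~\eqref{JeffreyValidityAlong} we have $\omega\Jmodels c\triplepull\psi = c\push\omega\Jmodels\psi$, and since $\psi$ is point evidence on $Y$, the first part of Lemma~\ref{JeffreyPointMultinomialLem}, applied to the distribution $c\push\omega\in\Dst(Y)$, rewrites this as $\multinomial[K](c\push\omega)(\psi)$. For the second item (whose left-hand side I read as a Pearl validity, the $\Jmodels$ being a slip for $\Pmodels$) I would start from~\eqref{PearlValidityAlong}. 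For point evidence the relevant factors are $c\pull\indic{y}$, and~\eqref{ChannelBackwardEqn} gives the pointwise value $(c\pull\indic{y})(x)=c(x)(y)$, whence $\big(\bigandthen_{y}(c\pull\indic{y})^{\psi(y)}\big)(x)=\prod_{y}c(x)(y)^{\psi(y)}$. Multiplying by $\coefm{\psi}$ and comparing with~\eqref{MultinomialEqn} identifies this with $\multinomial[K](c(x))(\psi)=\multinomial[K](c)(x)(\psi)$; averaging over $x$ with weights $\omega(x)$, i.e.\ forming the pushforward~\eqref{ChannelForwardEqn}, turns the left side into $\omega\Pmodels c\triplepull\psi$ and the right into $\big(\multinomial[K](c)\push\omega\big)(\psi)$.

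For the third item, the defining equation $\omega\Jupdate{c\triplepull\psi}=\sum_{y}\flrn(\psi)(y)\cdot\omega\update{c\pull\indic{y}}$ from Definition~\ref{ChannelEvidenceDef} is, term by term, $\sum_{y}\flrn(\psi)(y)\cdot c^{\dag}_{\omega}(y)$, which is exactly the pushforward $c^{\dag}_{\omega}\push\flrn(\psi)$ by~\eqref{ChannelForwardEqn}; this settles the main equation. For the increase I set $\omega'\coloneqq\omega\Jupdate{c\triplepull\psi}$ and treat $c\triplepull\psi$ as ordinary evidence on $X$: Theorem~\ref{JeffreyEvicenceUpdateThm}~\eqref{JeffreyEvicenceUpdateThmInc} then yields $\omega'\Jmodels c\triplepull\psi\geq\omega\Jmodels c\triplepull\psi$, and transporting both sides through~\eqref{JeffreyValidityAlong} and the first item rewrites them as the stated multinomial probabilities. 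The divergence decrease is immediate from Lemma~\ref{KLdivMulnomOrderLem}, applied with multiset $\psi$ and distributions $c\push\omega'$ and $c\push\omega$: the validity inequality is precisely equivalent to $\DKL\big(\flrn(\psi),c\push\omega'\big)\leq\DKL\big(\flrn(\psi),c\push\omega\big)$ (one only needs $\supp(\psi)$ to lie in the supports of $c\push\omega$ and $c\push\omega'$, which is guaranteed by the nonzero-validity assumption on the updates).

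For the fourth item I would unfold $\omega\Pupdate{c\triplepull\psi}=\omega\update{\bigandthen_{y}(c\pull\indic{y})^{\psi(y)}}$ from Definition~\ref{ChannelEvidenceDef}. The key identity is that the pullback of the point predicate $\indic{\psi}$ along $\multinomial[K](c)$ reproduces the same product: by~\eqref{ChannelBackwardEqn} and the computation of the second item, $\big(\multinomial[K](c)\pull\indic{\psi}\big)(x)=\multinomial[K](c)(x)(\psi)=\coefm{\psi}\cdot\prod_{y}c(x)(y)^{\psi(y)}$, so that $\multinomial[K](c)\pull\indic{\psi}=\coefm{\psi}\cdot\bigandthen_{y}(c\pull\indic{y})^{\psi(y)}$. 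Since $\coefm{\psi}>0$, the scale-invariance of updating, Lemma~\ref{PredicateUpdateLem}~\eqref{PredicateUpdateLemScal}, absorbs this constant and identifies $\omega\update{\multinomial[K](c)\pull\indic{\psi}}$ with $\omega\Pupdate{c\triplepull\psi}$; the remaining equality $\omega\update{\multinomial[K](c)\pull\indic{\psi}}=\multinomial[K](c)^{\dag}_{\omega}(\psi)$ is just the definition of the dagger channel. The concluding validity increase is Theorem~\ref{PearlEvicenceUpdateThm}~\eqref{PearlEvicenceUpdateThmInc} for the evidence $c\triplepull\psi$, rewritten on both sides via the second item.

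The only genuinely non-bookkeeping steps are the two pointwise identities: recognising $\coefm{\psi}\cdot\prod_{y}c(x)(y)^{\psi(y)}$ as the multinomial value $\multinomial[K](c)(x)(\psi)$, and its reuse as the pullback of $\indic{\psi}$. I expect the one delicate point to be bookkeeping of the multinomial coefficient $\coefm{\psi}$, so that it lands exactly where Lemma~\ref{PredicateUpdateLem}~\eqref{PredicateUpdateLemScal} can discard it in the Pearl update; once that placement is correct, every inequality in the statement reduces to the already-established ``updating increases validity'' theorems together with Lemma~\ref{KLdivMulnomOrderLem}, and no further estimation is required.
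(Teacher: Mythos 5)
Your proof is correct and takes essentially the same route as the paper's: both rest on the pointwise identification of $\coefm{\psi}\cdot\prod_{y}c(x)(y)^{\psi(y)}$ with $\multinomial[K]\big(c(x)\big)(\psi)$ for items 1, 2 and 4 (the paper cancels $\coefm{\psi}$ inside the normalisation fraction where you invoke Lemma~\ref{PredicateUpdateLem}~\eqref{PredicateUpdateLemScal}, which is equivalent), and both obtain the inequalities from Theorem~\ref{JeffreyEvicenceUpdateThm}~\eqref{JeffreyEvicenceUpdateThmInc}, Theorem~\ref{PearlEvicenceUpdateThm}~\eqref{PearlEvicenceUpdateThmInc} and Lemma~\ref{KLdivMulnomOrderLem}. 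You also correctly read the $\Jmodels$ in item 2 as a slip for $\Pmodels$.
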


\begin{proof}
\begin{enumerate}
\item First, the Jeffrey validity can be rewritten as:
\[ \begin{array}{rcl}
\omega\Jmodels c \triplepull \psi
& \smash{\stackrel{\eqref{JeffreyValidityAlong}}{=}} &
\displaystyle \coefm{\psi} \cdot
   \prod_{y\in Y}\, \big(c \push \omega \models \indic{y}\big)^{\psi(y)}
\\[+1em]
& = &
\displaystyle \coefm{\psi} \cdot
   \prod_{y\in Y}\, \big((c \push \omega)(y)\big)^{\psi(y)}
\hspace*{\arraycolsep}\,\smash{\stackrel{\eqref{MultinomialEqn}}{=}}\,\hspace*{\arraycolsep}
\multinomial[K]\big(c \push \omega\big)(\psi).
\end{array} \]

\item For the Pearl validity we get:
\[ \begin{array}{rcl}
\omega\Pmodels c \triplepull \psi
& \smash{\stackrel{\eqref{PearlValidityAlong}}{=}} &
\displaystyle \coefm{\psi} \cdot \Big(\omega \models 
   \bigandthen_{y\in Y}\, (c \push \indic{y})^{\psi(y)}\Big)
\\[+1em]
& = &
\displaystyle \coefm{\psi} \cdot \sum_{x\in X}\,
   \omega(x) \cdot \prod_{y\in Y}\, (c \pull \indic{y})(x)^{\psi(y)}
\\[+1em]
& = &
\displaystyle \sum_{x\in X}\,
   \omega(x) \cdot \coefm{\psi} \cdot \prod_{y\in Y}\, c(x)(y)^{\psi(y)}
\\[+1em]
& \smash{\stackrel{\eqref{MultinomialEqn}}{=}} &
\displaystyle \sum_{x\in X}\,
   \omega(x) \cdot \multinomial[K]\big(c(x)\big)(\psi)
\\[+0.4em]
& = &
\big(\multinomial[K](c) \push \omega\big)(\psi),
   \qquad \mbox{where }\multinomial[K](c) = \multinomial[K] \after c.
\end{array} \]

\item We move to the Jeffrey update and compute:
\[ \begin{array}{rcl}
\omega\Jupdate{c \triplepull \psi}
& \smash{\stackrel{\eqref{EvicenceUpdateDefJeffreyEqn}}{=}} &
\displaystyle\sum_{y\in Y} \, \flrn(\psi)(y) \cdot 
   \omega\update{c \pull \indic{y}}
\\[+1em]
& = &
\displaystyle\sum_{y\in Y} \, \flrn(\psi)(y) \cdot c^{\dag}_{\omega}(y)
\hspace*{\arraycolsep}=\hspace*{\arraycolsep}
c^{\dag}_{\omega} \push \flrn(\psi).
\end{array} \]

\noindent The validity increases and divergence decreases mentioned
above follow directly from
Theorem~\ref{JeffreyEvicenceUpdateThm}~\eqref{JeffreyEvicenceUpdateThmInc},
\eqref{JeffreyEvicenceUpdateThmPoint}.

\item For Pearl update with point evidence we have:
\[ \begin{array}{rcl}
\omega\update{\multinomial[K](c) \pull \indic{\psi}}
& \smash{\stackrel{\eqref{UpdateEqn}}{=}} &
\displaystyle\sum_{x\in X}\, \frac{\omega(x)\cdot 
   (\multinomial[K](c) \pull \indic{\psi})(x)}
   {\omega \models \multinomial[K](c) \pull \indic{\psi}}\,\bigket{x}
\\[+1em]
& = &
\displaystyle\sum_{x\in X}\, \frac{\omega(x)\cdot \multinomial[K](c(x))(\psi)}
   {(\multinomial[K](c) \push \omega)(\psi)}\,\bigket{x}
\\[+1em]
& = &
\displaystyle\sum_{x\in X}\, \frac{\omega(x)\cdot \coefm{\psi} \cdot
   \prod_{y} c(x)(y)^{\psi(y)}}
   {\sum_{x} \omega(x) \cdot \coefm{\psi} \cdot 
      \prod_{y} c(x)(y)^{\psi(y)}}\,\bigket{x}
\\[+1em]
& = &
\displaystyle\sum_{x\in X}\, \frac{\omega(x)\cdot 
   (\andthen_{y} (c \pull \indic{y})^{\psi(y)})(x)}
   {\sum_{x} \omega(x) \cdot (\andthen_{y} (c \pull \indic{y})^{\psi(y)})(x)}
   \,\bigket{x}
\\[+1em]
& \smash{\stackrel{\eqref{UpdateEqn}}{=}} &
\omega\update{\andthen_{y} (c \pull \indic{y})^{\psi(y)}}
\\[+0.2em]
& = &
\omega\Pupdate{c \triplepull \psi}.
\end{array} \]

\noindent The claimed increase of multinomial probability follows from
Theorem~\ref{PearlEvicenceUpdateThm}~\eqref{PearlEvicenceUpdateThmInc}. \QED

\end{enumerate}
\end{proof}

\section{A glance at updating in predictive coding}\label{PredictiveCodingSec}


The naive picture of human learning involves a teacher pouring
knowledge into a student's brain --- as visually expressed by what is
called a Nuremberg Funnel. One more modern approach in (computational)
cognitive science is called predictive coding (or processing), see
\textit{e.g.}~the books~\cite{Clark16,Hohwy13,ParrPF22} or
articles~\cite{AllenF18,FristonK09}.  Very briefly, the idea is that
the human mind projects, evaluates and updates. Predictive coding
describes the mind basically as a prediction engine that compares its
predictions to observations, leading to internal adaptations. To quote
Friston~\cite{Friston09}: ``The Bayesian brain hypothesis uses
Bayesian probability theory to formulate perception as a constructive
process based on internal or generative models. [\ldots] In this view,
the brain is an inference machine that actively predicts and explains
its sensations. Central to this hypothesis is a probabilistic model
that can generate predictions, against which sensory samples are
tested to update beliefs about their causes.'' This testing involves
error reduction: ``\ldots the core function of the brain is simply to
minimize prediction error, where the prediction errors signal
mismatches between predicted input and the input actually
received.''~\cite{MillidgeSB22}.  Mathematically, these prediction
errors can be expressed in terms of KL-divergence~\eqref{KLdivEqn},
see \textit{e.g.}~\cite{MillidgeSB22,Penny12}. The aimed decrease of
prediction errors / KL-divergence is a form of what we call getting
wiser. Here we concentrate on this error correction aspect of
predictive coding. There is however, a wider story, especially about
active inference~\cite{ParrPF22}.

When the human mind is understood as a Bayesian inference engine, the
question comes up: how does the mind handle multiple observations, via
the rules of Jeffrey or Pearl? Predictive coding emphasises error
correction, which, in terms of reducing KL-divergence, is achieved via
the update rule of Jeffrey (for point evidence), see
Theorem~\ref{JeffreyEvicenceUpdateThm}~\eqref{JeffreyEvicenceUpdateThmPoint}
and
Proposition~\ref{ChannelUpdatePointEvidenceProp}~\eqref{ChannelUpdatePointEvidencePropJeffreyUpdate}. This
suggests that the mind is a Jeffreyan update engine. A further
indication is that Jeffrey's updating is sensitive to the order of
updating, see
Theorem~\ref{JeffreyEvicenceUpdateThm}~\eqref{JeffreyEvicenceUpdateThmOrder}.
As is well-known, humans are very sensitive to the order in which they
process information (are `primed'). On the other hand, recall that in
the medical example in the beginning we associated Pearl's approach
with a clinical perspective, where tests are applied to the same
person. This fits a Pearlian perspective in predictive coding where
updates apply (dependently) to the same mind, of a single individual.
In predictive coding the updates of Jeffrey and Pearl are not
(explicitly) used, but an approximation of Jeffrey, called VFE update,
where VFE is an abbreviation of variational free energy. Below we
briefly put this VFE rule in the setting of predictive coding, based
on~\cite{Jacobs21c,TullKS23}.



Updating in predictive coding happens via variational inference, in
order to minimalise free energy~\cite{Friston09}. This is an
approximation technique, which in this case produces VFE updating as
an approximation of Jeffrey updating. As will be shown, VFE does not
inherit the crucial KL-divergence reduction from Jeffrey, but it does
increase Pearl validity --- although less than Pearl updating
does. One can conclude that both Jeffrey and Pearl updating outperform
VFE updating.

Later on, in Remark~\ref{TullRem} we go into the details of what free
energy amounts to in the setting of predictive coding. At this stage
we first provide the description of the VFE update, in
item~\eqref{EnergyUpdatePropExp} below. It uses a softmax description,
involving normalisation of $e$-powers. This is used more widely in
cognitive modeling, see \textit{e.g.}~\cite{StuhlmullerG14}. The
description that we use is based on~\cite[Eqn.~(44)]{TullKS23}. The
subsequent minimality characterisation in
item~\eqref{EnergyUpdatePropArg} occurs (in essence)
in~\cite{Penny12,TullKS23}.


\begin{definition}
\label{EnergyUpdateDef}
Let $\omega\in\Dst(X)$ be a distribution with evidence
$\psi\in\natMlt\big(\Fact(X)\big)$. The VFE update
$\omega\Fupdate{\psi}$ is defined as normalised `softmax' of the form:
\begin{equation}
\label{EnergyUpdateEqn}
\begin{array}{rcl}
\omega\Fupdate{\psi}
& \coloneqq &
\displaystyle\flrn\left(\sum_{x\in\supp(\omega)} \,
   e^{\flrn(\psi) \models \ln(\omega\update{(-)}(x))}\bigket{x}\right).
\end{array}
\end{equation}

\noindent The letter `F' in the sign $\smash{\FupdateSign}$ refers to
`free' in free energy.
\end{definition}

We immediately give an alternative description of VFE updating,
together with its characterisation in terms of minimal free energy.
In item~\eqref{EnergyUpdatePropForm} below we stretch earlier notation
since we use a predicate power $p^{f}$ for a fraction $f$ and not for
a natural number $n$, as orignally in
Definition~\ref{EvidenceConjunctionDef}~\eqref{EvidenceConjunctionDefFact}.
But the meaning is the same, given pointwise as $p^{f}(x) = p(x)^{f}$.
Similarly, we use the conjunction $\evidand$ not only for multisets
$\evidand(\psi)$ but also for distributions
$\evidand\big(\flrn(\psi)\big)$, where:
\[ \begin{array}{rcccl}
\evidand\big(\flrn(\psi)\big)(x)
& = &
\displaystyle\prod_{p\in\supp(\psi)} p^{\flrn(\psi)(p)}(x)
& = &
\displaystyle\prod_{p\in\supp(\psi)} p(x)^{\flrn(\psi)(p)}.
\end{array} \]

\begin{proposition}
\label{EnergyUpdateProp}
In the setting of Definition~\ref{EnergyUpdateDef}.
\begin{enumerate}
\item \label{EnergyUpdatePropForm} The VFE
  update~\eqref{EnergyUpdateEqn} can alternatively be described as:
\begin{equation}
\label{EnergyUpdateAltEqn}
\begin{array}{rcccl}
\omega\Fupdate{\psi}
& = &
\omega\update{\andthen_{p}\, p^{\flrn(\psi)(p)}}
& = &
\omega\update{\evidand(\flrn(\psi))}.
\end{array}
\end{equation}

\item \label{EnergyUpdatePropArg} The VFE update can be characterised
  as a minimum below. This is the minimum free energy, see
  Remark~\ref{TullRem} below.
\begin{equation}
\label{EnergyUpdatePropArgEqn}
\begin{array}{rcl}
\omega\Fupdate{\psi}
& \in &
\argmin\limits_{\rho\in\Dst(X)} \, \flrn(\psi) \models
   \DKL\big(\rho, \, \omega\update{(-)}\big)
\\[+0.4em]
& & \hspace*{\arraycolsep}=\hspace*{\arraycolsep}
\displaystyle\argmin\limits_{\rho\in\Dst(X)} \, \sum_{p\in\supp(\psi)} \,
   \flrn(\psi)(p) \cdot \DKL\big(\rho, \, \omega\update{p}\big).
\end{array}
\end{equation}

\noindent Here we consider $\argmin$ as giving a subset of outcomes,
since there may be multiple distributions with the same minimum value.
\end{enumerate}
\end{proposition}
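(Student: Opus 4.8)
The plan is to handle the two items separately, and to observe that both reduce to elementary manipulations once the exponent in the softmax~\eqref{EnergyUpdateEqn} is expanded.

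For item~\eqref{EnergyUpdatePropForm} I would first unfold the exponent using the definition of validity and of Bayesian update. For fixed $x\in\supp(\omega)$,
\[ \flrn(\psi) \models \ln\big(\omega\update{(-)}(x)\big)
\;=\; \sum_{p\in\supp(\psi)} \flrn(\psi)(p)\cdot
   \ln\big(\omega\update{p}(x)\big). \]
Substituting $\omega\update{p}(x) = \frac{\omega(x)\cdot p(x)}{\omega\models p}$ from~\eqref{UpdateEqn} and splitting the logarithm into a sum gives three contributions. Since the weights $\flrn(\psi)(p)$ sum to one, the term $\ln\big(\omega(x)\big)$ survives unchanged; the terms $\ln\big(p(x)\big)$ combine into $\sum_{p}\flrn(\psi)(p)\ln\big(p(x)\big) = \ln\big(\evidand(\flrn(\psi))(x)\big)$ by definition of the fractional conjunction; and the terms $-\ln\big(\omega\models p\big)$ assemble into a constant $c$ that does not depend on $x$. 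Exponentiating, the softmax argument equals $e^{c}\cdot\omega(x)\cdot\evidand(\flrn(\psi))(x)$. The scalar factor $e^{c}$ is killed by the outer $\flrn$ normalisation, and what remains is exactly the normalised vector $\sum_x \omega(x)\cdot\evidand(\flrn(\psi))(x)\ket{x}$, which by~\eqref{UpdateEqn} is $\omega\update{\evidand(\flrn(\psi))}$. This is the claimed identity.

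For item~\eqref{EnergyUpdatePropArg} I would expand the free energy functional $F(\rho) \coloneqq \flrn(\psi)\models\DKL\big(\rho,\omega\update{(-)}\big)$ directly. Writing out $\DKL$ via~\eqref{KLdivEqn} and again using $\sum_p\flrn(\psi)(p)=1$, one obtains
\[ F(\rho) \;=\; \sum_{x}\rho(x)\ln\big(\rho(x)\big)
   \,-\,\sum_{x}\rho(x)\cdot g(x),
   \qquad g(x)\coloneqq\flrn(\psi)\models\ln\big(\omega\update{(-)}(x)\big). \]
Now set $\sigma\coloneqq\omega\Fupdate{\psi}$, so that by the softmax definition $\ln\big(\sigma(x)\big)=g(x)-\ln(Z)$ with $Z$ the normalisation constant. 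Replacing $g(x)$ by $\ln\big(\sigma(x)\big)+\ln(Z)$ and using $\sum_x\rho(x)=1$ collapses the expression to $F(\rho)=\DKL(\rho,\sigma)-\ln(Z)$. Since $\ln(Z)$ is independent of $\rho$, and since $\DKL(\rho,\sigma)\geq 0$ with equality precisely when $\rho=\sigma$ (the non-negativity of KL-divergence recalled after~\eqref{KLdivEqn}), the minimum of $F$ over $\Dst(X)$ is attained exactly at $\rho=\sigma=\omega\Fupdate{\psi}$, which proves membership in the $\argmin$.

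The arguments are essentially routine algebra, so the only real care point I anticipate is bookkeeping around support and finiteness: the functional $F(\rho)$ is $+\infty$ when $\supp(\rho)\not\subseteq\supp(\sigma)$, so I would note that the rewriting $F(\rho)=\DKL(\rho,\sigma)-\ln(Z)$ is valid on the domain where $F$ is finite, which is where the minimiser necessarily lies, and that the convention $r\ln(r)=0$ for $r=0$ keeps all sums well-defined. This also explains the parenthetical remark after~\eqref{EnergyUpdatePropArgEqn}: strict convexity of $\DKL(-,\sigma)$ in its first argument in fact yields a unique minimiser, so the $\argmin$ is the singleton $\{\omega\Fupdate{\psi}\}$.
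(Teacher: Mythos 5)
Your proof is correct and follows essentially the same route as the paper's: item~1 by expanding the softmax exponent via the update formula~\eqref{UpdateEqn} and absorbing the $x$-independent factor into the normalisation, and item~2 by rewriting the free energy functional as $\DKL(\rho,\omega\Fupdate{\psi})$ plus a constant and invoking non-negativity of KL-divergence. Your closing remarks on support, finiteness, and uniqueness of the minimiser are correct refinements the paper does not spell out, but they do not change the argument.
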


Notice that the VFE update~\eqref{EnergyUpdateAltEqn} looks very much
like the Pearl update $\omega \Pupdate{\psi}$
in~\eqref{EvicenceUpdateDefPearlEqn}, which does not involve
fractional powers: $\omega \Pupdate{\psi} = \omega\update{\andthen_{p}
  p^{\psi(p)}} = \omega\update{\evidand\psi}$. The VFE and Pearl
update behave similarly, see Theorem~\ref{VFEUpdateThm} below.

\begin{proof}
\begin{enumerate}
\item Using the familiar equations $e^{a+b} = e^{a}\cdot e^{b}$ and $e^{a\cdot b} = 
\big(e^{a}\big)^{b}$ we observe that we can write the $e$-expression
in~\eqref{EnergyUpdateEqn} as:
\[ \begin{array}{rcl}
\lefteqn{e^{\flrn(\psi) \models \ln(\omega\update{(-)}(x))}}
\\[+0.2em]
& = &
e^{\sum_{p\in\supp(\psi)} \flrn(\psi)(p) \cdot \ln(\omega\update{p}(x))}
\\
& = &
\displaystyle\prod_{p\in\supp(\psi)} \, 
   e^{\flrn(\psi)(p) \cdot \ln(\omega\update{p}(x))}
\\
& = &
\displaystyle\prod_{p\in\supp(\psi)} \, 
   \Big(e^{\ln(\omega\update{p}(x))}\Big)^{\flrn(\psi)(p)}
\\
& = &
\displaystyle\prod_{p\in\supp(\psi)} \, 
   \Big(\omega\update{p}(x)\Big)^{\flrn(\psi)(p)}
\\
& \smash{\stackrel{\eqref{UpdateEqn}}{=}} &
\displaystyle\prod_{p\in\supp(\psi)} \, 
   \left(\frac{\omega(x)\cdot p(x)}{\omega\models p}\right)^{\flrn(\psi)(p)}
\\[+1.4em]
& = &
\displaystyle \omega(x)^{\sum_{p\in\supp(\psi)} \flrn(\psi)(p)} \cdot 
   \prod_{p\in\supp(\psi)} \, p(x)^{\flrn(\psi)(p)} \cdot
   \frac{1}{\prod_{p\in\supp(\psi)} (\omega\models p)^{\flrn(\psi)(p)}}
\\[+1.4em]
& \smash{\stackrel{\eqref{JeffreyValidity}}{=}} &
\displaystyle \omega(x) \cdot \big(\evidand \flrn(\psi)\big)(x) \cdot
   \left(\frac{\coefm{\psi}}{\omega\Jmodels \psi}\right)^{\nicefrac{1}{\|\psi\|}}.
\end{array} \]

Using this formulation we obtain Equation~\eqref{EnergyUpdateEqn}:
\[ \begin{array}{rcl}
\lefteqn{\flrn\left(\sum_{x\in\supp(\omega)} \,
   e^{\flrn(\psi) \models \ln(\omega\update{(-)}(x))}\bigket{x}\right)(z)}
\\[+1.4em]
& = &
\displaystyle \frac{e^{\flrn(\psi) \models \ln(\omega\update{(-)}(z))}}
   {\sum_{y}\, e^{\flrn(\psi) \models \ln(\omega\update{(-)}(y))}}
\\[+1.4em]
& = &
\displaystyle\frac{\omega(z) \cdot (\evidand \flrn(\psi))(z)}
    {\sum_{y}\, \omega(y) \cdot (\evidand \flrn(\psi))(y)}
   \qquad\mbox{ as just shown}
\\[+1.2em]
& \smash{\stackrel{\eqref{UpdateEqn}}{=}} &
\omega\update{\evidand\flrn(\psi)}(z)
\\[+0.4em]
& \smash{\stackrel{\eqref{EnergyUpdateAltEqn}}{=}} &
\omega\Fupdate{\psi}(z).
\end{array} \]

\item Let the normalisation factor in~\eqref{EnergyUpdateEqn} be $s$,
  so that $s = \sum_{x} e^{\flrn(\psi) \models
    \ln(\omega\update{(-)}(x))}$. For an arbitrary distribution
  $\rho\in\Dst(X)$ we compute the value of the expression
  in~\eqref{EnergyUpdatePropArgEqn}.
\[ \begin{array}{rcl}
\lefteqn{\flrn(\psi) \models \DKL\big(\rho, \omega\update{(-)}\big)}
\\
& = &
\displaystyle \sum_{p\in\supp(\psi)} \, \flrn(\psi)(p) \cdot \sum_{x\in X}\,
   \rho(x)\cdot\ln\left(\frac{\rho(x)}{\omega\update{p}(x)}\right)
\\
& = &
\displaystyle \sum_{x\in X} \rho(x)\cdot\left[\ln\big(\rho(x)\big) - 
   \sum_{p\in\supp(\psi)} \, \flrn(\psi)(p) \cdot 
   \ln\big(\omega\update{p}(x)\big)\right]
\\[+1.4em]
& = &
\displaystyle \sum_{x\in X} \rho(x)\cdot\left[\ln\big(\rho(x)\big) - 
   \flrn(\psi) \models \ln\big(\omega\update{(-)}(x)\big)\right]
\\[+1em]
& = &
\displaystyle \sum_{x\in X} \rho(x)\cdot\left[\ln\big(\rho(x)\big) - 
   \ln\big(e^{\flrn(\psi) \models \ln\big(\omega\update{(-)}(x)\big)}\big)\right]
\\[+1em]
& \smash{\stackrel{\eqref{EnergyUpdateEqn}}{=}} &
\displaystyle \sum_{x\in X} \rho(x)\cdot\Big[\ln\big(\rho(x)\big) - 
   \ln\big(s\cdot \omega\Fupdate{\psi}(x)\big)\Big]
\\[+0.2em]
& = &
\DKL\big(\rho, \, \omega\Fupdate{\psi}\big) - \ln(s).
\end{array} \]

\noindent The latter expression reaches its minimum when the
divergence $\DKL\big(\rho, \, \omega\Fupdate{\psi}\big) \geq 0$ is
actually $0$, that is, when the distributions $\rho$ and
$\omega\Fupdate{\psi}$ are equal. \QED
\end{enumerate}

\auxproof{
We first compute that the value for
  $\omega\Fupdate{\psi}$ in the expression
  in~\eqref{EnergyUpdatePropArgEqn} is $-\ln(s)$.
\[ \hspace*{-0.5em}\begin{array}{rcl}
\lefteqn{\flrn(\psi) \models \DKL\big(\omega\Fupdate{\psi}, \omega\update{(-)}\big)}
\\
& = &
\displaystyle \sum_{p\in\supp(\psi)} \, \flrn(\psi)(p) \cdot \sum_{x\in X}\,
   \omega\Fupdate{\psi}(x)\cdot
   \ln\left(\frac{\omega\Fupdate{\psi}(x)}{\omega\update{p}(x)}\right)
\\
& \smash{\stackrel{\eqref{EnergyUpdateEqn}}{=}} &
\displaystyle \sum_{p\in\supp(\psi), x\in X} \flrn(\psi)(p) \cdot 
   \omega\Fupdate{\psi}(x)\cdot\left[
   \ln\left(\frac{e^{\flrn(\psi) \models \ln(\omega\update{(-)}(x))}}{s}\right) - 
   \ln\big(\omega\update{p}(x)\big)\right]
\\[+1.6em]
& = &
\displaystyle \sum_{p\in\supp(\psi), x\in X} \flrn(\psi)(p) \cdot 
   \omega\Fupdate{\psi}(x)\cdot\Big(\big(\flrn(\psi) \models 
      \ln(\omega\update{(-)}(x))\big) - 
   \ln(s)\Big)
\\
& & \qquad \displaystyle -\;
   \sum_{p\in\supp(\psi), x\in X} \flrn(\psi)(p) \cdot 
      \omega\Fupdate{\psi}(x)\cdot \ln\big(\omega\update{p}(x)\big)
\\[+1.4em]
& = &
\displaystyle -\ln(s) + \left(\sum_{x\in X}\,\omega\Fupdate{\psi}(x) \cdot
   \big(\flrn(\psi) \models \ln(\omega\update{(-)}(x))\big)\right)
\\
& & \qquad \displaystyle -\;
   \left(\sum_{x\in X} \omega\Fupdate{\psi}(x)\cdot 
   \big(\flrn(\psi) \models \ln(\omega\update{(-)}(x))\big)\right)
\\
& = &
-\ln(s).
\end{array} \]
}
\end{proof}

Via the formulation~\eqref{EnergyUpdateAltEqn} we see that more of the
same evidence $\psi\in\natMlt\big(\Fact(X)\big)$ does not change VFE
updating: $\omega\Fupdate{n\cdot\psi} = \omega \Fupdate{\psi}$,for
each $n\geq 1$. Simlarly, VFE updating is insensitive to the order of
updating: $\omega\Pupdate{\psi}\Pupdate{\chi} \,=\,
\omega\Pupdate{\chi}\Pupdate{\psi}$.

\begin{remark}
\label{TullRem}
It may not be immediately clear that the VFE update defined
in~\eqref{EnergyUpdateEqn} corresponds to what is used in predictive
coding. Therefore we elaborate this relationship more explicitly,
using the setting of~\cite{TullKS23}. There, two sets $S$ and $O$ are
used for hidden elements and for observations in a generative
model. It is given by a joint distribution $M\in\Dst(S\times O)$,
together with a distribution $o\in\Dst(O)$ of observations. The latter
corresponds to point evidence in our situation, so that we can think
of it as $o = \flrn(\psi)$ for point evidence $\psi\in\natMlt(O)$.

The joint distribution $M\in\Dst(S\times O)$ may be
disintegrated~\cite{ChoJ19} into a form $M = \tuple{\idmap, c} \push
\omega$, for a channel $c \colon S \chanto O$ and a distribution
$\omega\in\Dst(S)$. We write $d$ for the reversal / dagger of the
channel $c$, of the form $d = c_{\omega}^{\dag} \colon O \chanto S$,
so that $d(y)(s) = \big(\omega\update{c \pull \indic{y}}\big)(s)$, see
Proposition~\ref{ChannelUpdatePointEvidenceProp}~\eqref{ChannelUpdatePointEvidencePropJeffreyUpdate}. In~\cite{TullKS23}
this dagger $d$ is written as $M(s|y)$.

The VFE-update rule considered in~\cite{TullKS23}, in Equations~(44)
and~(45), produces a posterior, via normalisation, written as:
\begin{equation}
\label{VFEoriginalEqn}
\displaystyle\NormOp\limits_{s} \, \left(e^{\expec_{y\sim o} \ln(M(s|y))}\right).
\end{equation}

\noindent In our setting we write frequentist learning $\flrn$ for the
normalisation and validity $\models$ for the expectation
$\expec_{x\sim o}$. This turns~\eqref{VFEoriginalEqn} into:
\[ \begin{array}{rcl}
\displaystyle\flrn\left(\sum_{s\in S} \, e^{o\models \ln(d(-)(s))}\,\bigket{s}\right)
& = &
\displaystyle\flrn\left(\sum_{s\in S} \, 
   e^{o\models \ln\big((\omega\update{c\pull \indic{(-)}})(s)\big)}\,\bigket{s}\right).
\end{array} \]

\noindent The VFE update~\eqref{EnergyUpdateEqn} is a generalisation
of the latter formula, in the following way.  The evidence that is
used for updating $\omega$ is not of the pullback form $c \pull
\indic{(-)}$, but is a general factor $p$.  This factor comes from the
evidence $o$, which is generalised from point evidence, to arbitrary
evidence $o = \flrn(\psi)$, for $\psi\in\natMlt\big(\Fact(O)\big)$.

Having explained the formula~\eqref{EnergyUpdateEqn}, we turn to the
associated free energy formulation. It can also be translated to the
current setting, again starting from~\cite{TullKS23}. There, the free
energy of a distribution $\rho\in\Dst(S)$, is defined with respect to
the generative model $M\in\Dst(S\times O)$ and the distribution of
observations $o\in\Dst(O)$. It involves the dagger $M(s|y)$ that we
mentioned before and takes the explicit form:
\[ \begin{array}{rcccccccccl}
\big(\omega|_{c \pull \indic{y}}\big)(s)
& = &
c_{\omega}^{\dag}(y)(s)
& = &
d(y)(s)
& = &
M(s|y)
& = &
\displaystyle\frac{M(s,y)}{M(y)}
& = &
\displaystyle\frac{M(s,y)}{(c \push \omega)(y)}.
\end{array} \]

\noindent What is then called free energy of $\rho$ is computed as:
\[ \begin{array}{rcl}
\lefteqn{\sum_{s\in S, \, y\in O} \, \rho(s)\cdot o(y) \cdot
   \ln\left(\frac{\rho(s)}{M(s,y)}\right)}
\\[+0.2em]
& = &
\displaystyle\sum_{s\in S, \, y\in O} \, \rho(s)\cdot o(y) \cdot
   \ln\left(\frac{\rho(s)\cdot (c \push \omega)(y)}
   {(\omega|_{c \pull \indic{y}})(s)}\right)
\\
& = &
\displaystyle \sum_{y\in O} o(y) \cdot \sum_{s\in S} \, \rho(s)\cdot 
   \ln\left(\frac{\rho(s)}{(\omega|_{c \pull \indic{y}})(s)}\right)
   \,-\, \sum_{y\in O} \, o(y) \cdot \ln\big((c \push \omega)(y)\big)
\\[+1.2em]
& = &
\displaystyle o \models \DKL\big(\rho, \, \omega|_{c \pull \indic{(-)}}\big)
   \,-\, o \models \ln\big((c \push \omega)(-)\big).
\end{array} \]

\noindent The latter term does not depend on $\rho$ and can be ignored
when we wish to minimise. The VFE-update from~\cite{TullKS23} is the
argmin of the first divergence-validity expression. Translated to the
current setting it occurs in
Proposition~\ref{EnergyUpdateProp}~\eqref{EnergyUpdatePropArg}, again
with the distribution $o$ replaced by $\flrn(\psi)$ and the factor $c
\pull \indic{(-)}$ by $p\in\supp(\psi)$.
\end{remark}

We like to put the expression after $\argmin$
in~\eqref{EnergyUpdatePropArgEqn} in a wider perspective.  The first
inequality below stems from~\cite[Eqn~(41)]{TullKS23}.

\begin{lemma}
\label{DKLchannelLem}
\begin{enumerate}
\item \label{DKLchannelLemGen} For a channel $c\colon Z \chanto X$ and
  two distributions $\sigma\in\Dst(Z)$ and $\rho\in\Dst(X)$ there is
  an inequality of the form:
\[ \begin{array}{rcl}
\sigma \models \DKL\big(\rho, \, c(-)\big)
& \,\geq\, &
\DKL\big(\rho, \, c\push\sigma\big).
\end{array} \]

\item \label{DKLchannelLemVFE} When we apply this inequality to
  the expression in~\eqref{EnergyUpdatePropArgEqn} for distributions
  $\omega,\rho\in\Dst(X)$ with evidence
  $\psi\in\natMlt\big(\Fact(X)\big)$, we see that the KL-divergence
  between $\rho$ and the Jeffrey update $\omega\Jupdate{\psi}$ serves
  as lower bound for the minimisation
  in~\eqref{EnergyUpdatePropArgEqn}:
\begin{equation}
\label{DKLchannelLemVFEIneq}
\begin{array}{rcl}
\flrn(\psi) \models \DKL\big(\rho, \, \omega\update{(-)}\big)
& \,\geq\, &
\DKL\big(\rho, \, \omega\Jupdate{\psi}\big).
\end{array}
\end{equation}
\end{enumerate}
\end{lemma}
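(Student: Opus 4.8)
The plan is to prove part~\eqref{DKLchannelLemGen} by a direct expansion combined with Jensen's inequality, and then obtain part~\eqref{DKLchannelLemVFE} as an immediate instance.

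For part~\eqref{DKLchannelLemGen}, I would first unfold both sides using the definitions of validity~\eqref{ValidityKetEqn}, of pushforward~\eqref{ChannelForwardEqn}, and of KL-divergence~\eqref{KLdivEqn}. The left-hand side becomes $\sum_{x} \rho(x) \sum_{z} \sigma(z)\ln\big(\rho(x)/c(z)(x)\big)$ and the right-hand side becomes $\sum_{x}\rho(x)\ln\big(\rho(x)/(c\push\sigma)(x)\big)$. Since $\sum_{z}\sigma(z)=1$, the $\ln\big(\rho(x)\big)$ contributions agree on both sides and cancel in the difference. What remains is to establish, for each $x$, the pointwise inequality $\ln\big((c\push\sigma)(x)\big) \geq \sum_{z}\sigma(z)\ln\big(c(z)(x)\big)$; multiplying by $\rho(x)\geq 0$ and summing over $x$ then yields the claim. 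This remaining inequality is exactly Jensen's inequality for the concave function $\ln$, applied to the convex combination $\sum_{z}\sigma(z)c(z)(x) = (c\push\sigma)(x)$.

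For part~\eqref{DKLchannelLemVFE}, I would simply instantiate part~\eqref{DKLchannelLemGen}. Take $Z = \supp(\psi)$, the distribution $\sigma = \flrn(\psi) \in \Dst(Z)$, and the channel $c\colon Z \chanto X$ given by $c(p) \coloneqq \omega\update{p}$. By the definition~\eqref{EvicenceUpdateDefJeffreyEqn} of the Jeffrey update, the pushforward is $c\push\sigma = \sum_{p}\flrn(\psi)(p)\cdot\omega\update{p} = \omega\Jupdate{\psi}$, while directly $\sigma\models\DKL\big(\rho, c(-)\big) = \flrn(\psi)\models\DKL\big(\rho,\omega\update{(-)}\big)$. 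Substituting these two identifications into the inequality of part~\eqref{DKLchannelLemGen} produces~\eqref{DKLchannelLemVFEIneq} verbatim.

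The main obstacle is not the inequality itself but the bookkeeping around supports and the convention $r\ln r = 0$: when $(c\push\sigma)(x)=0$ some summands are degenerate, and when some $c(z)(x)=0$ with $\rho(x)>0$ the left-hand side is $+\infty$, in which case the inequality holds trivially. I would note that Jensen's inequality remains valid in this extended $[0,\infty]$ sense, so the argument goes through without a separate case analysis; the finite case is handled by the concavity computation above, and the infinite case only makes the lower bound vacuous.
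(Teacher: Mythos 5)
Your proposal is correct and follows essentially the same route as the paper: part~\eqref{DKLchannelLemGen} is a single application of Jensen's inequality for the concave logarithm to the convex combination $(c\push\sigma)(x)=\sum_{z}\sigma(z)\cdot c(z)(x)$ (the paper carries the $\rho(x)$ inside the logarithm rather than cancelling it first, which is only a cosmetic difference), and part~\eqref{DKLchannelLemVFE} is exactly the instantiation with the channel $p\mapsto\omega\update{p}$ and the distribution $\flrn(\psi)$, using~\eqref{EvicenceUpdateDefJeffreyEqn} to identify the pushforward with $\omega\Jupdate{\psi}$. Your closing remark on supports and the extended $[0,\infty]$ reading of the inequality is a sensible extra care point that the paper leaves implicit.
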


\begin{proof}
\begin{enumerate}
\item We consider the minus on both sides and use Jensen's inequality in:
\[ \begin{array}{rcl}
-\Big(\sigma \models \DKL\big(\rho, c(-)\big)\Big)
& = &
-\displaystyle\sum_{z\in Z} \sigma(z)\cdot \sum_{x\in X} \rho(x)\cdot
   \ln\left(\frac{\rho(x)}{c(z)(x)}\right)
\\
& = &
\displaystyle\sum_{x\in X} \rho(x) \cdot \sum_{z\in Z} \sigma(z) \cdot
   \ln\left(\frac{c(z)(x)}{\rho(x)}\right)
\\
& \leq &
\displaystyle\sum_{x\in X} \rho(x) \cdot 
   \ln\left(\sum_{z\in Z} \sigma(z) \cdot \frac{c(z)(x)}{\rho(x)}\right)
\\[+1.4em]
& = &
\displaystyle\sum_{x\in X} \rho(x) \cdot 
   \ln\left(\frac{(c \push \sigma)(x)}{\rho(x)}\right)
\\
& = &
-\DKL\big(\rho, c\push\sigma\big).
\end{array} \]

\item By using the mapping $p \mapsto \omega\update{p}$ as function /
  channel $\Fact(X) \rightarrow \Dst(X)$, with distribution
  $\flrn(\psi) \in \Dst\big(\Fact(X)\big)$ we get:
\[ \begin{array}[b]{rcl}
\flrn(\psi) \models \DKL\big(\rho, \, \omega\update{(-)}\big)
& \,\geq\, &
\displaystyle\DKL\left(\rho, \, \sum_{p\in\supp(\psi)} \flrn(\psi)(p)\cdot
   \omega\update{p}\right)
\\[+1.4em]
& \smash{\stackrel{\eqref{EvicenceUpdateDefJeffreyEqn}}{=}} &
\DKL\big(\rho, \, \omega\Jupdate{\psi}\big).
\end{array} \eqno{\QEDbox} \]
\end{enumerate}
\end{proof}

Thus, we see that the VFE update, as minimiser of the left-hand-side
in~\eqref{DKLchannelLemVFEIneq} --- see
Proposition~\ref{EnergyUpdateProp}~\eqref{EnergyUpdatePropArg} --- is
used to approximate the Jeffrey update $\omega\Jupdate{\psi}$, on the
right-hand-side in~\eqref{DKLchannelLemVFEIneq}, via
KL-divergence. This VFE update may approximate the Jeffrey update, but
it does not do a crucial thing that Jeffrey update does, namely
correct errors --- see below --- in the form of decrease of
KL-divergence. This is remarkable, since predictive coding is all
about error correction.

\begin{remark}
\label{ErrorCorrectionRem}
In Figure~\ref{UpdateFig} we have considered 100 different updates, in
the running medical example, for point evidence of the form
$\psi_{i,j} = i\ket{p} + j\ket{n}$, for $i,j\in\{1,\ldots,10\}$. We
re-use the test channel $t \colon D \chanto T$ and the prior
$\omega\in\Dst(D)$ from Example~\ref{MedicalChannelEx}. We abbreviate
the VFE updates as:
\[ \begin{array}{rcccccl}
\omega_{i,j}
& \coloneqq &
\omega\Fupdate{t \triplepull \psi_{i,j}}
& = &
\omega\Fupdate{i\ket{t \pull \indic{p}} + j\ket{t \pull \indic{n}}}
& = &
\omega\Fupdate{i\ket{\postest} + j\ket{\negtest}} \;\in\; \Dst(D).
\end{array} \]

\noindent The the VFE posterior disease probabilities
$\omega_{i,j}(d)$ give the following plot --- like in
Figure~\ref{UpdateFig} for Jeffrey and Pearl.
\[ \includegraphics[scale=0.5]{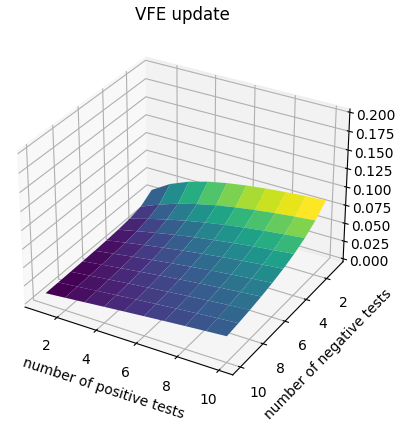} \]

\noindent This plot looks much like the one for Jeffrey in
Figure~\ref{UpdateFig}. That is not strange since VFE updating
approximates Jeffrey updating, as we have just seen.

We then check for the expected decrease of KL-divergence via
VFE update in:
\[ \begin{array}{rcl}
\DKL\big(\flrn(\psi_{i,j}), \, t \push \omega_{i,j}\big)
& \,\smash{\stackrel{?}{\leq}}\, &
\DKL\big(\flrn(\psi_{i,j}), \, t \push \omega\big).
\end{array} \]

\noindent This fails for 37 out of 100 possible cases, for
$i,j\in\{1,\ldots,10\}$. For instance, already for $i=j=1$ we have a
uniform state of the world $\flrn(\psi_{1,1}) = \frac{1}{2}\ket{p} +
\frac{1}{2}\ket{n}$ with a positive and negative test outcome that we
wish to get close to via learning. Stepwise, this works as follows.
\[ \begin{array}{lcrcl}
\mbox{prior}
& \quad &
\omega
& = &
0.05\ket{d} + 0.95\ket{\no{d}}
\\
\mbox{prior prediction}
& \quad &
t \push \omega
& = &
0.425\ket{p} + 0.575\ket{n}
\\
\mbox{VFE posterior}
& &
\omega_{1,1}
& = &
0.031\ket{d} + 0.969\ket{\no{d}}
\\
\mbox{posterior prediction}
& &
t \push \omega_{1,1}
& = &
0.416\ket{p} + 0.584\ket{n}
\\
\mbox{prior divergence}
& &
\DKL\big(\flrn(\psi_{1,1}), \, t \push\omega\big)
& = &
0.0164
\\
\mbox{posterior divergence}
& &
\DKL\big(\flrn(\psi_{1,1}), \, t \push\omega_{1,1}\big)
& = &
0.0208.
\end{array} \]

\noindent Thus, VFE updating takes us further away from the
goal distribution $\flrn(\psi_{1,1})$.

This is puzzling: the stated aim in predictive coding is error
correction in the form of KL-divergence reduction. In our medical
example this reduction fails for a bit more than one third of the
cases.  In contrast, the decrease of KL-divergence with Jeffrey
updating holds in all cases, see
Theorem~\ref{JeffreyEvicenceUpdateThm}~\eqref{JeffreyEvicenceUpdateThmPoint}.
It seems that predictive coding should not use VFE updating, but
use Jeffrey updating instead, at least if it really aims to achieve
error correction as goal of updating.
\end{remark}

In the end we ask ourselves in what sense does VFE updating make us
wiser? Does it increase Jeffrey and/or Pearl validity? It turns out
that it does not increase Jeffrey validity, in general. VFE updating
does increase Pearl validity, but not as much as Pearl updating.

\begin{theorem}
\label{VFEUpdateThm}
Let $\omega\in\Dst(X)$ be distribution with evidence
$\psi\in\natMlt\big(\Fact(X)\big)$.
\begin{enumerate}
\item \label{VFEUpdateThmJeffrey} VFE updating does not increase
Jeffrey validity, in general.

\item \label{VFEUpdateThmPearl} VFE updating does increase
Pearl validity, but not as much as Pearl updating itself:
\begin{equation}
\label{VFEUpdateThmPearlEqn}
\begin{array}{rcccl}
\omega \Pmodels \psi
& \,\leq\, &
\omega\Fupdate{\psi} \Pmodels \psi
& \,\leq\, &
\omega\Pupdate{\psi} \Pmodels \psi
\end{array}
\end{equation}
\end{enumerate}
\end{theorem}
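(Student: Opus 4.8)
The proof splits along the two items: item~\eqref{VFEUpdateThmJeffrey} is an ``in general'' claim, so one counterexample suffices, whereas item~\eqref{VFEUpdateThmPearl} is the real content. For item~\eqref{VFEUpdateThmPearl} I would first collapse both updates onto a single factor. Write $q \coloneqq \evidand\psi$ and $K \coloneqq \|\psi\|$. Pointwise $\evidand(\flrn(\psi))(x) = \prod_{p}p(x)^{\psi(p)/K} = \big(\prod_{p}p(x)^{\psi(p)}\big)^{1/K} = q(x)^{1/K}$, so $\evidand(\flrn(\psi)) = q^{1/K}$. Hence, by Definition~\ref{EvicenceUpdateDef}~\eqref{EvicenceUpdateDefPearl} and Proposition~\ref{EnergyUpdateProp}~\eqref{EnergyUpdatePropForm}, $\omega\Pupdate{\psi} = \omega\update{q}$ and $\omega\Fupdate{\psi} = \omega\update{q^{1/K}}$. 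Since the Pearl validity of $\psi$ in \emph{any} state $\rho$ equals $\coefm{\psi}\cdot(\rho\models q)$ by~\eqref{PearlValidity}, and $\coefm{\psi} > 0$ depends only on $\psi$, dividing it out turns~\eqref{VFEUpdateThmPearlEqn} into the chain
\[
\omega\models q \;\leq\; \omega\update{q^{1/K}}\models q \;\leq\; \omega\update{q}\models q .
\]

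Next I would read this chain as monotonicity of a single ratio. By the product rule (Lemma~\ref{PredicateUpdateLem}~\eqref{PredicateUpdateLemProd}) and $q^{s}\andthen q = q^{s+1}$, one has $\omega\update{q^{s}}\models q = \frac{\omega\models q^{s+1}}{\omega\models q^{s}}$. Writing $b_{s}\coloneqq\omega\models q^{s}$ and $g(s)\coloneqq b_{s+1}/b_{s}$, the three terms above are $g(0)$, $g(1/K)$ and $g(1)$: indeed $q^{0}=\one$ gives $\omega\update{\one}=\omega$ (Lemma~\ref{PredicateUpdateLem}~\eqref{PredicateUpdateLemConj}), so $g(0)=\omega\models q$. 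As $0\leq 1/K\leq 1$, it then suffices to show that $g$ is non-decreasing.

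The analytic heart is that $s\mapsto b_{s}=\omega\models q^{s}$ is log-convex, which is immediate from Hölder's inequality
\[
\omega\models q^{(1-\lambda)s_{0}+\lambda s_{1}} \;\leq\; \big(\omega\models q^{s_{0}}\big)^{1-\lambda}\big(\omega\models q^{s_{1}}\big)^{\lambda}.
\]
Convexity of $\log b_{s}$ makes its unit increments $\log g(s)=\log b_{s+1}-\log b_{s}$ non-decreasing, so $g$ is non-decreasing. Concretely, the right inequality $g(1/K)\leq g(1)$ reads $(\omega\models q^{1+1/K})(\omega\models q)\leq(\omega\models q^{2})(\omega\models q^{1/K})$, which is log-convexity via the majorisation $(2,\tfrac1K)\succ(1+\tfrac1K,1)$ (equal sums, and $2\geq 1+\tfrac1K$); the left inequality $g(0)\leq g(1/K)$ reads $(\omega\models q)(\omega\models q^{1/K})\leq \omega\models q^{1+1/K}$, which I would obtain directly from Chebyshev's sum inequality applied to the comonotone factors $q^{1/K}$ and $q=(q^{1/K})^{K}$, thereby sidestepping any discontinuity of $b_{s}$ at $s=0$.

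Finally, item~\eqref{VFEUpdateThmJeffrey} needs only one witness: in the running medical setting, with $\omega=\frac{1}{20}\ket{d}+\frac{19}{20}\ket{\no{d}}$ and $\psi=2\ket{\postest}+1\ket{\negtest}$, I would compute $\omega\Fupdate{\psi}$ from~\eqref{EnergyUpdateAltEqn} and check by direct evaluation that $\omega\Fupdate{\psi}\Jmodels\psi < \omega\Jmodels\psi$. The one genuine obstacle is the monotonicity of $g$ in item~\eqref{VFEUpdateThmPearl}; everything else is rearrangement. I expect the only delicate point there to be the endpoint $s=0$, which is exactly why I would prove the left inequality by comonotonicity rather than by appealing to continuity of the moment function at $0$.
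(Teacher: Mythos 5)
Your proposal is correct, and for item~\eqref{VFEUpdateThmPearl} it takes a genuinely different route from the paper's. Both arguments start from the same identity $(\evidand\flrn(\psi))^{K}=\evidand\psi$ and both divide out the common constant $\coefm{\psi}$, but from there the paper proceeds by isolating four auxiliary inequalities in the single factor $\tilde q=\evidand\flrn(\psi)$ and proving the two relevant ones, $\omega\models\tilde q^{K}\leq\omega\update{\tilde q}\models\tilde q^{K}\leq\omega\update{\tilde q^{K}}\models\tilde q^{K}$, by induction on the integer $K$, each induction step being an application of the single-update gain of Theorem~\ref{UpdateGainThm}~\eqref{UpdateGainThmSingle} together with the product rule of Lemma~\ref{PredicateUpdateLem}~\eqref{PredicateUpdateLemProd} --- so it never leaves the toolbox already built in the paper. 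You instead recognise all three quantities as values $g(0)$, $g(\nicefrac{1}{K})$, $g(1)$ of the single ratio $g(s)=(\omega\models q^{s+1})/(\omega\models q^{s})$ with $q=\evidand\psi$, and obtain the chain from monotonicity of $g$, derived from log-convexity of the moment function $s\mapsto\omega\models q^{s}$ (H\"older, then Karamata for the majorisation $(2,\nicefrac{1}{K})\succ(1+\nicefrac{1}{K},1)$) plus Chebyshev's sum inequality for the comonotone pair $q$, $q^{\nicefrac{1}{K}}$ at the left endpoint. This buys a strictly stronger fact --- $s\mapsto\omega\update{q^{s}}\models q$ is non-decreasing on all of $[0,\infty)$, so \emph{any} tempering exponent interpolates monotonically between no update and the Pearl update --- at the price of importing classical inequalities; the paper's induction is more elementary and self-contained but only covers the integer-power instances it needs. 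Your caution about $s=0$ is reasonable but not strictly necessary: since $\omega\models\one=1$, H\"older already gives $\omega\models q^{\lambda s}\leq(\omega\models q^{s})^{\lambda}$, so log-convexity survives at the endpoint; the Chebyshev detour is nonetheless a clean and valid alternative. For item~\eqref{VFEUpdateThmJeffrey} you take the same route as the paper, a numerical counterexample in the medical setting; the paper uses $1\ket{\postest}+1\ket{\negtest}$ (validities $\approx 0.489$ versus $0.486$), while your witness $\psi=2\ket{\postest}+1\ket{\negtest}$ also works out ($\omega\Fupdate{\psi}\Jmodels\psi\approx 0.3104<0.3116\approx\omega\Jmodels\psi$) --- but you should actually carry out that evaluation rather than leave it as a promise, since the whole content of an ``in general'' claim is the verified witness.
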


\ignore{

N = 5
X = range_sp(N)
K = 6
for i in range(100):
    w = random_distribution(X)
    p1 = random_predicate(X)
    p2 = random_predicate(X)
    p3 = random_predicate(X)
    P = [p1,p2,p3]
    SP = Space(P)
    ev = random_multiset(K)(SP)
    wF = w.Friston_update(ev)
    wP = w.Pearl_update(ev)
    wFF = wF.Friston_validity(ev)
    wPF = wP.Friston_validity(ev)
    wPP = wP.Pearl_validity(ev)
    wFP = wF.Pearl_validity(ev)
    print( wPF >= wFF and wFF >= w.Friston_validity(ev),
           wPP >= wFP and wFP >= w.Pearl_validity(ev) )

N = 5
X = range_sp(N)
K = random.randint(2,10)
for i in range(100):
    w = random_distribution(X)
    q = random_predicate(X)
    v1 = w >= q
    v2 = w / q >= q
    v3 = w / q**K >= q
    v4 = w >= q**K
    v5 = w / q >= q**K
    v6 = w / q**K >= q**K
    print( v1 <= v2 and v2 <= v3, v4 <= v5 and v5 <= v6 )

}

\begin{proof}
\begin{enumerate}
\item We can reformulate the counterexample from
  Remark~\ref{ErrorCorrectionRem}. For evidence $\chi =
  1\bigket{\postest} + 1\bigket{\negtest}$ with one positive and one
  negative test, and prior $\omega = 0.05\ket{d} + 0.95\ket{\no{d}}$
  we get:
\[ \begin{array}{rcccccl}
\omega\Jmodels{\chi}
& \approx &
0.489
& > &
0.486
& \approx &
\omega\Fupdate{\chi}\Jmodels{\chi}.
\end{array} \]

\item We start with a number of auxiliary results (for $K\geq 1$).
\begin{enumerate}[label=(\roman*), ref=\roman*]
\item \label{VFEUpdateThmA} $\big(\evidand \flrn(\psi)\big)^{K} =
  \evidand\psi$, where $K = \|\psi\|$;

\item \label{VFEUpdateThmB} $\omega\update{q} \models q \,\leq\,
  \omega\update{q^K} \models q$;

\item \label{VFEUpdateThmC} $\omega \models q^{K} \,\leq\,
  \omega\update{q} \models q^{K}$;

\item \label{VFEUpdateThmD} $\omega\update{q} \models q^{K} \,\leq\,
  \omega\update{q^{K}} \models q^{K}$.
\end{enumerate}

\noindent The first inequality in~\eqref{VFEUpdateThmPearlEqn} is then
an instance of point~\eqref{VFEUpdateThmC}, for $q = \evidand
\flrn(\psi)$. The second inequality in~\eqref{VFEUpdateThmPearlEqn} is
an instance~\eqref{VFEUpdateThmD}.

\auxproof{
We use $q = \evidand \flrn(\psi)$, so that $q^{K} = \evidand\psi$.
\[ \begin{array}{rcl}
\omega \Pmodels \psi
& = &
\omega \models \evidand\psi
\\
& = &
\omega \models q^{K}
\\
& \leq &
\omega\update{q} \models q^{K}  \qquad \mbox{by~\eqref{VFEUpdateThmC}}
\\
& = &
\omega\update{\evidand\flrn(\psi)} \models \evidand{\psi}
\\
& = &
\omega\Fupdate{\psi} \Pmodels \psi
\end{array} \]

\[ \begin{array}{rcl}
\omega\Fupdate{\psi} \Pmodels \psi
& = &
\omega\update{\evidand\flrn(\psi)} \models \evidand{\psi}
\\
& = &
\omega\update{q} \models q^{K}
\\
& \leq &
\omega\update{q^K} \models q^{K}  \qquad \mbox{by~\eqref{VFEUpdateThmD}}
\\
& = &
\omega\update{\evidand\psi} \models \evidand\psi
\\
& = &
\omega\Pupdate{\psi} \Pmodels \psi
\end{array} \]
}

It thus suffices to prove these points~\eqref{VFEUpdateThmA} --
\eqref{VFEUpdateThmD}. For~\eqref{VFEUpdateThmA} we use:
\[ \begin{array}{rcl}
\big(\evidand \flrn(\psi)\big)^{K}(x)
\hspace*{\arraycolsep}=\hspace*{\arraycolsep}
\big(\evidand \flrn(\psi)\big)(x)^{K}
& = &
\displaystyle\left(\prod_{p\in\supp(\psi)} \, p(x)^{\flrn(\psi)(p)}\right)^{K}
\\[+1.4em]
& = &
\displaystyle\prod_{p\in\supp(\psi)} \, p(x)^{\flrn(\psi)(p)\cdot K}
\\[+1.2em]
& = &
\displaystyle\prod_{p\in\supp(\psi)} \, p(x)^{\psi(p)}
\hspace*{\arraycolsep}=\hspace*{\arraycolsep}
\big(\andthen \psi\big)(x).
\end{array} \]

The equation in~\eqref{VFEUpdateThmB} obviously holds for $K=1$. The
induction step works as follows.
\[ \begin{array}{rcll}
\omega\update{q^{K+1}} \models q 
\hspace*{\arraycolsep}=\hspace*{\arraycolsep}
\omega\update{q^{K}\,\andthen\, q} \models q 
& = &
\omega\update{q^{K}}\update{q} \models q \qquad
   & \mbox{by Lemma~\ref{PredicateUpdateLem}~\eqref{PredicateUpdateLemConj}}
\\[+0.2em]
& \geq &
\omega\update{q^{K}} \models q 
   & \mbox{by Theorem~\ref{UpdateGainThm}~\eqref{UpdateGainThmSingle}}
\\[+0.2em]
& \geq &
\omega\update{q} \models q
   & \mbox{by induction hypothesis.}
\end{array} \]

The case $K=1$ of~\eqref{VFEUpdateThmC} is covered by
Theorem~\ref{UpdateGainThm}~\eqref{UpdateGainThmSingle}. For the
induction step we use:
\[ \begin{array}{rcll}
\lefteqn{\omega|_{q} \models q^{K+1}}
\\[+0.2em]
& = &
\omega|_{q} \models q \andthen q^{K}
\\[+0.2em]
& = &
\big(\omega|_{q} \models q\big)\cdot
   \big(\omega|_{q}|_{q} \models q^{K}\big) \quad
   & \mbox{by Lemma~\ref{PredicateUpdateLem}~\eqref{PredicateUpdateLemProd}}
\\[+0.2em]
& \geq &
\big(\omega \models q\big)\cdot
   \big(\omega|_{q} \models q^{K}\big)
   & \mbox{by Theorem~\ref{UpdateGainThm}~\eqref{UpdateGainThmSingle} and 
   induction hypothesis}
\\[+0.2em]
& = &
\omega \models q^{K+1}
   & \mbox{by Lemma~\ref{PredicateUpdateLem}~\eqref{PredicateUpdateLemProd} again.}
\end{array} \]

The case $K=1$ trivially holds for~\eqref{VFEUpdateThmD}. Next:
\[ \begin{array}[b]{rcll}
\lefteqn{\omega\update{q^{K+1}}\models q^{K+1}}
\\[+0.2em]
& = &
\omega\update{q^{K+1}}\models q\andthen q^{K}
\\[+0.2em]
& = &
\big(\omega\update{q^{K+1}}\models q\big) \cdot
   \big(\omega\update{q^{K+1}}\update{q}\models q^{K}\big) \;\;
   & \mbox{by Lemma~\ref{PredicateUpdateLem}~\eqref{PredicateUpdateLemProd}}
\\[+0.2em]
& = &
\big(\omega\update{q}\update{q^{K}}\models q\big) \cdot
   \big(\omega\update{q}\update{q}\update{q^{K}}\models q^{K}\big)   
   & \mbox{by Lemma~\ref{PredicateUpdateLem}~\eqref{PredicateUpdateLemConj}}
\\[+0.2em]
& \geq &
\big(\omega\update{q}\update{q}\models q\big)\cdot
   \big(\omega\update{q}\update{q}\update{q}\models q^{K}\big)
   & \mbox{by~\eqref{VFEUpdateThmB} and induction hypothesis}
\\[+0.2em]
& \geq &
\big(\omega\update{q}\models q\big)\cdot
   \big(\omega\update{q}\update{q}\models q^{K}\big)
   & \mbox{by Theorem~\ref{UpdateGainThm}~\eqref{UpdateGainThmSingle} 
   and~\eqref{VFEUpdateThmC}}
\\[+0.2em]
& = &
\omega\update{q} \models q^{K+1}
   & \mbox{by Lemma~\ref{PredicateUpdateLem}~\eqref{PredicateUpdateLemProd} again.}
\end{array} \eqno{\QEDbox} \]

\auxproof{
\[ \begin{array}{rcl}
\omega\Pupdate{\psi} \Pmodels \psi
\hspace*{\arraycolsep}\smash{\stackrel{\eqref{EvicenceUpdateDefPearlEqn}}{=}}\hspace*{\arraycolsep}
\omega\update{\evidand\psi} \models \evidand{\psi}
& = &
\omega\update{(\evidand\flrn(\psi))^{K}} \models (\evidand\flrn(\psi))^{K}
\\
& \geq &
\omega\update{\evidand\flrn(\psi)} \models (\evidand\flrn(\psi))^{K}
\\
& = &
\omega\|_{\psi} \models (\evidand\flrn(\psi))^{K}
\\
& \geq &
\omega \models (\evidand\flrn(\psi))^{K}
\hspace*{\arraycolsep}=\hspace*{\arraycolsep}
\omega \Pmodels \psi.
\end{array} \]
}
\end{enumerate}
\end{proof}

In predictive coding one chooses to approximate distributions for
efficiency reasons --- especially in the context of continuous
probability. The problem lies in the computation of the normalisation
constants.  In Jeffrey's update
rule~\eqref{EvicenceUpdateDefJeffreyEqn} one has to compute a
normalisation constant $\omega\models p$ for each factor
$p\in\supp(\psi)$ in the evidence. In the Pearl and VFE update
rules~\eqref{EvicenceUpdateDefPearlEqn}, \eqref{EnergyUpdateAltEqn}
only one normalisation constant needs to be computed.

In practice, the Pearl and VFE approaches have the problem that the
conjunctions $\evidand(\psi) = \mathop{\andthen_{p}} p^{\psi(p)}$ and
$\evidand(\flrn(\psi)) = \mathop{\andthen_{p}} p^{\flrn(\psi)(p)}$ may
lead to many multiplications of small numbers, giving outcomes that
become too small to be reliable. One can do the updates successively,
using the equation $\omega\update{p\andthen q} =
\omega\update{p}\update{q}$ from
Lemma~\ref{PredicateUpdateLem}~\eqref{PredicateUpdateLemConj}, but
then one has to do multiple normalisations. In practice, Pearl's
updates are most conveniently done in a conjugate prior situation, so
that one can directly update the parameters, without computing any
distributions.

We may cautiously conclude that the mathematical clarifications about
validity and updating in this paper have not (yet) fully clarified
which update mechanism should be used in predictive coding and that a
more detailed explanation and analysis is needed.





\section{Concluding remarks}\label{ConclusionSec}

The topic of this paper is updating of probability distributions in
the face of multiple pieces of evidence. As shown, there are several
possible approaches, associated with Jeffrey, Pearl and VFE, with
different properties. Guarantees, like validity increase or divergence
decrease, only work for the right combination of validity and
updating. Hence, awareness of the different approaches is relevant in
practical situations. As the small survey with the medical tests
shows, such awareness is not common, not even among academic
professionals working in this area.

This article synthesises an earlier line of work of the author on
probabilistic
updating~\cite{Jacobs19c,Jacobs21c,Jacobs23c,Jacobs23b,JacobsS23b},
concentrating on the approaches of Jeffrey and Pearl. These two
approaches occur in the literature, but are not always identified as
such. For instance, in~\cite{JacobsS23b} it is shown that Jeffrey's
update rule is at the heart of the Expectation Maximisation and Latent
Dirichlet Allocation algorithms.  

In contrast to the setting of this earlier work, the current paper
puts the emphasis on multiple pieces of evidence as the key
distinguishing element between Jeffrey and Pearl. At first, notably
in~\cite{Jacobs19c,Jacobs21c}, the difference between Jeffrey and
Pearl involved treating evidence as distribution or as predicate. The
current approach combines these perspectives by seeing evidence as a
multiset of predicates (or factors), which can be turned into a
distribution via frequentist learning $\flrn$. The difference between
the validity / update rules of Jeffrey and Pearl lies in the
assumption of independence (or not). Also, in earlier work the
difference between Jeffrey and Pearl was described in the setting with
a generative model, given by both a channel and a distribution on its
domain. Here, in contrast, the different forms of updating are defined
in a simpler setting with only a prior distribution and with a
multiset of evidence. The more complex generative situation with a
channel is treated here as a special case.

A point on the horizon is a symbolic logic for probabilistic reasoning
that incorporates the various forms of probabilistic updating, with
appropriate derivation rules.

\subsection*{Acknowledgements}

The author wishes to thank Dario Stein, Sean Tull and Toby St Clere
Smithe for helpful discussions on the topic of this paper.


\bibliographystyle{plainurl}


\appendix

\section{Appendix, with the proof of Theorem~\ref{UpdateGainThm}}

We first describe an auxiliary result which we shall call the
sum-increase lemma. It is a special (discrete) case of a more general
result~\cite[Thm.~2.1]{BaumPSW70}. It describes how to find increases
for sum expressions in general.

\begin{lemma}
\label{SumIncreaseLem}
Let $X,Y$ be finite sets, with a function $F\colon X \times Y
\rightarrow \nnR$. For each element $x\in X$, write $F_{1}(x)
\coloneqq \sum_{y\in Y} F(x,y)$ for the sum that we wish to
increase. Assume that there is an $x'\in X$ with:
\[ \begin{array}{rclcrcl}
x'
& \in &
\underset{z}{\argmax}\; G(x,z)
\qquad\mbox{where}\qquad
G(x,z)
& \coloneqq &
\displaystyle\sum_{y\in Y}\, F(x,y)\cdot \ln\big(F(z,y)\big).
\end{array} \]

\noindent Then $F_{1}(x') \geq F_{1}(x)$. 
\end{lemma}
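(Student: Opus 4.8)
The plan is to run a standard EM / Baum--Welch style argument: the $\argmax$ hypothesis gives a one-sided comparison of the auxiliary quantity $G(x,-)$, and Jensen's inequality then transfers this into the desired comparison of the sums $F_{1}$. Concretely, I would first dispose of the degenerate case $F_{1}(x)=0$: then $F(x,y)=0$ for all $y$, and the conclusion $F_{1}(x')\geq 0=F_{1}(x)$ is immediate. So assume $F_{1}(x)>0$ and write $S\coloneqq\supp\big(F(x,-)\big)=\setin{y}{Y}{F(x,y)>0}$.

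From $x'\in\argmax_{z}G(x,z)$ we obtain $G(x,x')\geq G(x,x)$, that is, $\sum_{y} F(x,y)\cdot\ln\big(F(x',y)\big)\geq\sum_{y}F(x,y)\cdot\ln\big(F(x,y)\big)$, where the convention $r\cdot\ln(r)=0$ for $r=0$ lets me restrict both sums to $y\in S$. Rearranging gives $\sum_{y\in S}F(x,y)\cdot\ln\!\big(F(x',y)/F(x,y)\big)\geq 0$. Dividing by $F_{1}(x)$ and abbreviating the normalised weights $w_{y}\coloneqq F(x,y)/F_{1}(x)$ (so $w_{y}>0$ on $S$ and $\sum_{y\in S}w_{y}=1$) turns this into the statement that a convex combination of logarithms is nonnegative.

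Next I would apply Jensen's inequality for the concave function $\ln$: $0\leq\sum_{y\in S}w_{y}\cdot\ln\!\big(F(x',y)/F(x,y)\big)\leq\ln\!\big(\sum_{y\in S}w_{y}\cdot F(x',y)/F(x,y)\big)$. The whole point of the weights is that the inner sum simplifies by cancellation, $\sum_{y\in S}w_{y}\cdot F(x',y)/F(x,y)=\frac{1}{F_{1}(x)}\sum_{y\in S}F(x',y)$, so the inequality reads $\sum_{y\in S}F(x',y)\geq F_{1}(x)$. Finally, since $F$ is nonnegative, $F_{1}(x')=\sum_{y\in Y}F(x',y)\geq\sum_{y\in S}F(x',y)\geq F_{1}(x)$, which is the claim.

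The main obstacle is bookkeeping around the logarithm convention and the support $S$, not any deep inequality. The division by $F(x,y)$ in the Jensen step is only legitimate on $S$, which is exactly why the weights vanish off $S$; and one must check the ratios are finite, i.e.\ that $F(x',y)>0$ for every $y\in S$. This is automatic: $G(x,x)=\sum_{y\in S}F(x,y)\ln F(x,y)$ is finite, so $G(x,x')\geq G(x,x)>-\infty$ forces $F(x',y)>0$ on $S$. The only other subtlety is that restricting to $S$ discards the terms $y\notin S$ on the $x'$ side, but those are nonnegative, so dropping them merely weakens the lower bound on $F_{1}(x')$ --- which is harmless, and is precisely what the final step exploits.
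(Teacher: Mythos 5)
Your proof is correct and is essentially the paper's own argument: both hinge on Jensen's inequality for $\ln$ applied to the convex combination with weights $F(x,y)/F_{1}(x)$ of the ratios $F(x',y)/F(x,y)$, which identifies the resulting expression with $\big(G(x,x')-G(x,x)\big)/F_{1}(x)\geq 0$; you merely run the chain of inequalities from the hypothesis toward the conclusion rather than bounding $\ln\big(F_{1}(x')/F_{1}(x)\big)$ from below. Your additional bookkeeping (the case $F_{1}(x)=0$, restriction to the support $S$, and the observation that finiteness of $G(x,x)$ forces $F(x',y)>0$ on $S$) is a welcome tightening of details the paper leaves implicit, but it does not change the route.
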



\begin{proof}
Let $x'$ be an element where $G(x,-) \colon Y \rightarrow \nnR$ takes
its maximum. This $x'$ satisfies $F_{1}(x') \geq F_{1}(x)$, since:
\[ \begin{array}[b]{rcl}
\displaystyle\ln\left(\frac{F_{1}(x')}{F_{1}(x)}\right)
& = &
\displaystyle\ln\left(\displaystyle\sum_{y\in Y}\,
   \frac{F(x',y)}{F_{1}(x)}\right)
\\[+1.4em]
& = &
\displaystyle\ln\left(\displaystyle\sum_{y\in Y}\, 
   \frac{F(x,y)}{F_{1}(x)} \cdot \frac{F(x',y)}{F(x,y)}\right)
\\[+1.4em]
& \geq &
\displaystyle\sum_{y\in Y}\,
   \frac{F(x,y)}{F_{1}(x)} \cdot \ln\left(\frac{F(x',y)}{F(x,y)}\right)
   \qquad \mbox{by Jensen's inequality}
\\[+1.2em]
& = &
\displaystyle\frac{1}{F_{1}(x)} \cdot \sum_{y\in Y}\,
   F(x,y) \cdot \Big(\ln\big(F(x',y)\big) - \ln\big(F(x,y)\big)\Big)
\\[+0.9em]
& = &
\displaystyle\frac{1}{F_{1}(x)} \cdot \Big(G(x,x') - G(x,x)\Big)
\hspace*{\arraycolsep}\geq\hspace*{\arraycolsep}
0.
\end{array} \eqno{\QEDbox} \]
\end{proof}

We use this result to prove the two items of Theorem~\ref{UpdateGainThm}
separately.

\begin{proof}[of Theorem~\ref{UpdateGainThm}~\eqref{UpdateGainThmSingle}]
We start from a validity $\omega\models p$, where $\omega\in\Dst(X)$
and $p\in\Fact(X)$, which we would like to increase, by changing the
distribution $\omega$, while keeping $p$ fixed. We apply
Lemma~\ref{SumIncreaseLem} with $F \colon \Dst(X) \times X \rightarrow
\nnR$ given by $F(\omega, x) = \omega(x) \cdot p(x)$. Then
$F_{1}(\omega) = \sum_{x} F(\omega, x) = \sum_{x} \omega(x) \cdot p(x)
= \omega \models p$.

We have $G(\omega, \omega') = \sum_{x} \omega(x) \cdot p(x) \cdot
\ln\big(\omega'(x) \cdot p(x)\big)$ and we wish to find a maximum of
$G(\omega, -) \colon \Dst(X) \rightarrow \nnR$. Let $\supp(\omega) =
\{x_{1}, \ldots, x_{n}\}$; we will use variables $v_i$ for the values
$\omega'(x_{i})$ that we wish to find, by seeing $G(\omega,-)$ as a
function $\R^{n} \rightarrow \nnR$ with constraints.

These constraints are handled via the Lagrange multiplier method for
finding the maximum (see \textit{e.g.}~\cite[\S2.2]{Bishop06}). We
keep $\omega$ fixed and consider a new function $H$, also known as the
Lagrangian, with an additional parameter $\kappa$.
\[ \begin{array}{rcl}
H(\vec{v},\kappa)
& \coloneqq &
G(\omega,\vec{v}) - \kappa\cdot \big((\sum_{i}v_{i}) - 1\big)
\\
& = &
\sum_{i} \omega(x_{i}) \cdot p(x_{i}) \cdot \ln\big(v_{i} \cdot p(x_{i})\big)
   - \kappa\cdot \big((\sum_{i}v_{i}) - 1\big).
\end{array} \]

\noindent The partial derivatives of $H$ are:
\[ \begin{array}{rclcrcl}
\displaystyle\frac{\partial H}{\partial v_{i}}(\vec{v},\kappa)
& = &
\displaystyle \frac{\omega(x_{i}) \cdot p(x_{i})}{v_{i} \cdot p(x_{i})}\cdot
   p(v_{i}) - \kappa
\hspace*{\arraycolsep}=\hspace*{\arraycolsep}
\displaystyle \frac{\omega(x_{i}) \cdot p(x_{i})}{v_{i}} - \kappa
\\[+0.9em]
\displaystyle\frac{\partial H}{\partial \kappa}(\vec{v},\kappa)
& = &
1 - \sum_{i}v_{i}.
\end{array} \]

\noindent Setting all these derivatives to zero yields:
\[ \begin{array}{rcccccl}
1
& = &
\sum_{i}v_{i}
& = &
\sum_{i} \displaystyle\frac{\omega(x_{i})\cdot p(x_{i})}{\kappa}
& = &
\displaystyle\frac{\omega\models p}{\kappa}.
\end{array} \]

\noindent Hence $\kappa \,=\, \omega\models p$ and thus:
\[ \begin{array}{rcccccl}
v_{i}
& = &
\displaystyle\frac{\omega(x_{i})\cdot p(x_{i})}{\kappa}
& = &
\displaystyle\frac{\omega(x_{i})\cdot p(x_{i})}{\omega\models p}
& = &
\omega\update{p}(x_{i}).
\end{array} \]

\noindent Hence, Lemma~\ref{SumIncreaseLem} gives as `better'
distribution the updated version $\omega\update{p}$ of $\omega$, as in
Theorem~\ref{UpdateGainThm}~\eqref{UpdateGainThmSingle}, satisfying
$\big(\omega\update{p} \models p\big) \geq \big(\omega \models
p\big)$. \QED
\end{proof}

\begin{proof}[of Theorem~\ref{UpdateGainThm}~\eqref{UpdateGainThmSeq}]
Let $\omega\in\Dst(X)$ be distribution on a set $X$ and let $p_{1},
\ldots, p_{n}$ be factors on $X$, all with non-zero validity
$\omega\models p_{i}$. The claim is that the distribution $\omega' =
\sum_{i} \frac{1}{n}\cdot\omega\update{p_i}$ satisfies:
\begin{equation}
\label{MultipliedValidity}
\begin{array}{rcl}
{\displaystyle\prod}_{i}\, (\omega' \models p_{i})
& \,\geq\, &
{\displaystyle\prod}_{i}\, (\omega \models p_{i}).
\end{array}
\end{equation}

\noindent We shall prove~\eqref{MultipliedValidity} for $n=2$. The
generalisation to arbitrary $n$ works similarly, but involves much
more book-keeping of additional variables.

We use Lemma~\ref{SumIncreaseLem} with function $F \colon \Dst(X)
\times X \times X \rightarrow \nnR$ given by:
\[ \begin{array}{rcl}
F(\omega, x, y)
& \coloneqq &
\omega(x) \cdot p_{1}(x) \cdot \omega(y) \cdot p_{2}(y).
\end{array} \]

\noindent Then by distributivity of multiplication over addition:
\[ \begin{array}{rcccl}
\sum_{x,y} F(\omega, x, y)
& = &
\big(\sum_{x}\omega(x) \cdot p_{1}(x)\big) \cdot 
   \big(\sum_{y}\omega(y) \cdot p_{2}(y)\big)
& = &
(\omega\models p_{1})\cdot(\omega\models p_{2}).
\end{array} \]

\noindent Let $\supp(\omega) = \{x_{1}, \ldots, x_{n}\}$ and let the
function $H$ be given by:
\[ \begin{array}{rcl}
H(\vec{v}, \lambda)
& \coloneqq &
\sum_{i,j}\, F(\omega, x_{i}, x_{j})\cdot 
   \ln\big(v_{i}\cdot p_{1}(x_{i}) \cdot v_{j} \cdot p_{2}(x_{j})\big) 
   \;-\; \lambda\cdot \big((\sum_{i}v_{i})-1\big).
\end{array} \]

\noindent Then:
\[ \begin{array}{rclcrcl}
\displaystyle\frac{\partial H}{\partial v_{k}}(\vec{v}, \lambda)
& = &
{\displaystyle\sum}_{i}\, \displaystyle
   \frac{F(\omega, x_{k}, x_{i}) + F(\omega, x_{i}, x_{k})}{v_k} \;-\; \lambda
& \quad &
\displaystyle\frac{\partial H}{\partial \lambda}(\vec{v}, \lambda)
& = &
1 - \sum_{i}v_{i}.
\end{array} \]

\auxproof{
The situation is a bit subtle and does not immediately generalise
beyond $n=2$, but I have checked the same argument for
$n=3$. Explicitly, in this case:
\[ \begin{array}{rcl}
\displaystyle\frac{\partial H}{\partial v_{k}}(\vec{v}, \lambda)
& = &
\displaystyle\sum_{i\neq k}\, 
   F(\omega, x_{k}, x_{i}) \cdot
   \frac{p_{1}(x_{k}) \cdot v_{i} \cdot p_{2}(x_{i})}
   {v_{k}\cdot p_{1}(x_{k}) \cdot v_{i} \cdot p_{2}(x_{i})} 
\\
& & \qquad +\; \displaystyle\sum_{i\neq k}\, 
   F(\omega, x_{i}, x_{k}) \cdot
   \frac{v_{i} \cdot p_{1}(x_{i}) \cdot v_{k} \cdot p_{2}(x_{k})}
   {v_{i}\cdot p_{1}(x_{i}) \cdot v_{k} \cdot p_{2}(x_{k})} 
\\
& & \qquad +\; \displaystyle
   F(\omega, x_{i}, x_{i}) \cdot
   \frac{2v_{i} \cdot p_{1}(x_{i}) \cdot p_{2}(x_{k})}
   {v_{i}^{2} \cdot p_{1}(x_{i}) \cdot p_{2}(x_{i})} \;-\; \lambda
\\[+1em]
& = &
\displaystyle\sum_{i\neq k}\, \frac{F(\omega, x_{k}, x_{i})}{v_k} +
   \sum_{i\neq k}\, \frac{F(\omega, x_{i}, x_{k})}{v_k} +
   \frac{2F(\omega, x_{i}, x_{i})}{v_i} \;-\; \lambda
\\[+1em]
& = &
{\displaystyle\sum}_{i}\, \displaystyle
   \frac{F(\omega, x_{k}, x_{i}) + F(\omega, x_{i}, x_{k})}{v_k} \;-\; \lambda.
\end{array} \]
}

\noindent Setting these to zero gives:
\[ \begin{array}{rcccccl}
1
& = &
\sum_{k}v_{k}
& = &
\displaystyle
   \frac{\sum_{k,i} F(\omega, x_{k}, x_{i}) + F(\omega, x_{i}, x_{k})}{\lambda}
& = &
\displaystyle
   \frac{2\cdot(\omega\models p_{1})\cdot(\omega\models p_{2})}{\lambda}.
\end{array} \]

\noindent Hence $\lambda = 2\cdot(\omega\models p_{1})\cdot(\omega\models
p_{2})$ so that:
\[ \begin{array}[b]{rcl}
v_{k}
& = &
\displaystyle
   \frac{\sum_{i} F(\omega, x_{k}, x_{i}) + F(\omega, x_{i}, x_{k})}{\lambda}
\\[+0.8em]
& = &
\frac{1}{2}\cdot \displaystyle 
   \frac{\omega(x_{k})\cdot p_{1}(x_{k}) \cdot (\omega\models p_{2})}
        {(\omega\models p_{1})\cdot(\omega\models p_{2})} +
\textstyle\frac{1}{2}\cdot \displaystyle 
   \frac{(\omega\models p_{1}) \cdot \omega(x_{k})\cdot p_{2}(x_{k})}
        {(\omega\models p_{1})\cdot(\omega\models p_{2})}
\\[+0.8em]
& = &
\frac{1}{2}\cdot \displaystyle 
   \frac{\omega(x_{k})\cdot p_{1}(x_{k})}{\omega\models p_{1}} +
\textstyle\frac{1}{2}\cdot \displaystyle 
   \frac{\omega(x_{k})\cdot p_{2}(x_{k})}{\omega\models p_{2}}
\\[+0.8em]
& = &
\frac{1}{2}\cdot \omega\update{p_1}(x_{k}) + \frac{1}{2}\cdot \omega\update{p_2}(x_{k}).
\end{array} \eqno{\QEDbox} \]
\end{proof}

\end{document}

\begin{verbatim}


Twan van Laarhoven, 13/2

1. In reality, the test outcomes are very likely not independent. But
 if we assume they are, then the number of positive tests has a
 binomial distribution, and the probability of two positive tests is

   P(three-test outcome|disease)  = bincoeff(3,2) * 0.9^2 * (1-0.9)^1 =
0.243
   P(three-test outcome|!disease) = bincoeff(3,2) * 0.4^2 * (1-0.4)^1 =
0.288
   P(three-test outcome) = 0.05 * P(three-test outcome|disease) +
(1-0.05) * P(three-test outcome|!disease) = 0.2858
   (at first I forgot about the binomial coefficient)

2. Again, assuming independence P(disease|three-test outcome) =
 P(three-test outcome|disease) * P(disease) / P(three-test outcome) =
 0.042512

3. P(three-test outcome|new prior) = p(disease) * P(three-test
 outcome|disease) + p(!disease) * P(three-test outcome|!disease) =
 0.2861

I used octave to do the calculations.
Twan


=============================================================

Koen Dercksen, 13/2

Hi Bart, responding to your medical test survey.

1. What is the likelihood of three-test outcome (++-)?

I calculated the predicted positive and negative test probability
 (0.425 and 0.575 respectively). I assumed the order of the outcomes
 does not matter, and the outcomes of the test are independent
 observations. So the answer would be (0.425^2 * 0.575) * 6 = ~0.623.

2. Using this three-test outcome as evidence, what is the posterior
 disease probability?

I calculated P(disease | three-test outcome) using Bayes theorem. The
 probability of observing this outcome when disease is positive is
 (0.9^2 * 0.1) * 6 = 0.486. For negative disease, it is (0.4^2 * 0.6)
 * 6 = 0.576. The combined likelihood of observing this test outcome
 given the old prior is (0.05 * 0.486 + 0.95 * 0.576) = 0.5715.

Plugging this into Bayes theorem gives (0.486 + 0.05) / 0.5715 =
 0.0425, a slightly lower posterior probability.

3. ~0.616 with the new values.

-- 
Best regards,

Koen Dercksen
Researcher 


=============================================================


Eric Cator, 13/2

Hi Bart,

Here are my answers:

1) This is a binomial probability: P ( Binom(n=3,p=17/40) = 2) =
 0.31158

2) This is tricky. The note states that the prevalence is 5%, and they
 call this the prior probability. Now, this probability, let’s call it
 $p$, is suppose to be an unknown parameter about which we have prior
 knowledge. This prior knowledge should be a probability measure on
 the parameter space, in this case $[0,1]$. Saying that $p=0.05$
 suggests that this prior measure is in fact a Dirac measure
 concentrated on the value 0.05. This means that no matter how much
 data you collect, the prior will not change, so the posterior will be
 the same, i.e. a Dirac measure on 0.05 (because we are absolutely
 sure that $p=0.05$). What is supposed to be the case is that the
 prior on $[0,1]$ gives information about how sure we are that the
 prevalence is 5%, so think of a beta-distribution that is
 concentrated around 5%. Then we can do a proper update and get a
 non-trivial posterior, which in turn will be another
 beta-distribution (so not “the posterior probability is 6%”).

To make this point clearer: ask yourself how we know that the
 prevalence is 5%? Is it because we tested 1 million people? In that
 case, testing 3 more will not change our opinion about the prevalence
 (in this case, the prior distribution is strongly concentrated
 around 5%). If, however, we base that 5% on the testing of 10
 persons, then another 3 will make a difference.

3) Same answer as 1), since the posterior is the same as the prior.

Best,

Eric


=============================================================

Max Hinne, 13/2

Hey Bart,

Leuk en mysterieus 🙂 Ik heb hem met de groep gedeeld.

Mijn antwoorden:

First, I assume the tests are i.i.d. Bernoulli trials. A binomial
 would also work, but would give different numbers in 1) and 3), you'd
 multiply with n choose k.

Second, I assume that in questions 1 and 3 you are interested in the
 marginal​ likelihoods / the prior and posterior predictive
 probabilities; otherwise I can't see how for 1) you could get a
 scalar answer as in your example.


  1.

P(2T, 1~T) = P(2T, 1~T | C)P(C) + P(2T, 1~T | ~C)P(~C) = 0.9**2 * 0.1
 * 0.05 + 0.2**2 * 0.8 * 0.95 = 0.03445

  2.

P(C | 2T, 1~T) = P(2T, 1~T | C)P(C) / P(2T, 1~T) = 0.25218

  3.

P(2T, 1~T | 2T, 1~T) = P(2T, 1~T | C)P(C | 2T, 1~T) + P(2T, 1~T |
 ~C)P(~C | 2T, 1~T) = 0.037761

(excuse the sloppy notation)

Groeten,

Max


=============================================================

Tom Heskes, 13/2


1. 0.28575
2. 0.04252
3. 0.28609


Aannames:

Ik ga ervan uit dat de "two positive, one negative" in willekeurige
 volgorde zijn en niet in een of andere specifieke volgorde. Dit maakt
 voor de updated disease probability niet uit, maar scheelt een factor
 3 voor de kans op de uitkomst. Verder neem ik aan dat de gegeven
 prior probability, sensitiviteit en specificiteit vastliggen en zelf
 geen onzekerheid bevatten, dat alles braaf i.i.d. is en dat alle 6
 testen betrekking hebben op dezelfde persoon in dezelfde toestand.


1. De likelihood/probability/evidence van de uitkomst {+,+,-} volgt
 dan uit:

P({+,+,-} | D=ziek) = 3 x 0.9 x 0.9 x 0.1 = 0.243
P({+,+,-} | D=gezond) = 3 x 0.4 x 0.4 x 0.6 = 0.288

P({+,+,-}) = 0.243 x 0.05 + 0.288 x 0.95 = 0.28575


2. Posterior disease probability volgens Bayes' rule:

P(D=ziek | {+,+,-}) = 0.243 x 0.05 / 0.28575 = 0.04252

Deze is iets lager dan de prior: de kans dat de persoon gezond is
 neemt toe, ondanks twee positieve testen. Door de belabberde
 specificiteit in vergelijking met de redelijke sensitiviteit hakt de
 ene negatieve test er blijkbaar harder in voor een gezond iemand dan
 de twee positieve testen bij een ziek iemand.


3. De evidence voor nog een keer de uitkomst {+,+,-}(2) voor dezelfde
 persoon gegeven de eerste keer {+,+,-}(1) volgt inderdaad door de
 prior van 0.05 te vervangen door de posterior

P({+,+,-}(2) | {+,+,-}(1)) = 0.243 x 0.04252 + 0.288 x 0.95748 = 0.28609

wat net even iets groter is dan P({+,+,-}(1)). Je kunt vast bewijzen
 dat dit altijd zo is, ook met andere keuzes voor de sensitiviteit,
 specificiteit en prior.


Voor de berekeningen heb ik in eerste instantie alleen een
 rekenmachientje gebruikt. ChatGPT ging een eind in de goede richting,
 maar had moeite om de goede getallen in te vullen 🙂. WebPPL deed het
 prima en leek op hetzelfde uit te komen.

Ook nog even naar Pearl's update en Jeffrey's update gekeken. Die
 kreeg ik wel uitgerekend, met WebPPL of op papier, maar ik had geen
 idee hoe ze te interpreteren. D.w.z., het lukte me niet een wereld
 voor te stellen, passend op de gegeven casus, waarin ik de kansen die
 hier uitkomen zou kunnen gebruiken om een weddenschap te gaan winnen.


=============================================================

Marleen Jacobs, 19/2

Ha paps,

Zoals beloofd!

Werk ze XX
 

(0.05*0.9 + 0.95*0.4)^2 * (0.1*0.05 + 0.6*0.95) = 0.094 Posterior
 disease probability = LR * pre-test probability (=prevalence)

0.05 * (0.9/(1 – 0.6)) = 0.1125

(0.1125*0.9 + (1 – 0.1125)*0.4)^2 * (0.1*0.1125 + 0.6*(1 – 0.1125)) =
 0.113

 
=============================================================

Marten ??, 20/2


Vraag 1

Kans op twee positief, een negatief : = 0,452 x 0,452 x 0,6  = 12,6%

 
Vraag 2

Kans op ziekte bij positieve test : = (0,9x0,050) / 0,425 = 10,5%


Vraag 3

Neem de nieuwe prior “ disease probability “ van vraag 2 = 10,5%

0,425 x 0,425 x 0,575 = 10,4%


=============================================================

Marianne Jonker, 23/2


Beste Bart,

Leuk onderzoek en leuke vragen. 

Hierbij mijn antwoorden:

    0.3116
    0.039
    0.3065

Ik heb uitwerkingen. Mocht je interesse hebben, dan hoor ik het graag.

Met vriendelijke groet

Marianne Jonker


=============================================================


Yuliya Shapovalova, 27/2


Hi Bart, 

I am just sending my answers to the problem you distributed. 

1. 0.32.
2. 0.73. 
3. 0.23. 

I used the standard Bayes update rule. While I saw some limitations in
 these calculations, I decided to go for the way I would have
 calculated it if we had not had that little discussion during the
 graduation ceremony. I look forward now to checking the slides you
 sent me and thinking a bit more about different update rules.

Best, 

Yuliya


=============================================================


Mark Smeets, 28/2


Beste Bart,

Hierbij mijn antwoorden op de 3 vragen:

31.15%, aangenomen dat het om 3 onafhankelijke tests gaat

4.25%, aangenomen dat het 3 tests bij dezelfde persoon zijn

30.81%, aangenomen dat het om 3 onafhankelijke tests gaat met allemaal
 dezelfde prior uit vraag 2.


Met vriendelijke groet,
Mark

Mark. J.R. Smeets, M.D. | PhD Candidate

Leiden University Medical Center | Dept. of Clinical Epidemiology  


=============================================================


Maarten Tol, 29/2


Beste Bart,

Marleen heeft me de medical test challenge toegestuurd en ik heb hem
 ook verspreid onder collega's van de afdeling klinisch
 epidemiologie. Ik heb mijn antwoorden toegevoegd aan deze mail, de
 vragen waren lastig!

Ben benieuwd wat de juiste aanpak is en hoop dat je er iets aan hebt.

Met vriendelijke groet, 

Maarten Tol


=============================================================


Roemer Jansen, 29/2


Beste Bart,

Mijn uitwerkingen zijn:

1. Als de tests onafhankelijk zijn: 31.16%

1. Als je de prior tussendoor update: 9.53%

2. 4.25%

3. 30.81%
 

Ik heb gerekend met Julia. Bijgevoegd de uitwerkingen.

Dank voor de leuke challenge!

Roemer

PhD-candidate || Dep. of Clinical Epidemiology

Leiden University Medical Center, the Netherlands


=============================================================

Marnix Suilen, 29/2


Beste Bart,


    381/4000 = 0.09525

    0.0425   (en  0.9575 voor not ill)

    0.0953625


Toelichting: als algemene redenatie heb ik het probleem in een
 partially observable Markov decision process (POMDP) gezet (of
 eigenlijk: een hidden Markov model want er is enkel een 'test'
 actie). De posterior probability volgt dan uit het herhaaldelijk
 toepassen van de standaard belief update regel. Zie de pdf voor de
 berekeningen van 1) en 3) en de python file voor de code die 2)
 berekent met behulp van de efprob library.

Groeten,
Marnix


\end{verbatim}

Comments by Sean Tull, mail 4/4/2024

There were a few places where I felt like an expression was written as
equal to a Pearl or Jeffrey validity, but it seemed to me the scalar
(psi) for the evidence psi was missing. I may have missed something,
but it could be worth checking. In the cases where the proofs are
given, the proofs seemed correct to me, with only this last statement
of writing the expression in terms of the updating being in question.

P. 14, bullet point 2 with evidence (p1, p2), shouldn’t there be a
factor of two?

Lemma 8 point 2, both expressions which replace (omega |= psi)

The step labelled (14) in proof of Lemma 9 (the factor would pull out
as a constant here via the log so the proof would still go through
fine)

Lemma 13 6. 

Lemma 11 is an interesting result. Can it be upgraded to an explicit
equation for omega |=_JK psi in terms of DKL(, -)? This could make
Jeffrey validity even more concrete since KL is widely studied.

Comments on predictive coding aspects. 

One question for me is whether ‘reducing prediction error’ is always
taken to mean ‘reducing a KL divergence’? The latter would be a nice
way to formalise it, but I wonder if the idea of minimising error is
used a bit more loosely in practice, in the predictive coding
community.

On whether mind is Pearlian or Jeffryian page 29: I suppose active
inference would say it is neither, and take an update based on
minimizing VFE as the most fundamental (though with reducing KL as a
supposed benefit)

As mentioned above, it would be nice to talk through (27) and how it
relates to our notion. In particular you say Lemma 24 is in essence in
our paper, but our notion is based on evidence over Y for a state on
X, so it doesn’t seem trivial to see how they relate.

It’s interesting that the FE notion is close to Pearl since we relate
active inference to Pearl updating in our paper, as mentioned above.

Generally I suppose reviewers may ask you to explicitly relate your
formula to some definition of FE e.g. Variational Free Energy.

On the final page you say the FE update is ‘less efficient than
Pearlian update itself’. What exactly is meant by the loss of
efficiency? I suppose doing an exact update is supposed to be
intractable so perhaps this loss is necessary.

I’ve never had a good handle on notions of efficiency/tractability
myself, e.g. for why mathematically it is supposed to be clear that
minimizing VFE is obviously more tractable than conditioning
directly. In practice I think the idea is the brain does not calculate
any update directly any way, but uses a ‘message passing algorithm’ to
approximate FE updating.

Minor comments on the writing / typos (obviously take or leave any of
these as you wish):

P.4 ‘differences may be substantial’ - would be helpful to give or a
reference an explicit example here I think.

P. 11 When I got to Def 4 I wondered: what about evidence from an
ordered sequence of factors, where the order is taken into account?
Should one just carry out n individual updates?

P. 11 I found it hard to get an intuition for the notion and
terminology of ‘match’, is there a practical way to think of it? It
sounds like it should be a test in an effectus like sense (effects
summing to 1) but here only the union of the supports is relevant.

P. 14: The differences between Pearl and Jeffrey validity here are
very striking! Especially if one imagines a medical context.

Remark 7.1: it felt like this was a general point on the approach to
probability which could have gone earlier in the initial setup.

Lemma 8.1 ‘ In particular’ reads funny to me since being <= 1 is
basically the same statement as being in [0,1]. Perhaps ‘That is, ’
instead?

Lemma 9: I think Flrn(-) was only defined on mutlisets, not arbitrary
evidence. I assume Flrn(psi) is supposed to be seen as a distribution
over Fact(X). So I think this should be defined before hand. Also I
think the idea of omega |= - as a function on Fact(X) should be made
explicit, as the formula in Lemma 9 is quite a bit to take in at
first.

P. 24: Typo ‘we have seen the basic of’ -> ‘basics of’.

Ex 20: I found it very nice the way the basic example at the beginning
ended up serving for the channel notion of updates too!

P. 27: How does this notion of Pearl validity along a channel compare
to the value: omega |=_P (c << psi), which seems like another
candidate to me for the definition?

P. 27: Do the notions of Pearl + Jeffrey validity along a channel
coincide for a single piece of evidence? I assume not since they
should be the notions from earlier work. This could be stated
explicitly as it differs from the versions in this paper without a
channel (and ties both forms of updating to genuinely distinct updated
processes even for a single piece of evidence)

P. 31: Typo: out come (8) and than -> remove ‘and’

P. 32: what is ‘K’ here in the last line on the page?

P. 33: Theorem 27 and its proof seem very nice to me!